\DeclareSymbolFont{matha}{OML}{txmi}{m}{it}
\DeclareMathSymbol{v}{\mathord}{matha}{118}
\numberwithin{equation}{section}
\title{Linearised Second Law for Higher Curvature Gravity and Non-Minimally Coupled Vector Fields}
\author{Aron~C.~Wall\footnote{aroncwall@gmail.com} and Zihan~Yan\footnote{zy286@cam.ac.uk}\\~\\
\textit{DAMTP, Centre for Mathematical Sciences, University of Cambridge\\Wilberforce Road, Cambridge, U.K. CB3 0WA}}
\date{\today}
\begin{document}
    \maketitle

    \vspace{20pt}    
    \begin{abstract}
    Expanding the work of \cite{Wall:2015raa}, we show that black holes obey a second law for linear perturbations to bifurcate Killing horizons, in any covariant higher curvature gravity coupled to scalar and vector fields.  The vector fields do not need to be gauged, and (like the scalars) can have arbitrary non-minimal couplings to the metric.  The increasing entropy has a natural expression in covariant phase space language, which makes it manifestly invariant under JKM ambiguities.  An explicit entropy formula is given for $f(\text{Riemann})$ gravity coupled to vectors, where at most one derivative acts on each vector.  Besides the previously known curvature terms, there are three extra terms involving differentiating the Lagrangian by the symmetric vector derivative (which therefore vanish for gauge fields).
    \end{abstract}

    \newpage
    \tableofcontents
    \section{Introduction}

    In General Relativity, black holes have an entropy that is proportional to the area of their horizon:
    \begin{equation}\label{BH}
        S_{\text{BH}} = \frac{A}{4 G \hbar}
    \end{equation} 
    (Henceforth we set $\hbar = 1$, even though our analysis will be purely classical.) 
    Although quantum-mechanical arguments are necessary to fix the proportionality constant, in classical General Relativity one can check that the area $A$ is increasing with time in any dynamical process \cite{Hawking:1971tu}.  Leaving aside some technical details, the core reason this is true is as follows: If we define the expansion $\theta$ as the rate of change of area element $\delta A$ in the null direction $v$:
    \begin{equation}
        \theta = \frac{1}{\delta A} \dv{(\delta A)}{v},
    \end{equation}
    then light rays are always focussing by the Raychaudhuri equation:
    \begin{equation}\label{Raych}
        \dv{\theta}{v} \,=\, -\frac{\theta^2}{D-2} - \sigma_{ij}\sigma^{ij} - R_{vv} \:\le\: 0,
    \end{equation}
    assuming that matter obeys the null energy condition so that $R_{vv} = 8\pi GT_{vv} \ge 0$. 

    The LHS of \eqref{Raych} involves a \emph{second} $v$ derivative of the area.  However, for a future horizon at late times, it is possible to impose the boundary condition $\theta|_{v = +\infty} = 0$.  It is therefore possible to integrate \eqref{Raych} along the $v$ direction to show that $\theta \ge 0$, i.e., the area is increasing.  In black hole thermodynamics, this can be interpreted as a manifestation of the Second Law of Thermodynamics.

    The argument above uses the Einstein equation.  But there are good reasons to believe that there should be corrections to the Einstein-Hilbert action due to quantum loop corrections, stringy physics, and/or other sources.  It is therefore an important question whether the Second Law continues to hold, if the action is modified by, e.g., generic higher curvature corrections. For any higher curvature gravity theory, it is possible to construct a formula for increasing black hole entropy $S_\text{inc}$ which increases with time in a certain regime.  So far this has been done generally only for linearised perturbations to Killing horizons \cite{Wall:2015raa,Bhattacharyya:2021jhr} (or at higher order, within the realm of effective field theory where the higher curvature corrections are assumed to have perturbatively small effects \cite{Hollands:2022fkn}).

    The first goal of this paper is to clarify the relationship between this increasing entropy $S_\text{inc}$ and the Noether charge formalism of Wald and Iyer \cite{Wald:1993nt,Iyer:1994ys}.  In particular, the Noether charge entropy --- obtained by integrating the boundary charge ${\bm Q}_{\xi}$ over the horizon, with $\xi$ a local boost generator on the horizon --- is subject to certain ``JKM ambiguities'' in the case of dynamically evolving black holes \cite{Jacobson:1993vj,Iyer:1994ys}.  On the other hand $S_\text{inc}$ is unambiguous at linear order (on a compact horizon), and was constructed in a seemingly unrelated way, by taking the linearly perturbed null-null equation of motion $\delta E_{vv}$ and pulling two $v$-derivatives out of it to obtain $\delta E_{vv} = (\partial_v)^2 \delta S_\text{inc}$. (Throughout the paper, $\delta$ labels the linear perturbation to the background.)

    However, it turns out that that there is a useful covariant relation between the second derivative of $\delta S_\text{inc}$ and a different Noether current associated with the affine null translation vector $k$ on the horizon, namely $(\pounds_k)^2 \delta S_\text{inc} = 2 \pi \int \delta (\pounds_k \bm Q_k - \iota_k \bm \Theta_k)$, where $\bm \Theta_k$ is the pre-symplectic potential, $\iota_k$ is the contraction with $k$, and $\pounds_k$ is the Lie derivative with respect to $k$. This expression is unambiguous because the JKM ambiguity cancels between the two terms. 

    Our second aim is to show that black hole entropy is well-defined in the presence of vector fields $V^a$, even in the absence of any gauge-symmetry for the vector field.  Although our general analysis is valid for an arbitrarily number of derivatives, we give an explicit formula $\eqref{vectorS}$ for $S_\text{inc}$, in the special case where the action depends on only one derivative of the vector $V^{a}$.  There are some new terms in $S_\text{inc}$ (beyond the previously known terms that involve differentiating by Riemann) if the action depends on the symmetrised derivative $\nabla_{(a} V_{b)}$.  (These new terms do not contribute for a gauge field $A_a$ where the action only depends on $F_{ab} = 2\nabla_{[a} A_{b]}$, and thus do not appear in the recent work \cite{Biswas:2022grc} on black hole entropy for gauge fields.)
    
    \subsection{The Noether Charge Method and Its Ambiguities}\label{Noether_review}

    In a general gravity theory with higher curvature and/or non-minimal curvature couplings, it is known that the formula for black hole entropy $S$ receives corrections, which are proportional to the coefficients of various curvature terms in the action.  In 1993, Wald \cite{Wald:1993nt} defined an entropy formula (Wald entropy) for \emph{stationary} black holes in general gravitational theories using the covariant phase space formalism, in which the entropy is identified as the \emph{Noether charge} $\bm Q_\xi$ associated with one of the Killing vectors $\xi$ (the one which looks like a boost near the horizon):
    \begin{equation}
        S_{\text{Noether}} = \frac{2 \pi}{\kappa} \int_{\mathcal{B}} \bm Q_\xi \simeq - 2 \pi \int_{\mathcal{B}} \dd[D-2]{x} \sqrt{h}\, \varepsilon_{ab} \varepsilon_{cd} E^{abcd}_R = S_{\text{Wald}},
    \end{equation}
    where the middle equality holds only if we make a particular choice of the JKM ambiguity \cite{Jacobson:1993vj,Iyer:1994ys}.  Here, $\mathcal{B}$ is the bifurcation surface with intrinsic metric $h_{ab}$, and $\varepsilon_{ab}$ is the binormal to $\mathcal{B}$, and 
    \begin{equation}
        E^{abcd}_R = \pdv{L}{R_{abcd}} - \nabla_{e_1} \pdv{L}{(\nabla_{e_1} R_{abcd})}+ \cdots + (-1)^p \nabla_{(e_1}\cdots \nabla_{e_p)} \pdv{L}{(\nabla_{(e_1} \cdots \nabla_{e_p)} R_{abcd})}.
    \end{equation} 
    In the case of General Relativity, $S_\text{Wald}$ reduces to the area \eqref{BH}.
 
    Regardless of how we fix the ambiguity, any Noether charge entropy readily satisfies a \emph{First Law} for linear perturbations to a Killing horizon:
    \begin{equation}
        \frac{\kappa}{2 \pi}\delta S_{\text{Noether}} = \delta M - \Omega_H \delta J - \Phi_H \delta Q
    \end{equation}
    where $\kappa$ is the surface gravity, $M$ is the mass, $J$ is the angular momentum and $Q$ is the charge of the black hole \cite{Wald:1993nt}.

    Unfortunately, for dynamically evolving black holes, the Noether charge formulation is subject to ambiguities, and $S_\text{Wald}$ is merely one possible way to fix the ambiguities (and it turns out, not the correct choice to obtain a Second Law in a general theory).  The most important type of JKM ambiguity\footnote{More precisely, JKM identified three different types of ambiguities affecting $\bm Q_\xi$, which will be labelled in section \ref{sec:jkm} as $\bm \alpha, \bm \beta, \bm\gamma$.  But the first ambiguity $\bm \alpha$, which is associated with adding a total derivative to the Lagrangian, does not affect terms in $\bm Q_\xi$ involving derivatives of $\xi$, so it does not matter when we consider a local boost like \eqref{locboost}.  And the third ambiguity $\bm \gamma$ is a total divergence on the horizon, which vanishes whenever the horizon is compact.  Hence, only the second type of JKM ambiguity, the $\bm \beta$ term, will matter to us.  This ambiguity shifts $\bm Q_\xi$ by an arbitrary Lie derivative w.r.t.~$\xi$, and if we restrict to boost-invariant expressions (as required by covariance in the $uv$ plane), that implies that the ambiguity takes the form given by \eqref{JKMproduct}.  In section \ref{sec:jkm} we will adopt a different point of view, in which the entropy is obtained from the horizon null translation vector in a way that is unambiguous.} takes the following form:  Suppose we wish to evaluate $S_\text{Wald}$ on a horizon slice ${\cal C}(v')$ located at some null time $v'$.  By diffeomorphism invariance, the entropy $S$ is required to be symmetric under a local Lorentz boost $\xi'$ of the form:
    \begin{equation}\label{locboost}
        \xi' = (v-v')\partial_v - u \partial_u
    \end{equation}
    pivoting around the slice ${\cal C}(v')$.  Suppose we now calculate the Noether charge at $v = v'$ with respect to this local boost vector.  However, there exist Lorentz invariant terms which are sums of products of the form:
    \begin{equation}\label{JKMproduct}
        S_\text{JKM} = \sum A_{(m)}B_{(-m)},\qquad m > 0,
    \end{equation}
    where $A_{(m)}$ has positive boost weight and $B_{(-m)}$ has negative boost weight.  On a bifurcate Killing spacetime, symmetry and regularity require that $A_{(m)} = 0$ everywhere on the future horizon, while in general $B_{(-m)} \ne 0$ (away from the bifurcation surface $\mathcal{B}$ at $v = 0$).  Hence, even for a linearised perturbation to a Killing horizon, we can have $\delta A_{(m)} \ne 0$ and so these ambiguities start to matter.  Without a method to fix these ambiguities, the analysis of \cite{Iyer:1994ys} is insufficient to determine which of these possible expressions for entropy is correct.  In general, most ways of fixing the ambiguity do not lead to an increasing quantity.

    \subsection{A Linearised Second Law}    

    Recently, significant progress has been made in showing that a Second Law holds quite generally for \emph{linearised} perturbations to Killing horizons, for a completely generic metric-scalar theory whose Lagrangian $L$ can be expanded in an arbitrarily high number of derivatives \cite{Kolekar:2012tq,Sarkar:2013swa,Bhattacharjee:2015yaa, Wall:2015raa}:

    \begin{equation}
        I = \int \dd[D]{x} \sqrt{-g}\, L(g^{ab}, R_{abcd}, \nabla_{e_1} R_{abcd}, \cdots, \nabla_{(e_1} \cdots \nabla_{e_p)} R_{abcd}, \phi, \nabla_{a_1} \phi, \cdots, \nabla_{(a_1} \cdots \nabla_{a_q)}\phi).
    \end{equation}
    Here, $L$ is constructed from all possible contractions of the field contents: the (inverse) metric $g^{ab}$, the Riemann tensor $R_{abcd}$, some scalar field $\phi$ and their symmetrised covariant derivatives.\footnote{The restriction to symmetric derivatives is without loss of generality as non-symmetric covariant derivatives can be rewritten using the Ricci identity.}
    
    In 2015, Wall \cite{Wall:2015raa} showed how to fix the JKM ambiguities for the full Noether charge, at the linear order of perturbation, by using the null-null component of the equation of motion on the horizon.  The increasing entropy $S_\text{inc}$ calculated there satisfies a \emph{linearised Second Law}, i.e., $\partial_v S_{\text{inc}} = 0$ for arbitrary linear perturbations of the gravitational fields, and $\partial_v S_{\text{inc}} \geq 0$ in the presence of an external matter source satisfying the null energy condition. The entropy is defined to linear order by
    \begin{equation}\label{Sinc}
        \partial_v^2 S_{\text{inc}} = - 2 \pi \int_{\mathcal{C}(v)} \dd[D-2]{x} \sqrt{h}\, E_{vv}
    \end{equation}   
    where Gaussian null coordinates $\{u,v,x^i\}$ (for a detailed discussion, see Section \ref{sec:gnc}) were chosen to carry out the calculation. Here, $v$ is the affine null time on the horizon, $u$ affinely parametrises the other null direction, and $\{x^i\}$ are $D-2$ spatial coordinates. The coordinates are chosen such that $u=0$ labels the horizon $\mathcal H$. $\mathcal{C}(v)$ is a time-slice of the horizon at $v$, $\sqrt{h}$ is the codimension-2 volume factor, and $E_{ab}$ is the equation of motion of the (inverse) metric $g^{ab}$. These notations are used throughout this article.
    
    For the stationary background, $E_{vv} = 0$ since it has Killing weight 2 (details shown in Section \ref{sec:gnc}), and similarly $\partial_v S_\text{inc} = 0$. Hence, at zeroth order \eqref{Sinc} is trivially satisfied. At first order, the content of the equation is:
    \begin{equation}\label{pulloff}
        \partial_v^2\, \delta S_{\text{inc}} = - 2 \pi \int_{\mathcal{C}(v)} \dd[D-2]{x} \delta \!\left(\sqrt{h}\, E_{vv} \right).
    \end{equation}
    By an algebraic argument using the gauge choice, \cite{Wall:2015raa} showed that it is always possible to extract two $\partial_v$ derivatives from $\delta(\sqrt{h}\, E_{vv})$ and thus obtain an entropy variation $\delta S_{\text{inc}}$.\footnote{This was subject to a caveat about exactness of the zero-boost term, whose resolution will be described in the next subsection.}
    
   \eqref{pulloff} holds off-shell, but we can also make an on-shell perturbation of order $\epsilon$, $\delta E_{vv} = \delta T_{vv} \sim \mathcal{O}(\epsilon)$ if we turn on some \emph{external} energy-momentum tensor $\delta T_{vv}$ sourcing the perturbation. The linearised second law $\partial_v S_\text{inc} \geq 0$ then holds if we assume null energy condition (NEC) for $\delta T_{vv}$ and that the perturbation settles down at late time so that $\partial_v S_\text{inc}|_{(v=+\infty)} = 0$.

    If we relax the assumption of an external matter source, we have $\delta E_{vv} = 0$ on shell, so the linearised second law actually says the entropy is constant.  Another way to see this is that, at linear order, we can always reverse the sign of perturbation parameter $\epsilon$ so that the entropy is both non-decreasing and non-increasing, hence it is constant.

    In the special case of $f(\text{Riemann})$ theories, with $L = L(g^{ab}, R_{abcd})$, the increasing entropy was calculated more explicitly in \cite{Wall:2015raa} as
    \begin{equation}
        S_{\text{inc}} = - 2 \pi \int_{\mathcal{C}(v)} \dd[D-2]{x} \sqrt{h} \left(4 \pdv{L}{R_{uvuv}} + 16 \pdv{L}{R_{uiuj}}{R_{vkvl}} \bar K_{ij} K_{kl}\right) + \mathcal{O}(\epsilon^2) \label{eq:wall-entropy}
    \end{equation}
    where $\bar K_{ij}$ and $K_{ij}$ are the extrinsic curvatures in $u$ and $v$ directions, respectively.

    The above proposal is proven to be valid for pure gravity and scalar fields (non-minimally) coupled to gravity. The validity for \emph{gauge fields} is argued in \cite{Wall:2015raa}, but was more carefully proven by Biswas et al.~\cite{Biswas:2022grc}. A recent paper \cite{Deo:2023vvb} by Deo et al.~also extended the validity to the case of Chern-Simons theories of gravity.

    \subsection{Role of the Entropy Current}  
    As stated above, \cite{Wall:2015raa} extracted two $\partial_v$ derivatives from $\delta(\sqrt{h}\, E_{vv})$ in order to obtain an entropy variation $^{\boldsymbol{-}}\!\!\!\!\delta S_{\text{inc}}$, where we now put the bar in the delta to indicate that this expression was not yet manifestly an exact variation.  Furthermore, this variation can be decomposed into a JKM ambiguity which \emph{is} automatically an exact variation, plus an Iyer-Wald ``zero-boost weight term'' which was not manifestly exact in the field variation: $^{\boldsymbol{-}}\!\!\!\!\delta S_{\text{inc}} = \delta S_{\text{JKM}} +\, ^{\boldsymbol{-}}\!\!\!\!\delta S_{\text{ZB}}$.  In order to argue that the latter term is exact (and hence that $^{\boldsymbol{-}}\!\!\!\!\delta S_{\text{inc}} = \delta ( S_\text{inc})$ for some geometrical $S_\text{inc}$ defined without reference to the field variation $\delta g_{ab}$), \cite{Wall:2015raa} appealed to the fact that the Iyer-Wald entropy satisfies the First Law, and can thus be substituted in for the $\delta S_{\text{ZB}}$ piece of the entropy. 

    Another related problem is that the above treatment only considered an integrated formula over some horizon slice $\mathcal{C}(v)$ with $v = \text{const}$.\footnote{Although any such slice may be put into this form by means of a suitable gauge choice, in \cite{Wall:2015raa} the gauge-invariance was not manifest.}

    This unsatisfactory situation has been resolved by work uncovering the important role of the entropy current, allowing $\delta S_{\text{ZB}}$ to be explicitly constructed as an exact variation.  To investigate what happens locally, in 2021, Bhattacharyya et al.~\cite{Bhattacharyya:2021jhr}  (see also \cite{Bhattacharya:2019qal} for a previous analysis of four-derivative theories) showed the off-shell structure of the $vv$-component of equation of motion is
    \begin{equation}
        E_{vv} = \frac{1}{\sqrt{h}} \partial_v \left(\sqrt{h}\left(\partial_v\varsigma + D_i J^i\right)\right) + \mathcal{O}(\epsilon^2)
    \end{equation}
    where $\varsigma$ is the \emph{entropy density}, $J^i$ is the \emph{entropy current}, and $D_i$ is the codimension-2 covariant derivative along the horizon slice.  (Note that the current $J^i$ vanishes for General Relativity.)  This need for a spatial boundary term for Gauss-Bonnet gravity had been previously anticipated in \cite{Guedens:2011dy} where --- in the context of a higher curvature equation of state argument --- it was found necessary for the entropy to be defined along $D-2$ dimensional slices of the horizon sharing a $D-3$ dimensional boundary, which effectively prevents entropy from escaping from the horizon region spatially, by means of the entropy current.
    
    This picture was confirmed and derived more rigorously using covariant phase space formalism in a recent paper by Hollands, Kovacs and Reall \cite{Hollands:2022fkn}.  They also showed directly the relationship between $S_\text{inc}$ and $S_\text{Iyer-Wald}$ (reviewed in Appendix \ref{app:zbt}), and they showed the \emph{gauge invariance} to first order of $S_{\text{inc}}$ by changing to different Gaussian null coordinates (gauge invariance was also discussed in \cite{Bhattacharyya:2022njk}). (Also, a generalisation of second law to \emph{second order} was made in the regime of \emph{effective field theories} in their paper, assuming the higher derivative terms are suppressed by powers of some length scale $\ell$ which is small compared to the dynamically relevant length scales, but we will not follow up on that development here.)

    \subsection{Our Higher Curvature / Vector Lagrangian}
    In this article, we consider higher curvature gravity theories non-minimally coupled to vector fields, especially non-gauge vector fields. For example, it might be a non-minimally coupled Proca field.  A general local Lagrangian will take the form:
    \begin{equation}
        L = L(g ^{ab}, R _{abcd}, \nabla_{e_1} R_{abcd}, \cdots, \nabla_{(e_1}\cdots \nabla_{e_p)} R_{abcd}, V_a, \nabla_{b_1} V_a,\cdots, \nabla _{(b_1} \cdots \nabla _{b_q)} V_a)
    \end{equation}
    where it is constructed by all possible contractions of the variables that lead to a covariant scalar. Here, for convenience, we treat $V_a$ as ``fundamental'' rather than $V^a$ to keep track of the contraction using $g ^{ab}$, and we only include the totally symmetrised derivatives on $V_a$ and $R_{abcd}$  since any antisymmetrised second derivative is equivalent to a factor of $R _{abcd}$ by the Ricci identity.  (As we will be working at linear order, we consider this Lagrangian to be valid at arbitrary scales, not an effective field theory.)

    We do not consider vector models which spontaneously break local Lorentz invariance, as for example Einstein-Aether theory \cite{Eling:2004dk}, in which a Lagrange multiplier sets $V^a V_a = -1$.  In such theories, black holes do not have regular bifurcate horizons \cite{Eling:2006ec}, and relatedly the Generalised Second Law does not hold at either the quantum or classical orders \cite{Dubovsky:2006vk,Eling:2007qd}.  As our current work relies essentially on the regularity of the bifurcation surface, such aether models are excluded from the start.

    \subsection{Covariant Phase Space Description of Increasing Entropy}
    In the covariant phase space language, the null-null component of equation of motion $E_{vv}$ is associated with the constraint form $\bm C_k$ with respect to the null translation $k = \partial_v$. We can rephrase the integral of $\delta E_{vv}$ over a time-slice $\mathcal C(v)$ of the horizon by the integration of $\iota_k \bm C_k$ (which is the contraction of $k$ and $\bm C_k$) on $\mathcal C(v)$. On the other hand, the off-shell Noether theorem states that $\bm J_k + \bm C_k = \dd{\bm Q_k}$, where $\bm J_k$ and $\bm Q_k$ are the (off-shell) Noether current and Noether charge for $k$, respectively. Through this equation, the increasing entropy is related (c.f.~section \ref{sec:inc-entp}) to $\bm Q_k$ and the pre-symplectic potential $\bm \Theta$ by 
    \begin{equation}
        \pounds_k \pounds_k \delta S_\text{inc} = - 2\pi \int_{\mathcal C(v)} \dd[D-2]{x}\,\sqrt{h}\,\delta E_{vv} = 2\pi \int_{\mathcal C(v)} \iota_k \delta \bm C_k  = 2\pi \int_{\mathcal C(v)} \delta (\pounds_k \bm Q_k - \iota_k \bm \Theta_k)
    \end{equation}
    where we've used the definition of Noether current $\bm J_k = \bm \Theta_k - \iota_k \bm L$, and we've covariantised the partial derivatives $\partial_v$ to Lie derivatives $\pounds_k$. Under JKM ambiguities, the integrand $\pounds_k \bm Q_k - \iota_k \bm \Theta_k$ is invariant up to an exact differential that integrates to zero for compact $\mathcal C(v)$, as will be shown in Section \ref{sec:jkm}. Hence, our definition of increasing entropy is free of JKM ambiguities.

    \subsection{Inauspicious Terms for the Vector Field}

    Now in the presence of a vector field, we are still tempted to use the method proposed in \cite{Wall:2015raa} to calculate the black hole entropy because such entropy readily satisfies a linearised second law under perturbations of a stationary background. Then it is natural to ask: \emph{does the entropy definition in \cite{Wall:2015raa} safely generalise to the case of non-minimally coupled vector fields?} Recall that in Gaussian null coordinates with $v$ the outgoing null direction, the entropy $S_\text{inc}$ is defined through
    \begin{equation}
        \partial_v^2 S_\text{inc} = - 2\pi \int \dd[D-2]{x} \sqrt{h}\, E_{vv}
    \end{equation}
    on the horizon, where $E_{ab}$ is the equation of motion for $g_{ab}$.  One potential problem is that of the two null indices $v$ of $E_{vv}$, one or both may come from the $V_v$ vector component field rather than the partial derivative $\partial_v$.\footnote{This problem does not arise for a scalar field because it has no indices, and it does not arise for the graviton because of our gauge choice.}  Although the combination $V_v V_v$ does not contribute at the first order of perturbation, terms involving $\partial_v V_v$ might survive even at first order. These would make it seemingly impossible to extract two $\partial_v$'s from $E_{vv}$ altogether. Or, in other words, in the presence of vector fields, we may not be able to obtain a \emph{local} functional expression of $S_\text{inc}$ by integrating $E_{vv}$ with respect to $v$ twice. In mathematical expressions, the off-shell $vv$-component of equation of motion may contain some ``inauspicious terms''\footnote{We thank Diandian Wang for suggesting this name.} in the form of
    \begin{equation}
        \frac{1}{\sqrt{h}}\partial_v \left( \sqrt{h}\, \mathcal{P} \right) \subset \delta E_{vv}\, , \qquad \mathcal{P} = \sum_{I\geq 0} C_I^{i_1 \cdots i_I} D_{(i_1}\cdots D_{i_I)} \delta V_v
    \end{equation}
    where it has only one $v$-derivative available.\footnote{One might try to solve this problem by the Stueckelberg trick, in which the vector field is written as a combination of a gauge field and a scalar: $V_a = A_a - \partial_a \phi$, where the gauge symmetry is $\delta A_a = \partial_a \alpha$, $\delta \phi = \alpha$, so that $V_a$ is gauge-invariant.  Then we could go to a gauge where $A_v = 0$ and pull off an extra $v$ derivative from $\partial_v \phi$.  However, this replaces the previous problem with a new problem of whether the resulting increasing entropy $S_\text{inc}$ is gauge-invariant.  So in this article we will not proceed along these lines.} 

    We will show that, in general, although inauspicious terms involving $\partial_v V_v$ do exist in $E_{vv}$, they do not invalidate the second law, because they sum to an entropy current, which integrates out on a compact horizon time-slice, i.e.,
    \begin{equation}
        \mathcal{P} = D_i J_V^i.
    \end{equation} 

    This will be done as follows: the covariant phase space method shows that the integral of $v\,\delta E_{vv}$ over the \emph{whole} future horizon $\mathcal{H}$, $v\in (-\infty, \infty)$, is zero by assuming compactly supported perturbations:
    \begin{equation}
        \int_{-\infty}^{\infty} \dd{v} \int_{\mathcal{C}(v)} \dd[D-2]{x} \sqrt{h}\, v\, \delta E_{vv} = \int_{\mathcal{H}} \dd{(\delta \bm Q_\xi - \iota_\xi \bm \Theta[\delta]}) = 0
    \end{equation}
    where $\bm Q_\xi$ is the Noether charge of the Killing vector $\xi$ as before, and $\iota_\xi \bm \Theta[\delta]$ is the contraction of $\xi$ with the pre-symplectic potential. Using the structure of $\delta E_{vv}$, namely,
    \begin{equation}
        \delta E_{vv} = \partial_v( \partial_v \varsigma + D_i J^i + \mathcal{P})
    \end{equation}
    (where we've ignored factors of $\sqrt{h}$'s, etc.)~we will show that the integral of the inauspicious terms on the horizon vanishes:
    \begin{equation}
        \int_{-\infty}^{\infty} \dd{v} \int_{\mathcal{C}(v)} \dd[D-2]{x} \sqrt{h}\, \mathcal{P} = 0.
    \end{equation}
    A sufficient --- and as we shall show necessary --- condition for this integral to vanish is if the inauspicious term is a total derivative: $\mathcal{P} = D_i J^i_V$.  But this ensures that we can always pull out a second derivative in order to obtain the entropy $S_\text{inc}$.
    
    One might worry that the assumption of compactly supported perturbations in this calculation is too strong, in the more general case of a noncompact horizon.  But it can be shown that the structure of the coefficients $C^{i_1 \cdots i_I}_I$ in $\mathcal{P}$ depend only on the choice of theory\footnote{At linear order, these coefficients consist of background quantities, i.e., they are unperturbed.}, not the specific perturbation; hence, this total divergence structure continues to hold for an arbitrary perturbation which is not compactly supported in the $\dd v$ direction.\footnote{In the arguments above we also assume that the horizon is spatially compact.  If it is not, and if the perturbation is not, then there may be a flux of entropy coming in from the spatial infinite boundary of the horizon $\partial \cal H$.  But in such situations it is not physically clear that there should be an increasing entropy anyway, since the horizon is no longer a closed system.}

    \subsection{Explicit Entropy Formula}
    
    We also give an explicit calculation for one-derivative vector field non-minimally coupled to $f(\text{Riemann})$ as an example to demonstrate our point. The formula we have obtained for this specific case in Gaussian null coordinates is 
    \begin{equation}\label{vectorS}
        \begin{aligned}
            S_{\text{inc}} & = - 2 \pi \int_{\mathcal{C}(v)} \dd[D-2]{x} \sqrt{h}\, \biggl(4 \pdv{L}{R_{uvuv}} + 16 \pdv{L}{R_{uiuj}}{R_{vkvl}}\bar K_{ij} K_{kl}\\
            & \qquad - 4 \pdv{L}{R_{uiuj}}{(\nabla_v V_v)} \bar K_{ij} V_v - 4 \pdv{L}{(\nabla_u V_u)}{R_{vivj}} V_u K_{ij} + \pdv{L}{(\nabla_u V_u)}{(\nabla_v V_v)} V_u V_v \biggr).
        \end{aligned}
    \end{equation}
    When interpreting this formula, note that the partial derivatives with respect to components of $R_{abcd}$ should be unpacked as
    \begin{equation}
        4 \pdv{R_{uvuv}} \equiv \pdv{R_{uvuv}} - \pdv{R_{uvvu}} + \pdv{R_{vuvu}} - \pdv{R_{vuuv}}
    \end{equation}
    and so on.
    
    The entropy \eqref{vectorS} reduces to the formula for $f(\text{Riemann})$ (equation (\ref{eq:wall-entropy})) when the vector field vanishes, hence, our calculation with vanishing vector fields can be regarded as a clarification of the Appendix A of \cite{Wall:2015raa}. The notations are explained in Section \ref{sec:gnc}. We also explicitly calculated the zero-boost corrections to the Wald formula, and we confirm that they contribute to the entropy current, just as schematically derived in \cite{Bhattacharyya:2021jhr} and \cite{Hollands:2022fkn}.

    \subsection{Example: Generalised Proca field}

    In this article, we are interested in the general picture, where the most general couplings between vector fields and gravity are considered.  It may be that many theories in this class are unphysical for various reasons --- for example, due to negative norm states or instabilities --- but the physical ones are presumably included in this class.  But in order to reassure ourselves that there are at least some potentially physical vector theories to which our results would apply, we can consider a class of vector models which have second order equations of motion and hence are more likely to be physically well-behaved.

    For example, we can consider a theory for a vector field $A_a$ with action
    \begin{equation}
        I = \int \dd[D]{x} \sqrt{-g}\, L
    \end{equation}
    where
    \begin{equation}
        L = \frac{R}{16 \pi G} - \frac{1}{4} F_{ab} F^{ab} - \frac{1}{2} m^2 A_a A^a + \lambda_1 L_\text{GB} + \lambda_2 L_2 + \lambda_3 L_3, \label{eq:nmproca}
    \end{equation}
    with
    \begin{equation}
        \begin{split}
            F_{ab} & = \nabla_a A_b - \nabla_b A_a\\
            L_\text{GB} & = \epsilon^{abef} \epsilon^{cdgh} R_{abcd} R_{efgh}\\
            L_2 & = \frac{1}{2} R A_a A^a + \nabla_a A_b \nabla^b A^a - (\nabla_a A^a)^2\\
            L_3 & = \left( R_{ab} - \frac{1}{2} g_{ab} R \right) \nabla^a A^b
        \end{split}
    \end{equation}
    and three coupling constants $\lambda_1, \lambda_2, \lambda_3$. Here, $\epsilon^{abcd}$ is the rank-4 Levi-Civita tensor, which can be converted to a generalised Kronecker delta. This describes a physical example of a Proca vector field with mass $m$ non-minimally coupled to higher curvature (Einstein-Gauss-Bonnet) gravity. The $A_a$ kinetic term is also corrected at order $\lambda_2$. Note that the $L_{\text{GB}}$ contribution becomes a topological term at $D=4$, but we can also consider $D > 4$.    
    
    The choice above will admit second-order equations by the properties of Lovelock gravity \cite{Lovelock:1971yv} and the properties of generalised Proca theory \cite{Tasinato:2014eka,Heisenberg:2014rta,Heisenberg:2016eld}.  The corrections to Proca in eq.~\eqref{eq:nmproca} are in fact the most general terms possible where the sum of the number of derivatives $\partial_a$ and vectors $A_a$ is 4, and where the equations of motion for $A_a$ and $g_{ab}$ also remain second order in derivatives \cite{Heisenberg:2014rta} --- except that we have omitted the $(A_a A^a)^2$ term because it does not contain any derivatives, and hence will not affect black hole entropy.  ($L_3$ is in fact a total derivative via the Bianchi identity $\nabla^a G_{ab} = 0$, but we keep it since it will give a nontrivial check on our analysis, as the resulting entropy should also be a total derivative on the horizon.  At order 6 there are similar terms multiplying $\nabla^a A^b$ which are not total derivatives.)
    
    Other kinds of non-minimal coupling terms besides those considered above, such as e.g.~
    \begin{equation}
        \lambda_4 R^{abcd} \nabla_a A_c \nabla_b A_d
    \end{equation}
    can result in additional propagating degrees of freedom when $\lambda_4$ is finite, and would thus need to be treated at the level of effective field theory by considering perturbations in $\lambda_4$.

    The inauspicious terms which are potentially problematic arise as the $vv$-component of perturbed equation of motion tensor $\delta E_{vv}$ of this theory contains terms such as
    \begin{equation}
        -\lambda_2 \frac{1}{\sqrt{h}} \partial_v \left( \sqrt{h}\,(A^i\partial_u g_{vi}) \delta A_v \right) \subset \lambda_2 \frac{\delta L_2}{\delta g^{ab}}\subset \delta E_{vv}
    \end{equation}
    when evaluated on the event horizon. However, one does not need to worry about it because all such inauspicious terms add to zero in this specific example, as expected by our general proof.
   
    Using the general entropy formula above, we obtain, in this specific case,
    \begin{equation}
        S_{\text{inc}} = S_{\text{Wald}} + S_{\text{JKM}}
    \end{equation}
    where
    \begin{equation}
        S_{\text{Wald}} = - 2 \pi \int_{\mathcal{C}(v)} \dd[D-2]{x} \sqrt{h} \left( - \frac{1}{8 \pi G} + 8 \lambda_1 \epsilon^{ij}\epsilon^{kl}R_{ijkl} - \lambda_2 A_a A^a + \lambda_3 \nabla_i A^i\right)
    \end{equation}
    and 
    \begin{equation}
        S_{\text{JKM}} = -2 \pi \int_{\mathcal{C}(v)} \dd[D-2]{x}\sqrt{h}\, \left( 32 \lambda_1 \epsilon^{ik} \epsilon^{jl} \bar K_{ij} K_{kl} + 2\lambda_2 A_u A_v - \lambda_3 (K A_u + \bar K A_v) \right)
    \end{equation}
    by identifying $\epsilon^{uvij} \equiv \epsilon^{ij}$, which is the rank-2 Levi-Civita tensor on $\mathcal{C}(v)$.
    
    As said above, $L_3$ is a total derivative term so we should expect that it gives a total derivative contribution to the entropy. This is verified via the identity 
    \begin{equation}
        \nabla_i A^i = D_i A^i + K A_u + \bar K A_v
    \end{equation}
    in Gaussian null coordinates (c.f., Section \ref{sec:gnc}), and the total contribution is just $\lambda_3 D_i A^i$ which can be integrated out on a compact horizon.

    Identifying
    \begin{equation}
        \epsilon^{ij}\epsilon^{kl}(R_{ijkl} + 4 \bar K_{ik} K_{jl}) = 2 R[h] \quad \text{and} \quad - A_a A^a + 2 A_u A_v = - A_i A^i,
    \end{equation}
    on the horizon, where $R[h]$ is the intrinsic Ricci scalar for $h_{ab}$, the total increasing entropy can be written as 
    \begin{equation}
        S_\text{inc} = \frac{\text{Area}[\mathcal{C}(v)]}{4 G} - 2 \pi \int_{\mathcal{C}(v)} \dd[D-2]{x} \sqrt{h}\, \left( 16 \lambda_1 R[h] - \lambda_2 A_i A^i\right).
    \end{equation}
    Hence, the $\lambda_2$ term gives an interesting example of the new vector terms in \eqref{vectorS}, not found by any previous work.  Interestingly, the $\lambda_3$ term does not end up contributing to $S_\text{inc}$ even though it was present in $S_\text{Wald}$.

    \subsection{Plan of Paper}

    The structure of this article will be as follows. In Section \ref{sec:gnc}, we define the Gaussian null coordinates for a stationary black hole background, in order to set up the toolbox for definitions and calculations later.  This section can probably be skimmed by those familiar with the formalism outlined in \cite{Wall:2015raa,Bhattacharyya:2021jhr,Hollands:2022fkn}. In Section \ref{sec:cps}, we introduce the covariant phase space formalism, show how to express the increasing entropy using Noether charge language, and provide a detailed discussion of its invariance under JKM ambiguities. Section \ref{sec:prob-term} demonstrates the validity of the increasing entropy constructed via $\int \delta E_{vv}$ in the case of non-minimally coupled vector fields, using the off-shell structure of the equation of motion and covariant phase space identities. In Section \ref{sec:example1}, we explicitly calculate this entropy for any one-derivative vector field non-minimally coupled to $f(\text{Riemann})$ gravity. The discussion and outlook for future work are given in Section \ref{sec:discussion}.

    \section{Gaussian Null Coordinates} \label{sec:gnc}
    To prepare for our explicit calculation later, we introduce the Gaussian null coordinates. This chart was extensively used in previous analyses \cite{Wall:2015raa,Bhattacharyya:2021jhr,Hollands:2022fkn}.

    \subsection{Gaussian Null Coordinates}

    For a $D$-dimensional spacetime with bifurcate horizon, we can pick one of the horizons (in our case, the future horizon) and construct Gaussian null coordinates (GNC) labelled by $(u,v,x^i)$ with $i = 1,\dots,D-2$, where $u = 0$ marks the horizon picked. In this gauge, the metric can be written as 
	\begin{equation}
        \dd{s^2} = 2 \dd{u} \dd{v} + u^2 F(u,v,x) \dd{v^2} + 2 u \omega_i(u,v,x) \dd{v}\dd{x^i} + h _{ij} (u,v,x) \dd{x^i} \dd{x^j}.
    \end{equation}
    We illustrate the coordinate system in Figure \ref{fig:gnc}.
    
    \begin{figure}[h]
        \centering
        \tikzset{every picture/.style={line width=0.75pt}} 

        \begin{tikzpicture}[x=0.75pt,y=0.75pt,yscale=-1,xscale=1]

        \draw [color={rgb, 255:red, 208; green, 2; blue, 27 }  ,draw opacity=1 ]   (220,50) .. controls (168.85,115.77) and (169.82,134.23) .. (220,200) ;
        \draw [shift={(182.28,119.31)}, rotate = 92.42] [color={rgb, 255:red, 208; green, 2; blue, 27 }  ,draw opacity=1 ][line width=0.75]    (8.74,-2.63) .. controls (5.56,-1.12) and (2.65,-0.24) .. (0,0) .. controls (2.65,0.24) and (5.56,1.12) .. (8.74,2.63)   ;
        \draw [color={rgb, 255:red, 208; green, 2; blue, 27 }  ,draw opacity=1 ]   (91.41,168.59) -- (178.59,81.41) ;
        \draw [shift={(180,80)}, rotate = 135] [color={rgb, 255:red, 208; green, 2; blue, 27 }  ,draw opacity=1 ][line width=0.75]    (10.93,-3.29) .. controls (6.95,-1.4) and (3.31,-0.3) .. (0,0) .. controls (3.31,0.3) and (6.95,1.4) .. (10.93,3.29)   ;
        \draw [shift={(90,170)}, rotate = 315] [color={rgb, 255:red, 208; green, 2; blue, 27 }  ,draw opacity=1 ][line width=0.75]    (10.93,-3.29) .. controls (6.95,-1.4) and (3.31,-0.3) .. (0,0) .. controls (3.31,0.3) and (6.95,1.4) .. (10.93,3.29)   ;
        \draw [color={rgb, 255:red, 208; green, 2; blue, 27 }  ,draw opacity=1 ]   (101.41,91.41) -- (168.59,158.59) ;
        \draw [shift={(168.59,158.59)}, rotate = 45] [color={rgb, 255:red, 208; green, 2; blue, 27 }  ,draw opacity=1 ][line width=0.75]    (10.93,-3.29) .. controls (6.95,-1.4) and (3.31,-0.3) .. (0,0) .. controls (3.31,0.3) and (6.95,1.4) .. (10.93,3.29)   ;
        \draw [shift={(101.41,91.41)}, rotate = 225] [color={rgb, 255:red, 208; green, 2; blue, 27 }  ,draw opacity=1 ][line width=0.75]    (10.93,-3.29) .. controls (6.95,-1.4) and (3.31,-0.3) .. (0,0) .. controls (3.31,0.3) and (6.95,1.4) .. (10.93,3.29)   ;
        \draw [line width=1.5]    (40,220) -- (227.88,32.12) ;
        \draw [shift={(230,30)}, rotate = 135] [color={rgb, 255:red, 0; green, 0; blue, 0 }  ][line width=1.5]    (11.37,-3.42) .. controls (7.23,-1.45) and (3.44,-0.31) .. (0,0) .. controls (3.44,0.31) and (7.23,1.45) .. (11.37,3.42)   ;
        \draw [line width=1.5]    (40,30) -- (227.88,217.88) ;
        \draw [shift={(230,220)}, rotate = 225] [color={rgb, 255:red, 0; green, 0; blue, 0 }  ][line width=1.5]    (11.37,-3.42) .. controls (7.23,-1.45) and (3.44,-0.31) .. (0,0) .. controls (3.44,0.31) and (7.23,1.45) .. (11.37,3.42)   ;
        \draw [color={rgb, 255:red, 74; green, 144; blue, 226 }  ,draw opacity=1 ]   (135,125) ;
        \draw [shift={(135,125)}, rotate = 0] [color={rgb, 255:red, 74; green, 144; blue, 226 }  ,draw opacity=1 ][fill={rgb, 255:red, 74; green, 144; blue, 226 }  ,fill opacity=1 ][line width=0.75]      (0, 0) circle [x radius= 3.35, y radius= 3.35]   ;
        \draw [color={rgb, 255:red, 208; green, 2; blue, 27 }  ,draw opacity=1 ]   (50,50) .. controls (101.52,116.13) and (101.52,134.6) .. (50,200) ;
        \draw [shift={(88.39,130.33)}, rotate = 272.1] [color={rgb, 255:red, 208; green, 2; blue, 27 }  ,draw opacity=1 ][line width=0.75]    (8.74,-2.63) .. controls (5.56,-1.12) and (2.65,-0.24) .. (0,0) .. controls (2.65,0.24) and (5.56,1.12) .. (8.74,2.63)   ;
        \draw [color={rgb, 255:red, 208; green, 2; blue, 27 }  ,draw opacity=1 ]   (60,210) .. controls (126.13,159.21) and (144.6,159.58) .. (210,210) ;
        \draw [shift={(129.58,172.34)}, rotate = 357.03] [color={rgb, 255:red, 208; green, 2; blue, 27 }  ,draw opacity=1 ][line width=0.75]    (8.74,-2.63) .. controls (5.56,-1.12) and (2.65,-0.24) .. (0,0) .. controls (2.65,0.24) and (5.56,1.12) .. (8.74,2.63)   ;
        \draw [color={rgb, 255:red, 208; green, 2; blue, 27 }  ,draw opacity=1 ]   (60,40) .. controls (126.13,91.52) and (144.6,91.52) .. (210,40) ;
        \draw [shift={(140.33,78.39)}, rotate = 177.9] [color={rgb, 255:red, 208; green, 2; blue, 27 }  ,draw opacity=1 ][line width=0.75]    (8.74,-2.63) .. controls (5.56,-1.12) and (2.65,-0.24) .. (0,0) .. controls (2.65,0.24) and (5.56,1.12) .. (8.74,2.63)   ;
        \draw    (60,120) ;
        \draw [shift={(60,120)}, rotate = 0] [color={rgb, 255:red, 0; green, 0; blue, 0 }  ][fill={rgb, 255:red, 0; green, 0; blue, 0 }  ][line width=0.75]      (0, 0) circle [x radius= 3.35, y radius= 3.35]   ;
        \draw [color={rgb, 255:red, 139; green, 87; blue, 42 }  ,draw opacity=1 ]   (153.5,107) ;
        \draw [shift={(153.5,107)}, rotate = 0] [color={rgb, 255:red, 139; green, 87; blue, 42 }  ,draw opacity=1 ][fill={rgb, 255:red, 139; green, 87; blue, 42 }  ,fill opacity=1 ][line width=0.75]      (0, 0) circle [x radius= 3.35, y radius= 3.35]   ;

        \draw (240,25) node    {$v$};
        \draw (241,217.5) node    {$u$};
        \draw (153,87) node    {$\mathcal{H}$};
        \draw (134.5,140.5) node  [color={rgb, 255:red, 74; green, 144; blue, 226 }  ,opacity=1 ]  {$\mathcal{B}$};
        \draw (240,123.5) node  [color={rgb, 255:red, 208; green, 2; blue, 27 }  ,opacity=1 ]  {$\xi =v\partial _{v} -u\partial _{u}$};
        \draw (35.5,120.5) node    {$\left\{x^{i}\right\}$};
        \draw (162.5,123) node  [color={rgb, 255:red, 139; green, 87; blue, 42 }  ,opacity=1 ]  {$\mathcal{C}( v)$};

        \end{tikzpicture}
        \caption{Gaussian null coordinates. Here, $\mathcal{H}$ is the black hole horizon ($u=0$), $\mathcal{B}$ is the bifurcation surface of the stationary background ($u=v=0$), $\mathcal{C}(v)$ is a time-slice of $\mathcal{H}$ at null time $v$. } \label{fig:gnc}
    \end{figure}
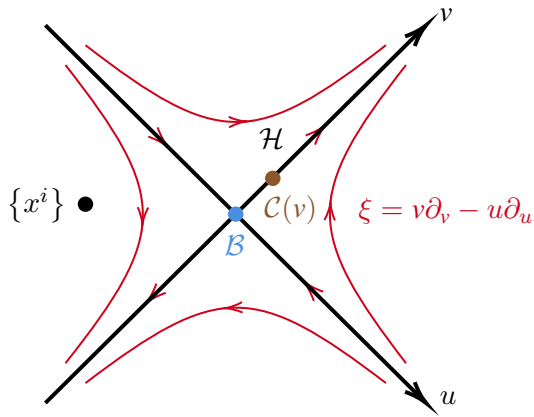
    
    For later convenience, we also calculate the inverse of such metric:
	\begin{equation}
	g ^{ab}=\mqty(u^2(-F + \omega_i \omega^i) & 1 & -u \omega^i\\1 & 0 & 0\\-u \omega^i & 0 & h ^{ij}).
	\end{equation}
    where $h^{ij}$ is the inverse of $h_{ij}$ and $\omega^i = h^{ij} \omega_j$. 

    Note, in this article, we only consider a \emph{fixed} GNC, and only consider the quantities evaluated \emph{on} the horizon. For a more detailed analysis involving coordinate transformations/off-horizon behaviours, the readers can refer to \cite{Bhattacharyya:2022njk,Hollands:2022fkn}.
	
	We consider a $D$-dimensional stationary black hole spacetime equipped with a Killing vector 
    \begin{equation}
        \xi = v \partial_v - u \partial_u
    \end{equation}
    as our background. Then in our calculation, we add dynamical (non-stationary) linear perturbations to our metric while \emph{preserving the gauge choice}. More explicitly, the metric is
    \begin{equation}
        g_{ab} = \bar g_{ab} + \epsilon \delta g_{ab}
    \end{equation}
	where $\bar g_{ab}$ is the \emph{stationary} metric which already satisfies the gauge choice and the stationarity condition $\pounds_\xi \bar g_{ab} = 0$, $\delta g_{ab}$ is the \emph{non-stationary} perturbation (i.e., $\pounds_\xi \delta g_{ab} \neq 0$) preserving $\delta g_{uv} = \delta g_{uu} = \delta g_{ui} = 0$, $\delta g_{vv} = u^2\delta F$, $\delta g_{vi} = u\delta \omega_i$, and $\epsilon \ll 1$ is a small perturbation parameter.
    
    According to the form of the Killing vector in our chosen coordinate system, we can interpret the stationarity condition as the boost invariance under the rescaling $(u,v) \mapsto (au, v/a)$, so the functional dependence on $(u,v)$ of the background and perturbations is 
    \begin{equation}
        \bar g_{ab} (u,v,x^i) = \bar g_{ab} (uv,x^i) \quad \text{but} \quad \delta g_{ab} (u,v,x^i) \neq \delta g_{ab} (uv,x^i),
    \end{equation}
    by arguments in \cite{Bhattacharyya:2021jhr,Hollands:2022fkn}. 
    
    In the following text, we call a tensor field $T$ \emph{stationary} if it satisfies $\pounds_\xi T = 0$. We sometimes use a bar ``\,$\bar{~}$\,'' to denote stationary background quantities.
	
    \subsection{Killing Weight and Perturbation Accounting}
    The stationary GNC is invariant under rescaling $u \mapsto au, v \mapsto v/a$ with constant $a \in \mathbb{R}_+$ (positivity to preserve the time direction). Such rescaling leads to certain scale transformations for components of any tensor field, when resolved in GNC. Consider a general tensor field $T_{a_1 \cdots a_n}$ (constructed out of dynamical fields of the theory), we extract its components by defining the following notation 
    \begin{equation}
        T_{\left\{ \mathfrak{e} \right\}} := T_{a_1 \cdots a_n} \mathfrak{e}_1^{a_1} \cdots \mathfrak{e}_n^{a_n}, \qquad \mathfrak{e}_1, \cdots, \mathfrak{e}_n \in \left\{ \pdv{u}, \pdv{v}, \pdv{x^i} \right\}.
    \end{equation}
    Then, the \emph{Killing weight} or \emph{boost weight} $w$ of some tensor component $T_{\left\{ \mathfrak{e} \right\}}$ is defined by how it transforms under such rescaling: 
    \begin{equation}
        T_{\left\{ \mathfrak{e} \right\}} \mapsto a^{w} T_{\left\{ \mathfrak{e} \right\}}, \quad u \mapsto au, v\mapsto v/a.
    \end{equation}
    In other words, the Killing weight $w$ of $T_{\left\{ \mathfrak{e} \right\}}$ can be expressed as 
    \begin{equation}
        w = \text{number of $\pdv{v}$ in $\left\{ \mathfrak{e} \right\}$} - \text{number of $\pdv{u}$ in $\left\{ \mathfrak{e} \right\}$},
    \end{equation}
    i.e., the number of downstairs $v$-indices minus the number of downstairs $u$-indices on $T_{\left\{ \mathfrak{e} \right\}}$.

    One may wonder why we are interested in such Killing weight. As we will show below, the Killing weight gives a neat accounting for linear perturbations around stationary background. 

    Before we proceed, we state a lemma (proved in \cite{Bhattacharyya:2021jhr}) which will be used later: 
    \begin{lem}\label{lem:Lie}
        In GNC, we can write the component of the $\xi$-Lie derivative of a weight $w$ tensor $T$ with respect to $\{\mathfrak{e}\}$ as 
        \begin{equation}
            \left( \pounds_\xi T \right)_{\left\{ \mathfrak{e} \right\}} = \left( v \partial_v - u \partial_u + w \right) T_{\left\{ \mathfrak{e} \right\}}.
        \end{equation}
    \end{lem}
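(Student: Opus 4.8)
The plan is to work entirely in the coordinate (partial-derivative) form of the Lie derivative, which makes no reference to the Christoffel symbols and therefore sidesteps the complicated off-diagonal GNC metric. For a purely covariant tensor the Lie derivative along $\xi$ reads
\begin{equation}
(\pounds_\xi T)_{a_1 \cdots a_n} = \xi^c \partial_c T_{a_1 \cdots a_n} + \sum_{k=1}^n (\partial_{a_k}\xi^c)\, T_{a_1 \cdots c \cdots a_n}.
\end{equation}
Since the basis vectors $\mathfrak{e}_i \in \{\partial_u,\partial_v,\partial_{x^i}\}$ are coordinate vector fields, their components $\mathfrak{e}_i^{a_i}$ are constants ($0$ or $1$) in the chart, so contracting this identity with $\mathfrak{e}_1^{a_1}\cdots\mathfrak{e}_n^{a_n}$ simply selects the component $T_{\{\mathfrak{e}\}}$ and commutes freely with the partial derivatives.

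Next I would insert the explicit components of the Killing vector, $\xi^v = v$, $\xi^u = -u$, $\xi^i = 0$, which make the gradient $\partial_a \xi^c$ diagonal: the only nonvanishing entries are $\partial_v \xi^v = 1$ and $\partial_u \xi^u = -1$. The transport term then collapses to $\xi^c\partial_c T_{\{\mathfrak{e}\}} = (v\partial_v - u\partial_u)T_{\{\mathfrak{e}\}}$, reproducing the first two terms of the claimed operator.

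The remaining work is to evaluate the inhomogeneous sum slot by slot. For a slot $k$ whose basis vector is $\partial_v$, the factor $\partial_{a_k}\xi^c = \partial_v \xi^c$ is nonzero only for $c = v$ and equals $+1$, so that slot contributes $+T_{\{\mathfrak{e}\}}$; for a $\partial_u$ slot it contributes $-T_{\{\mathfrak{e}\}}$; and for a spatial $\partial_{x^i}$ slot it vanishes, because $\partial_{x^i}\xi^c = 0$. Summing over all slots therefore yields $(\#\,v\text{-indices} - \#\,u\text{-indices})\,T_{\{\mathfrak{e}\}} = w\, T_{\{\mathfrak{e}\}}$ by the definition of the Killing weight, and adding this to the transport term gives exactly $(v\partial_v - u\partial_u + w)T_{\{\mathfrak{e}\}}$.

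I do not anticipate a genuine obstacle here: once the connection-free form of $\pounds_\xi$ is used, the statement reduces to bookkeeping. The only points requiring a little care are to confirm that contracting with the coordinate basis really does commute with $\partial_c$ (it does, because the $\mathfrak{e}_i^{a_i}$ are chart-constant), and to note that the argument extends verbatim to tensors carrying upper indices, with the convention that an upper $v$ (resp.\ $u$) index contributes $-1$ (resp.\ $+1$) to $w$; since the statement as used only involves the downstairs weight, I would present the covariant case and merely remark on the extension.
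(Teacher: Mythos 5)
Your proof is correct. The paper itself does not prove Lemma~\ref{lem:Lie} (it is quoted as established in \cite{Bhattacharyya:2021jhr}), and your argument via the connection-free form of the Lie derivative, with $\partial_a\xi^c$ diagonal so that each downstairs $v$ ($u$) slot contributes $+1$ ($-1$) and spatial slots contribute nothing, is the standard and complete way to establish it; the bookkeeping matches the paper's definition of $w$ as the number of downstairs $v$-indices minus downstairs $u$-indices.
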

    Let's investigate what a Killing symmetry will imply for the values of components of a tensor field $T_{a_1 \cdots a_n}$ which satisfies stationarity condition $\pounds_\xi T = 0$.  We fulfil the discussion by proving the following claim.

    \begin{clm}
        In GNC, assuming regularity at the bifurcation surface, any positive Killing weight component of a stationary tensor vanishes on a Killing horizon. \label{clm:2}
    \end{clm}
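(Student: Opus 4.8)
The plan is to convert the geometric stationarity condition $\pounds_\xi T = 0$ into a first-order differential constraint on each individual GNC component $T_{\{\mathfrak{e}\}}$, and then to integrate that constraint along the horizon subject to regularity at the bifurcation surface $\mathcal{B}$. The crucial input is Lemma \ref{lem:Lie}, which trades the covariant Lie derivative for an explicit differential operator acting on the component, with the boost weight $w$ entering as an additive shift. The weight $w$ is fixed purely by index-counting on $\{\mathfrak{e}\}$, so the argument proceeds component by component.

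Concretely, I would first apply Lemma \ref{lem:Lie} to the stationarity condition. Since $(\pounds_\xi T)_{\{\mathfrak{e}\}} = 0$ for a stationary tensor, the lemma yields the component equation
\begin{equation}
    \left(v\partial_v - u\partial_u + w\right) T_{\{\mathfrak{e}\}} = 0 .
\end{equation}
I would then restrict this to the horizon $\mathcal{H}$ at $u=0$. Because the GNC are regular at $\mathcal{B}$, the components $T_{\{\mathfrak{e}\}}$ and their first derivatives are bounded there, so the term $u\,\partial_u T_{\{\mathfrak{e}\}}$ vanishes at $u=0$. Writing $f(v) := T_{\{\mathfrak{e}\}}(u=0,v,x^i)$, the constraint collapses to the ordinary differential equation $v f'(v) + w f(v) = 0$ along the horizon, whose general solution is $f(v) = C(x^i)\,v^{-w}$.

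The final step is to impose regularity at the bifurcation surface, located at $v=0$ on $\mathcal{H}$. For $w>0$ the factor $v^{-w}$ diverges as $v\to 0$, so finiteness of $f$ at $\mathcal{B}$ forces the integration ``constant'' $C(x^i)$ to vanish identically, giving $f\equiv 0$ on the whole horizon branch. This is exactly the claim. As a consistency check, for $w\le 0$ the solution $v^{-w}$ is regular at $v=0$ and no constraint arises, matching the expectation that negative-weight components such as the $B_{(-m)}$ of \eqref{JKMproduct} need not vanish.

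I expect the only delicate point to be the regularity argument rather than the computation. One must be sure that GNC genuinely form a smooth chart across $\mathcal{B}$, so that boundedness of $\partial_u T_{\{\mathfrak{e}\}}$ at $u=0$ is legitimate and so that $v=0$ is a regular point, not a coordinate artefact; this is what licenses discarding both the $u\partial_u$ term and the singular solution. An equivalent and perhaps more transparent route, valid near $\mathcal{B}$, is to Taylor-expand $T_{\{\mathfrak{e}\}} = \sum_{m,n} c_{mn}(x^i)\,u^m v^n$; the constraint forces $c_{mn}=0$ unless $m-n=w$, so every surviving monomial carries a factor $u^{w}$ and the restriction to $u=0$ vanishes for $w>0$. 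I would nonetheless present the ODE form as primary, since it delivers the result on the entire horizon rather than merely in a neighbourhood of the bifurcation surface.
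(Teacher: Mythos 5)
Your proof is correct and follows essentially the same route as the paper: both apply Lemma \ref{lem:Lie} to turn $\pounds_\xi T=0$ into $(v\partial_v-u\partial_u+w)T_{\{\mathfrak{e}\}}=0$, solve it, and use regularity at $\mathcal{B}$ to discard the $v^{-w}$ mode. The only (cosmetic) difference is ordering --- you restrict to $u=0$ first and solve an ODE in $v$, whereas the paper writes the general power-series solution $\sum_k u^{k+w}v^k C_k(x^i)$ and then sets $u=0$; your Taylor-expansion remark is precisely the paper's version.
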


    \begin{proof}
        Consider a general tensor field $T_{a_1 \cdots a_n}$ which satisfies the stationary condition $\pounds_\xi T = 0$. Take a weight $w$ component $T_{\left\{ \mathfrak{e} \right\}}$ and use Lemma \ref{lem:Lie} to get 
        \begin{equation}
            \left( v \partial_v - u \partial_u + w \right) T_{\left\{ \mathfrak{e} \right\}} = 0.
        \end{equation}
        Assuming everything behaves smoothly near the horizon, we can find the solution of above equation as 
        \begin{equation}
            T_{\left\{ \mathfrak{e} \right\}}(u,v,x^i) = \sum_{k \geq -w} u^{k+w} v^k C_k(x^i)
        \end{equation}
        where $C_k(x^i)$ are functions of the codimension-2 $\{x^i\}$ coordinates only. Evaluating on the horizon $u=0$, we have 
        \begin{equation}
            T_{\left\{ \mathfrak{e} \right\}}(0,v,x^i) = v^{-w} C_{-w}(x^i),
        \end{equation}
        which is divergent at $v=0$ for positive weight $w>0$ unless $C_{-w}(x^i) = 0$. Hence,
        \begin{equation}
            T_{\left\{ \mathfrak{e} \right\}}(0,v,x^i) = 0 \quad \text{for} \quad w>0
        \end{equation}
        where $C_{-w}(x^i)$ must be zero to be compatible with the regularity assumption.
    \end{proof}
    
    Now we consider the scenario where linear perturbations are turned on. Any tensor field $T$ in consideration is now 
    \begin{equation}
        T = \bar T + \epsilon\, \delta T
    \end{equation}
    where $\bar T$ is stationary but $\delta T$ is not. Then, the above claim gives a neat accounting of linear perturbations through the following corollary: 
    \begin{cor}
        In GNC, assuming regularity, any positive Killing weight component of a linearly perturbed stationary tensor is of at least linear order in the perturbation parameter on the horizon $u=0$. We can also express this as 
        \begin{equation}
            T_{\left\{ \mathfrak{e} \right\}}(0,v,x^i) \sim \mathcal{O}(\epsilon) \quad \text{for} \quad w>0.
        \end{equation}
    \end{cor}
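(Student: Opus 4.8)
The plan is to read the corollary off directly from Claim \ref{clm:2}, using the fact that in a fixed GNC the extraction of a component $T_{\{\mathfrak{e}\}}$ is linear in the tensor. First I would expand the component of the perturbed tensor as
\begin{equation}
    T_{\{\mathfrak{e}\}} = \bar{T}_{\{\mathfrak{e}\}} + \epsilon\, \delta T_{\{\mathfrak{e}\}},
\end{equation}
which holds because $T = \bar T + \epsilon\, \delta T$ and because contracting with the fixed coordinate basis vectors $\mathfrak{e}_j \in \{\partial_u, \partial_v, \partial_{x^i}\}$ distributes over the sum.

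Next, since $\bar T$ is stationary, $\pounds_\xi \bar T = 0$, I would apply Claim \ref{clm:2} to the background piece alone. For any component of positive Killing weight $w > 0$ it gives $\bar{T}_{\{\mathfrak{e}\}}(0, v, x^i) = 0$ on the horizon, the vanishing being forced by regularity at the bifurcation surface. The perturbation $\delta T$, by contrast, is \emph{not} stationary, so Claim \ref{clm:2} does not constrain $\delta T_{\{\mathfrak{e}\}}$ and this term generically survives. Evaluating the displayed decomposition at $u = 0$ therefore leaves only the perturbative contribution,
\begin{equation}
    T_{\{\mathfrak{e}\}}(0, v, x^i) = \epsilon\, \delta T_{\{\mathfrak{e}\}}(0, v, x^i) \sim \mathcal{O}(\epsilon),
\end{equation}
which is precisely the assertion.

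There is no substantive obstacle here, as the corollary is essentially a repackaging of Claim \ref{clm:2} once the background/perturbation split is made; the content lies only in observing that the positive-weight components start at first order rather than vanishing outright. The one point deserving care is that the component extraction commutes with the order-by-order expansion in $\epsilon$, which is guaranteed because we work in a single fixed chart so that the basis vectors $\mathfrak{e}_j$ carry no $\epsilon$-dependence. Had the GNC frame itself been perturbed one would have to track the variation of the $\mathfrak{e}_j$ as well, but the gauge is preserved by construction, so this complication does not arise.
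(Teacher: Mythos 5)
Your proposal is correct and follows essentially the same route as the paper, which obtains the corollary by writing $T = \bar T + \epsilon\, \delta T$ and applying Claim \ref{clm:2} to the stationary background piece so that only the $\mathcal{O}(\epsilon)$ contribution survives at $u=0$ for $w>0$. Your added remark that component extraction commutes with the $\epsilon$-expansion because the GNC basis is fixed is a sensible (if implicit in the paper) point of care, not a deviation in method.
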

    This is convenient for our later analysis at linear order in $\epsilon$ as anything that is manifestly a product of positive-weight components is of $\mathcal{O}(\epsilon^2)$ at least. We can ignore such contribution in our calculations in the scope of this text. As one can see, this keeps track of first-order perturbation nicely and it very much simplifies relevant calculations.

    Before we move on, there is one caveat with this corollary that we have to mention: the metric perturbations actually has $\delta g_{vv} = \delta g_{vi} = 0$ on the horizon, this is because the GNC impose further constraints on $\delta g_{ab}$.

    \subsection{GNC Decomposition of Tensor Components}

    Here, we state how we can decompose any component of a tensor $T_{a_1 \cdots a_n}$ that is constructed out of the field contents of the theory (i.e., the inverse metric $g^{ab}$, the Riemann tensor $R_{abcd}$ and the matter field $\phi_{\mathfrak{I}}$ where $\mathfrak{I}$ label the tensor indices) under GNC. This is well-investigated in previous publications \cite{Bhattacharyya:2021jhr,Hollands:2022fkn} so we just quote the results.

    \begin{clm}\label{clm:4}
        Consider a tensor $T_{a_1 \cdots a_n}$ constructed out of $g^{ab}, R_{abcd}$ and $\phi_{\mathfrak{I}}$ and their covariant derivatives. Then, under GNC, any component $T_{\left\{ \mathfrak{e} \right\}}$ of it with respect to $\left\{ \mathfrak{e} \right\}$ can be written as an equal-Killing-weight polynomial constructed from $h^{ij}$-contractions of factors 
        \begin{equation}
            \partial_u^{p} \partial_{v}^q D_{(i_1} \cdots D_{i_r)} \psi, \quad u^N
        \end{equation}
    for $p,q,r,N\in \mathbb{N}_0$, where $\psi\in \{F, \omega_i, h_{ij}, R_{ijkl}[h], \phi_{\mathfrak{I}}\}$.
    \end{clm}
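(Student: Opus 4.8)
The plan is to prove Claim~\ref{clm:4} by \emph{structural induction} on the way $T_{a_1\cdots a_n}$ is assembled from the primitive fields. Every covariant tensor of the theory is built from $g_{ab}$, $g^{ab}$, $R_{abcd}$ and $\phi_{\mathfrak{I}}$ by three operations: tensor multiplication, contraction of index pairs with $g^{ab}$, and covariant differentiation $\nabla_a$. I would therefore call an expression \emph{admissible} if it is an equal-Killing-weight $h^{ij}$-contracted polynomial in factors $\partial_u^p\partial_v^q D_{(i_1}\cdots D_{i_r)}\psi$ and $u^N$ of the kind described in the statement, and then show (i) that the primitive fields have admissible components, and (ii) that the three operations map admissible expressions to admissible expressions. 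Throughout, the Killing-weight accounting is automatic: under the rescaling $u\mapsto au,\ v\mapsto v/a$ defining the weight, the operators carry definite weights ($\partial_u$ lowers by $1$, $\partial_v$ raises by $1$, each $u^N$ raises by $N$, while $D_i$, $h^{ij}$ and the weight-$0$ primitives $F,\omega_i,h_{ij},R_{ijkl}[h]$ preserve it), so any monomial produced by weight-additive operations has the same total weight as the component $T_{\{\mathfrak{e}\}}$ being computed.

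For the base cases, the components of $g_{ab}$ and of its inverse $g^{ab}$ can be read off directly from the GNC line element and the stated inverse, and each entry (e.g.\ $g^{uu}=-u^2F+u^2 h^{ij}\omega_i\omega_j$ and $g^{ui}=-u\,h^{ij}\omega_j$) is manifestly an admissible, equal-weight polynomial. The Riemann components are handled by the Gauss--Codazzi relations: the purely spatial part reduces to the intrinsic $R_{ijkl}[h]$ together with products of extrinsic curvatures, which in GNC are just $\tfrac12\partial_v h_{ij}$ and $\tfrac12\partial_u h_{ij}$, i.e.\ admissible factors, while the mixed components are built from $\partial_u,\partial_v$ derivatives of $h_{ij},\omega_i,F$. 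The matter components $\phi_{\mathfrak{I}}$ are admissible by definition. Closure under multiplication is immediate, and closure under contraction follows because the contracting metric $g^{ab}$ is itself admissible and, crucially, its spatial--spatial block is $g^{ij}=h^{ij}$, precisely the contraction allowed in the statement.

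The crux --- and the step I expect to be the main obstacle --- is closure under the covariant derivative. Writing $\nabla_a T_{b_1\cdots}=\partial_a T_{b_1\cdots}-\sum_k\Gamma^c_{a b_k}T_{\cdots c\cdots}$, I would first verify that \emph{all} GNC Christoffel components are admissible by computing them from $\Gamma^c_{ab}=\tfrac12 g^{cd}(\partial_a g_{db}+\partial_b g_{da}-\partial_d g_{ab})$; for instance one finds $\Gamma^k_{ij}=\Gamma^k_{ij}[h]+\tfrac12 u\,\omega^k\partial_u h_{ij}$, so the Levi-Civita connection of the slice metric $h$ appears \emph{exactly}, accompanied only by admissible corrections. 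This is the mechanism by which a bare spatial derivative $\partial_i$ combines with $\Gamma^k_{ij}[h]$ into the intrinsic covariant derivative $D_i$, while the remaining ``extrinsic'' Christoffels (mixing spatial slots with $u,v$) contribute admissible extrinsic-curvature factors times lower-order components; the $\nabla_u$ and $\nabla_v$ components reduce analogously to $\partial_u,\partial_v$ plus admissible connection terms.

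Two bookkeeping points then need care. First, differentiating an admissible factor can generate $\partial_i$ acting on non-scalars and bare Christoffels; because $T$ is a genuine tensor these non-tensorial pieces must repackage into $D_i$ and mutually cancel, and I would make this explicit by always tracking the spatial tensor obtained after fixing the $u/v$ slots --- since $du,dv$ are invariant under spatial reparametrisations, each such object is a bona fide tensor on the slice. Second, $\partial_u,\partial_v$ do not commute with $D_i$; the commutators $[\partial_u,D_i]$ and $[\partial_v,D_i]$ act through $\partial_u\Gamma[h],\partial_v\Gamma[h]$, which are admissible (being $h^{ij}$-contractions of $\partial_u,\partial_v$-derivatives of $h_{ij}$), so all factors can be reordered into the canonical shape at the cost of admissible lower-order terms. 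Finally, whenever two spatial derivatives fail to be symmetrised I would invoke the spatial Ricci identity, trading the antisymmetric part for a factor of $R_{ijkl}[h]$, which is itself a primitive --- exactly what justifies restricting to the symmetrised $D_{(i_1}\cdots D_{i_r)}$ in the statement. Assembling these, each application of $\nabla_a$ preserves both admissibility and the Killing weight, completing the induction.
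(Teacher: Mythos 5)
The paper does not actually prove Claim \ref{clm:4}: it explicitly says the decomposition ``is well-investigated in previous publications'' and quotes the result from \cite{Bhattacharyya:2021jhr,Hollands:2022fkn}, so there is no in-paper proof to compare against. Your structural induction is, in outline, the standard argument given in those references, and it is sound. The weight bookkeeping is correct ($\partial_v$, $\partial_u$ and $u^N$ carry weights $+1$, $-1$ and $+N$, while $D_i$, $h^{ij}$ and the primitives $F,\omega_i,h_{ij},R_{ijkl}[h]$ are weight $0$); the base cases ($g^{ab}$ read off from the GNC inverse metric, Riemann components via Gauss--Codazzi plus the explicitly computed mixed components, $\phi_{\mathfrak{I}}$ by fiat) are right; and you have correctly located the two places where the real work lives: (i) every GNC Christoffel is admissible and $\Gamma^k_{ij}$ contains the slice connection $\Gamma^k_{ij}[h]$ exactly (your formula $\Gamma^k_{ij}=\Gamma^k_{ij}[h]+\frac{1}{2} u\,\omega^k\partial_u h_{ij}$ checks out against the paper's inverse metric, whose only nonzero off-block entry is $g^{ui}=-u\omega^i$), which is the mechanism assembling bare $\partial_i$'s into $D_i$ acting on genuine slice tensors once the $u,v$ slots are fixed; and (ii) reordering into the canonical shape $\partial_u^p\partial_v^q D_{(i_1}\cdots D_{i_r)}\psi$ costs only admissible commutator terms plus factors of $R_{ijkl}[h]$ from the spatial Ricci identity, which is precisely why the claim may restrict to symmetrised $D$'s. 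The only part left schematic is the explicit verification of the Riemann base case (components such as $R_{uvij}$, $R_{uijk}$ and their $u^N$-suppressed pieces off the horizon), but that is a finite computation rather than a conceptual gap.
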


    This decomposition is very useful when one considers the positive-Killing-weight components on the horizon at linear order of perturbation, and it further simplifies as all terms proportional to $u^N$, $N>0$ vanish. In this case, the GNC and Killing weight are used to account for the structure of the perturbed quantities in the following way:
    \begin{cor}\label{cor:5}
        Consider a tensor $T_{a_1\cdots a_n}$ constructed out of $g^{ab}, R_{abcd}, \phi_{\mathfrak{I}}$ which are stationary. Then, at first order, any component of perturbed $\delta T_{\left\{ \mathfrak{e} \right\}}$ with respect to $\left\{ \mathfrak{e} \right\}$ with Killing weight $w > 0$ can be decomposed on the horizon as 
        \begin{equation}
            \delta T_{\left\{ \mathfrak{e} \right\}} = \sum_I \sum_{k \geq 0} \sum_{m = 0}^{w+k} A^I_{(-k)}  \partial_v^{k+w-m} \delta B^I_{(m)} 
        \end{equation}
        where $A^I_{(-k)}$ are weight-$(-k)$ polynomials constructed out of $\partial_{u}^{p} \partial_{v}^{q} D_{(i_1} \cdots D_{i_r)} \psi_{(p-q-k)}$ and $\delta B^I_{(m)}$ are weight-$m$  perturbed quantities characterised by 
        \begin{equation}
            \delta B^I_{(0)} \in \{\partial_u^p \partial_v^q D_{(i_1}\cdots D_{i_r)} \delta \psi_{(p-q)}\}
        \end{equation}
        and
        \begin{equation}
            \delta B^I_{(m>0)} \in \{\partial_u^p D_{(i_1}\cdots D_{i_r)} \delta \psi_{(m+p)}\}
        \end{equation}
        with $\psi_{(p)}$ a weight-$p$ component in $\left\{ F, \omega_i, h_{ij}, R_{ijkl}[h], \phi_{\mathfrak{I}} \right\}$. Here, $p,q,r \in \mathbb{N}_0$, and $I$ labels different terms with the same properties. This follows from Claims \ref{clm:2} and \ref{clm:4}.
    \end{cor}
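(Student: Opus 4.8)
The plan is to combine Claims \ref{clm:2} and \ref{clm:4} with a linear-order weight-accounting argument whose punchline is that \emph{all} of the positive Killing weight $w$ is forced onto a single perturbed factor, where it can only manifest as $\partial_v$ derivatives. First I would invoke Claim \ref{clm:4} to write the weight-$w$ component $T_{\{\mathfrak e\}}$ as an equal-weight polynomial in the elementary factors $\partial_u^p \partial_v^q D_{(i_1}\cdots D_{i_r)}\psi$ together with powers $u^N$. Restricting to the horizon $u=0$ kills every monomial carrying a factor $u^N$ with $N>0$, leaving a sum of monomials, each a product of factors of the above type whose weights add to $w$ (recall each factor has weight $q-p+\mathrm{wt}(\psi)$, since $\partial_v$ raises the weight by one, $\partial_u$ lowers it by one, and $D_i$ preserves it).

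Next I would linearise to first order. By the Leibniz rule, $\delta$ of each monomial is a sum of terms in which exactly one factor is perturbed and all others are frozen at their stationary background values. When the perturbed factor is $\partial_u^p \partial_v^q D^r \psi$, the variation of the intrinsic covariant derivative yields, besides $\partial_u^p \partial_v^q D^r \delta\psi$ evaluated with the background connection, extra pieces in which the connection is varied; but $\delta\Gamma \sim D\delta h$, so these are again products of background factors times a single perturbed factor $D^{r'}\delta\psi$ of the same admissible form. Hence every term in $\delta T_{\{\mathfrak e\}}$ on the horizon is, after re-expansion, a product of \emph{background} elementary factors times exactly one \emph{perturbed} factor $\partial_u^p \partial_v^q D^r \delta\psi$.

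The decisive step is to apply Claim \ref{clm:2} to the background factors. Each frozen factor is a component of a stationary tensor, so if it has positive Killing weight it vanishes on the horizon; only terms in which every background factor has weight $\le 0$ survive. Collecting these into $A^I$ of total weight $-k$ with $k\ge0$, and using that the whole monomial has weight $w$, the lone perturbed factor must carry weight $w+k>0$. I would then extract the $\partial_v$ derivatives: pulling a $\partial_v$ past a $\partial_u$ is free, while pulling it past a $D_i$ generates a commutator $[\partial_v,D_i]\sim(\partial_v\Gamma)$ whose coefficient is a stationary background quantity, merely producing further terms of the same ``background coefficient times lower-derivative perturbed factor'' structure that are reabsorbed into the sum over $I$ and again trimmed by Claim \ref{clm:2}. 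Extracting $\min(q,\,w+k)$ factors of $\partial_v$ lands us in one of two cases: either all $\partial_v$'s come out and the remainder $\delta B^I_{(m)}=\partial_u^p D^r\delta\psi_{(m+p)}$ has strictly positive weight $m=(w+k)-q>0$ with no surviving $\partial_v$; or the weight would first reach zero, in which case we stop at $m=0$ with $\delta B^I_{(0)}=\partial_u^p \partial_v^{q'} D^r \delta\psi_{(p-q')}$ (possibly retaining $\partial_v$'s). In both cases the number of extracted derivatives is $(w+k)-m=k+w-m$, reproducing the prefactor $\partial_v^{k+w-m}$ and the stated characterisation of $\delta B^I_{(m)}$; summing over the monomials $I$, over $k\ge0$, and over $0\le m\le w+k$ then yields the decomposition of Corollary \ref{cor:5}.

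The main obstacle I expect is the bookkeeping around the non-commutativity of $\partial_v$ with the intrinsic derivative $D_i$, together with the connection variation hidden in $\delta(D^r\psi)$: one must verify that every commutator term and every $\delta\Gamma$ term remains within the admissible class of ``background factor times single perturbed factor,'' so that the re-expansion closes and Claim \ref{clm:2} can be reapplied at each stage to discard the positive-weight background pieces. Once this closure is established, the remainder is routine linear-order weight accounting.
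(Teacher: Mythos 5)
Your argument is correct and is essentially the derivation the paper intends: the paper gives no explicit proof of Corollary \ref{cor:5} beyond asserting that it follows from Claims \ref{clm:2} and \ref{clm:4} (deferring details to the cited prior work), and your proposal fills in exactly that route --- decompose via Claim \ref{clm:4}, Leibniz-expand $\delta$ so one factor is perturbed, discard positive-weight background factors via Claim \ref{clm:2}, and strip $\partial_v$'s from the lone perturbed factor until its weight reaches zero or the $\partial_v$'s are exhausted. Your handling of the residual subtleties ($[\partial_v,D_i]$ and $\delta\Gamma$ terms closing within the admissible class and being killed at first order by Claim \ref{clm:2}) matches the simplifications the paper itself invokes, e.g.\ $[\partial_v,D_i]=0$ and $[\delta,D_i]T_{(w>0)}=0$ at first order.
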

    
    Note that any positive weight $p$ of $\psi_{(p)}$ comes from components of $\phi_{\mathfrak{I}}$ with more $v$ indices than $u$ indices. This suggests that any ``net'' $v$ index on some tensor component comes from either a $\partial_v$ or an extra $v$ index on the matter field $\phi_{\mathfrak{I}}$. In the case of a vector field $V_a$, any ``net'' $v$ comes from either a $\partial_v$ or a $V_v$. We will then use the above results to analyse the structure of the variation of the equation of motion tensor $\delta E_{vv}$ on the horizon. 
    
    A more detailed analysis on GNC decomposition can be found in \cite{Hollands:2022fkn}.
    
	\subsection{Useful Geometrical Quantities in GNC}

    For later convenience, we calculate the connection coefficients and certain components of the covariant derivative in advance.

	\subsubsection*{Connection Coefficients on the Horizon}

	The non-vanishing connection coefficients on the horizon are
	\begin{equation}
	\text{Killing weight 0:}\quad\conn{k}{i}{j}, \conn{v}{v}{i}, \conn{u}{u}{i}, \conn{i}{u}{v}, \qquad \text{K.w.\ $-1$:}\quad \conn{v}{i}{j}, \conn{j}{i}{u}, \qquad \text{K.w.\ 1:} \quad \conn{u}{i}{j}, \conn{i}{j}{v}.
	\end{equation}
	
	We calculate on the horizon:
    \bgroup \allowdisplaybreaks[0]
    \begin{subequations}
	\begin{alignat}{3}
		&\conn{u}{i}{j} = - K _{ij} \sim \mathcal{O}(\epsilon) \qquad
		&&\conn{j}{i}{v} = + K\indices{_i^j} \sim \mathcal{O}(\epsilon) \qquad
		&&\conn{v}{v}{i} = - \frac{1}{2} \omega_i \sim \mathcal{O}(1)\\
		&\conn{v}{i}{j} = - \bar K _{ij} \sim \mathcal{O}(1)
		&&\conn{j}{i}{u} = + \bar K\indices{_i^j} \sim \mathcal{O}(1)
		&&\conn{u}{u}{i} = + \frac{1}{2} \omega_i \sim \mathcal{O}(1)\\
		&\conn{i}{u}{v} = + \frac{1}{2} \omega^i \sim \mathcal{O}(1)
		&&
		&&\conn{i}{j}{k} = \text{Christoffel Symbols for }h _{ij} \sim \mathcal{O}(1)
	\end{alignat}
    \end{subequations}
    \egroup
    and 
	\begin{equation}
		\conn{a}{u}{u} = \conn{a}{v}{v} = \conn{u}{v}{a} = \conn{v}{u}{a} = 0
    \end{equation}
	where $K_{ij} = \frac{1}{2} \partial_v h _{ij}$ and $\bar K_{ij} = \frac{1}{2} \partial_u h _{ij}$ are the extrinsic curvature with respect to $v$ and $u$. We also define the trace: $K = h^{ij} K_{ij}$ and $\bar K = h^{ij} \bar K_{ij}$.

	\subsubsection*{Components of Covariant Derivative on the Horizon}
    \label{sec:table}
    On the horizon, the components of $v$- and $i$-covariant derivatives of a tensor $T$ are expressed in table \ref{tb:covder}.
	\begin{table}[h]
	\begin{tabular}{|c|c|c|}
		\hline
		$\nabla$ & $T^{\cdots a\cdots }$ & $T_{\cdots a \cdots}$ \\
		\hline
		& & \\
		$v$ & $\begin{aligned}
			\nabla_v T^{\cdots v\cdots} & = \partial_v T^{\cdots v\cdots} - \frac{1}{2} \omega_i T^{\cdots i\cdots}\\
			\nabla_v T^{\cdots u\cdots} & = \partial_v T^{\cdots u\cdots} \\
			\nabla_v T^{\cdots i\cdots} & = \partial_v T^{\cdots i\cdots} + \frac{1}{2} \omega^i T^{\cdots u\cdots} + K\indices{^i _{j}}T^{\cdots j\cdots}
		\end{aligned}$ & $\begin{aligned}
			\nabla_v T_{\cdots u\cdots} & = \partial_v T_{\cdots u\cdots} - \frac{1}{2} \omega^i T_{\cdots i\cdots} \\
			\nabla_v T_{\cdots v\cdots} & = \partial_v T_{\cdots v\cdots}\\
			\nabla_v T_{\cdots i\cdots} & = \partial_v T_{\cdots i\cdots} + \frac{1}{2} \omega_i T_{\cdots v\cdots} - K\indices{_i^j}T_{\cdots j\cdots}
		\end{aligned}$\\
		& & \\
		\hline
		& & \\
		$i$ & $\begin{aligned}
			\nabla_i T^{\cdots v\cdots} & = D_i T^{\cdots v\cdots} - \bar K _{ij}T^{\cdots j\cdots} - \frac{1}{2} \omega_i T^{\cdots v\cdots}\\
			\nabla_i T^{\cdots u\cdots} & = D_i T^{\cdots u\cdots} - K _{ij}T^{\cdots j\cdots} + \frac{1}{2} \omega_i T^{\cdots u\cdots}\\
			\nabla_i T^{\cdots j\cdots} & = D_i T^{\cdots j\cdots} + K\indices{_i^j} T^{\cdots v\cdots} + \bar K\indices{_i^j} T^{\cdots u\cdots}
		\end{aligned}$ & $\begin{aligned}
			\nabla_i T_{\cdots u\cdots} & = D_i T_{\cdots u\cdots} - \bar K\indices{_i^j}T_{\cdots j\cdots} - \frac{1}{2} \omega_i T_{\cdots u\cdots}\\
			\nabla_i T_{\cdots v\cdots} & = D_i T_{\cdots v\cdots} - K\indices{_i^j}T_{\cdots j\cdots} + \frac{1}{2} \omega_i T_{\cdots v\cdots}\\
			\nabla_i T_{\cdots j\cdots} & = D_i T_{\cdots j\cdots} + \bar K\indices{_{ij}}T_{\cdots v\cdots} + K\indices{_{ij}} T_{\cdots u\cdots}
		\end{aligned}$ \\
		& & \\
		\hline
	\end{tabular}
    \caption{Components of Tensor Covariant Derivatives in GNC.} \label{tb:covder}
	\end{table}
    Here, $D_i$ is the intrinsic covariant derivative in the codimension-2 directions $x^i$. 

    For the sake of gauge invariance when we consider GNC transformations such as non-constant scaling $v = a(x')v', u = u'/a(x'), x = x'$, the twist $\omega_i$ can be thought as a connection 1-form of the 2-dimensional normal bundle on the codimension-2 space. We can combine it with $D_i$ to define the \emph{gauge covariant derivatives} 
    \begin{equation}
        \mathcal{D}_i = D_i + \frac{1}{2} \omega_i W
    \end{equation}
    where $W$ is the Killing weight operator such that it reads off the weight of any tensor component in GNC, e.g., $W T_{(k)} = k T_{(k)}$. Expressions written in terms of such derivatives should be gauge covariant/invariant. There is a more detailed discussion in \cite{Hollands:2022fkn}.

    
    \section{Covariant Phase Space Description of the Increasing Entropy}\label{sec:cps}

    In this section, we first introduce the covariant phase space formalism as a major toolkit for our investigation; we then reformulate Wall's entropy proposal \cite{Wall:2015raa} in this language, and we discuss the differences and relations between notions of JKM ambiguities appeared in \cite{Jacobson:1993vj} (as differential form ambiguities) and in \cite{Wall:2015raa} (as a collection of negative-positive boost-weight quantities). 
    
    \subsection{Covariant Phase Space Formalism with Vector Fields}
    Start with the Lagrangian form $\bm L = L \bm \epsilon$ where $\bm \epsilon$ is the volume form. The Lagrangian density $L$ is the most general one of interest 
    \begin{equation}
        L = L(g ^{ab}, R _{abcd}, \nabla_{e_1} R_{abcd}, \cdots, \nabla_{(e_1}\cdots \nabla_{e_p)} R_{abcd}, V_a, \nabla_{b_1} V_a,\cdots, \nabla _{(b_1} \cdots \nabla _{b_q)} V_a).
    \end{equation}
    We neglected other types of matter fields for the purpose of this paper, but the covariant phase space formalism for other types of matter can easily be derived. 
    
    The variation of Lagrangian can be written as 
    \begin{equation}
        \delta \bm L = \frac{1}{2} \bm E_{ab} \delta g^{ab} + \bm{\mathcal{E}}^a \delta V_a + \dd{\bm \Theta}[\phi,\delta \phi]
    \end{equation}
    where $\bm E^{ab} = E^{ab} \bm \epsilon$ and $\bm{\mathcal{E}}^a = \mathcal{E}^a \bm \epsilon$ are the equations of motion forms for $g$ and $V$, respectively, $\bm \Theta$ is the pre-symplectic potential, and $\phi\equiv (g,V)$ is a convenient collective notation for the dynamical fields. Note that we have adopted the following convention:
    \begin{equation}
        E_{ab} = \frac{2}{\sqrt{-g}} \frac{\delta I}{\delta g^{ab}}
    \end{equation}
    where $I = \int \bm L$ is the action. 

    In the case where we vary $\bm L$ through a diffeomorphism generated by vector field $\zeta$, we replace $\delta$ with $\pounds_\zeta$ to obtain 
    \begin{equation}
        \pounds_\zeta \bm L = \frac{1}{2} \bm E_{ab} \pounds_\zeta g^{ab} + \bm{\mathcal{E}}^a \pounds_\zeta V_a + \dd{\bm \Theta}_\zeta
    \end{equation}
    by denoting $\bm \Theta_\zeta \equiv \bm \Theta[\phi, \pounds_\zeta \phi]$. We then use the Cartan-Killing equation
    \begin{equation}
        \pounds_\zeta = \iota_\zeta \mathrm{d} + \mathrm{d}\iota_\zeta
    \end{equation}
    where $\iota_\zeta$ is the contraction with $\zeta$ w.r.t.~the first index, and the expressions of Lie derivatives
    \begin{equation}
        \pounds_\zeta g^{ab} = - 2 \nabla^{(a} \zeta^{b)}, \quad \pounds_\zeta V_a = \zeta^b \nabla_b V_a + V_b \nabla_a \zeta^b
    \end{equation}
    to obtain
    \begin{equation}
        \dd{\bm J_\zeta} = \left(E_{ab} \nabla^a \zeta^b - \mathcal{E}^a (\zeta^b \nabla_b V_a + V_b \nabla_a \zeta^b)\right) \bm \epsilon \label{eq:dJ}
    \end{equation}
    by defining the \emph{off-shell Noether current}
    \begin{equation}
        \bm J_\zeta = \bm \Theta_\zeta - \iota_\zeta \bm L.
    \end{equation}

    Taking the Hodge dual of equation \eqref{eq:dJ}, we get 
    \begin{equation}
    \begin{split}
            -\nabla^a (\star J_\zeta)_a & = E_{ab} \nabla^a \zeta^b - \mathcal{E}^a \zeta^b \nabla_b V_a - \mathcal{E}^a V_b \nabla_a \zeta^b\\
            & = \nabla^a (E_{ab} \zeta^b - \mathcal{E}_a V_b \zeta^b) - \zeta^b \nabla^a E_{ab} - \mathcal{E}^a \zeta^b \nabla_b V_a + V_b \zeta^b \nabla_a \mathcal{E}^a + \mathcal{E}^a \zeta^b \nabla_a V_b,
        \end{split}
    \end{equation}
    which can be rearranged to be
    \begin{equation}
        \nabla^a\left((\star J_\zeta)_a + E_{ab} \zeta^b - \mathcal{E}_a V_b \zeta^b\right) = \zeta^b \left(\nabla^a E_{ab} - \mathcal{E}^a F_{ab} - V_b\nabla_a \mathcal{E}^a\right)
    \end{equation}
    where $F_{ab} = \nabla_a V_b - \nabla_b V_a$.

    Integrating over whole spacetime and assuming the variations are compactly supported\footnote{We are free to do so as we are off-shell.}, we have 
    \begin{equation}
        \int \dd[D]{x} \sqrt{-g}\, \zeta^b \left(\nabla^a E_{ab} - \mathcal{E}^a F_{ab} - V_b\nabla_a \mathcal{E}^a\right) = 0,
    \end{equation}
    which holds for arbitrary $\zeta$. This admits a \emph{generalised Bianchi identity}
    \begin{equation}
        \nabla^a E_{ab} = \mathcal{E}^a F_{ab} + V_b\nabla_a \mathcal{E}^a.
    \end{equation}

    Accordingly, we get 
    \begin{equation}
        \nabla^a\left((\star J_\zeta)_a + E_{ab} \zeta^b - \mathcal{E}_a V_b \zeta^b\right) = 0
    \end{equation}
    and we define the \emph{constraint form}, a $(D-1)$-form $\bm C_\zeta$ via its Hodge dual 
    \begin{equation}
        (\star C_\zeta)_a = E_{ab} \zeta^b - \mathcal{E}_a V_b \zeta^b
    \end{equation}
    so 
    \begin{equation}
    \begin{split}
            (C_\zeta)_{a_1 \cdots a_{D-1}} & = (-1)^D (\star \star C_\zeta)_{a_1 \cdots a_{D-1}}\\
            & = (-1)^D \epsilon_{a_1 \cdots a_{D-1} a} \left(E^{ab}\zeta_b - \mathcal{E}^a V^b \zeta_b\right)\\
            & = (-1)^{2D-1} \left(E^{ab}\zeta_b - \mathcal{E}^a V^b \zeta_b\right) \epsilon_{a a_1 \cdots a_{D-1}}\\
            & = \left(- E^{ab}\zeta_b + \mathcal{E}^a V^b \zeta_b\right) \epsilon_{a a_1 \cdots a_{D-1}}.
        \end{split}
    \end{equation}

    Then we have 
    \begin{equation}
        \dd{(\bm J_\zeta + \bm C_\zeta)} = 0,
    \end{equation}
    which, by the Poincar\'e lemma, admits an \emph{off-shell Noether charge} $\bm Q_{\zeta}$, i.e.,
    \begin{equation}
        \bm J_\zeta + \bm C_\zeta = \dd{\bm Q_\zeta}. \label{eq:JplusC}
    \end{equation}

    The variation of this equation will be the key to uncover the covariant phase space description of the increasing entropy, which is shown in the next subsection. 

    Alternatively, we can rewrite the variation of $\bm J_\zeta$ as follows:
    \begin{equation}
    \begin{split}
            \delta \bm J_\zeta & = \delta \bm \Theta_\zeta - \iota_\zeta \delta \bm L\\
            & = \delta \bm \Theta_\zeta - \iota_\zeta \left(\frac{1}{2} \bm E_{ab} \delta g^{ab} + \bm{\mathcal{E}}^a \delta V_a + \dd{\bm \Theta[\phi, \delta \phi]}\right)\\
            & = \delta \bm \Theta_\zeta - \pounds_\zeta \bm \Theta[\phi, \delta \phi] + \dd{\iota_\zeta \bm \Theta[\phi, \delta \phi]} - \iota_\zeta \left(\frac{1}{2} \bm E_{ab} \delta g^{ab} + \bm{\mathcal{E}}^a \delta V_a\right)\\
            & = \bm \omega[\phi; \delta \phi, \pounds_\zeta \phi] + \dd{\iota_\zeta \bm \Theta[\phi, \delta \phi]} - \iota_\zeta \left(\frac{1}{2} \bm E_{ab} \delta g^{ab} + \bm{\mathcal{E}}^a \delta V_a\right)
        \end{split}
    \end{equation}
    where we assumed that the diffeomorphism generating vector is field independent (i.e., $\delta \zeta = 0$) and used the definition of the \emph{off-shell pre-symplectic current}
    \begin{equation}
        \bm \omega[\phi; \delta \phi, \pounds_\zeta \phi] = \delta \bm \Theta_\zeta - \pounds_\zeta \bm \Theta[\phi, \delta \phi].
    \end{equation} 

    Plugging in \eqref{eq:JplusC}, we get an important identity
    \begin{equation}
        \delta \bm C_\zeta = \dd{(\delta \bm Q_\zeta - \iota_\zeta \bm \Theta[\phi, \delta \phi])} - \bm \omega[\phi; \delta \phi, \pounds_\zeta \phi] + \iota_\zeta \left(\frac{1}{2} \bm E_{ab} \delta g^{ab} + \bm{\mathcal{E}}^a \delta V_a\right). \label{eq:magic-id} 
    \end{equation}

    In Section \ref{sec:prob-term}, we will use this equation to resolve the issues raised by the inauspicious vector terms, generalising the procedure studied in \cite{Hollands:2022fkn}. A review of the treatment of zero-boost terms based on this identity in \cite{Hollands:2022fkn} is given in Appendix \ref{app:zbt}.

    \subsection{Increasing Entropy}\label{sec:inc-entp}
    The original definition of the increasing entropy is through the second-derivative expression 
    \begin{equation}
        \partial_v^2 \delta S_{\text{inc}} = - 2 \pi \int_{\mathcal{C}(v)} \dd[D-2]{x}\sqrt{h}\, \delta E_{vv}.
    \end{equation}
    We now show this can be recast covariantly in terms of the Noether charge $\bm Q_k$ and pre-symplectic potential $\bm \Theta_k$ with respect to the null translation $k = \pdv{v}$ by taking the $k$-contraction of the constraint form $\delta \bm C_k$ and apply the variation of equation \eqref{eq:JplusC}:
    \begin{equation} \label{eq:Sinc-cps-1}
        \pounds_k \pounds_k \delta S_\text{inc} = 2 \pi \int_{\mathcal{C}(v)} \iota_k \delta \bm C_k = 2 \pi \int_{\mathcal{C}(v)} \delta \left(\pounds_k \bm Q_k - \iota_k \bm \Theta_k\right)
    \end{equation}
    where we've identified the $v$-derivatives as Lie derivatives w.r.t.~$k$, we've used 
    \begin{equation}
        \iota_k \delta \bm C_k = \iota_k (\dd{\delta \bm Q_k} - \delta \bm J_k) = \pounds_k \delta \bm Q_k - \dd{\iota_k \delta \bm Q_k} - \iota_k (\delta \bm \Theta_k - \iota_k \delta \bm L)
    \end{equation}
    and integrated out $d\iota_k \delta \bm Q_k$ on the horizon time-slice. Note that the double contraction $\iota_k \iota_k \delta \bm L$ vanishes by the antisymmetry of differential forms. A subsequent GNC boost-weight analysis can be carried out for $\delta \bm Q_k$ and $\iota_k \delta \bm \Theta_k$, and it should follow that $\delta \bm Q_k$ contains one extra $\pounds_k$, and $\iota_k \delta \bm \Theta_k$ contains two extra $\pounds_k$'s. These statements should be true modulo exact forms\footnote{This is related to the possibility of an entropy current.}, which integrate out on compact horizons. Then, the entropy can be uniquely extracted by pulling two Lie derivatives out.  In this paper, we will not delve further in this direction as we will directly calculate the entropy formula by extracting two $\partial_v$'s from $\delta E_{vv}$ in GNC as before. 

    We also notice that the above expression is exact in the variation $\delta$ as the null translation vector $k$ is constructed to be field independent. This confirms the expected exactness of the (seemingly non-exact) zero-boost corrections mentioned in the Introduction. Although everything above is exact in $\delta$, we explicitly kept $\delta$'s everywhere to emphasise that the procedure of pulling out two derivatives (covariantly as $\pounds_k$) is valid only at the linear order of perturbation. 

    \subsubsection{Minimal Matter Sources}\label{sec:min-cpl} In the special cases where the matter fields are minimally coupled to gravity, we can split the Lagrangian into two sectors: the gravitational sector $\bm L^\text{g}$ and the matter sector $\bm L^\text{m}$, and balance the first order perturbation of gravity against higher order terms of matter.  This allows us to consistently turn on a nonzero stress tensor $\bm T_{ab}$ (which is typically at least quadratic in the matter fields) and make use of the null energy condition.
    
    The equation of motion also splits into
    \begin{equation}
        \bm E_{ab} = \bm H_{ab} - \bm T_{ab}
    \end{equation}
    so that we can define the increasing entropy using solely the gravitational sector:
    \begin{equation}
        \partial_v^2 \delta S_\text{inc} = - 2 \pi \int_{\mathcal{C}(v)} \dd[D-2]{x} \sqrt{h}\, \delta H_{vv} = - 2 \pi \int_{\mathcal{C}(v)} \dd[D-2]{x} \sqrt{h}\, \delta T_{vv} \leq 0
    \end{equation}
    where we've used the null energy condition of the matter fields on-shell.  

    The covariant phase space description follows straightforwardly as the pre-symplectic potential $\bm \Theta$ and the Noether charge $\bm Q$ both split into gravitational and matter sectors
    \begin{equation}
        \bm \Theta = \bm \Theta^\text{g} + \bm \Theta^\text{m}, \quad \bm Q = \bm Q^\text{g} + \bm Q^\text{m}.
    \end{equation}
    
    In a covariant expression, the entropy is now defined in this case via
    \begin{equation}
        \pounds_k \pounds_k \delta S_\text{inc}(v) = 2 \pi \int_{\mathcal{C}(v)} \delta \left(\pounds_k \bm Q^\text{g}_k - \iota_k \bm \Theta^\text{g}_k\right)
    \end{equation}
    and the null energy condition ensures the entropy $S_\text{inc}$ to be strictly non-decreasing at any time-slice of the horizon
    \begin{equation}
        \pounds_k \delta S_\text{inc} \geq 0 \quad \text{for all }v
    \end{equation}
    when a teleological boundary condition $\pounds_k \delta S_\text{inc} \to 0$ at $v \to \infty$ is imposed, which follows if, e.g., the black hole settles down to a stationary state in the far future.
    
    \subsection{JKM Invariance of the Entropy}\label{sec:jkm}

    The Jacobson-Kang-Myers (JKM) ambiguities were originally discussed in papers \cite{Jacobson:1993vj} and \cite{Iyer:1994ys}. They consist of three different types of differential form ambiguities. Here, we give a quick review of these ambiguities and see how they would impact the covariant phase space quantities. Consider the following three classes of ambiguity that can arise at different levels 
    \begin{equation}
        \begin{aligned}
            \bm L[\phi] & \to \bm L[\phi] + \dd{\bm \alpha[\phi]},\\
            \bm \Theta[\phi, \delta \phi] & \to \bm \Theta[\phi, \delta \phi] + \dd{\bm \beta[\phi, \delta \phi]},\\
            \bm Q_\zeta & \to \bm Q_\zeta + \dd{\bm \gamma[\phi, \zeta]}.
        \end{aligned}
    \end{equation}
    
    The $\bm \beta, \bm \gamma$ ambiguities are required to have the following properties to match $\bm \Theta$ and $\bm Q$: $\bm \beta[\phi, \delta \phi]$ should be linear in $\delta \phi$, and $\bm \gamma[\phi, \zeta]$ should also be linear in $\zeta$.
    
    Under the above ambiguities, the following covariant phase space quantities transform as
    \begin{equation}
        \begin{split}
            \bm C_{\zeta} & \to \bm C_\zeta,\\
            \bm \Theta[\phi, \delta \phi] & \to \bm \Theta[\phi, \delta \phi] + \delta \bm \alpha[\phi] + \dd{\bm \beta[\phi, \delta \phi]},\\
            \bm Q_\zeta & \to \bm Q_\zeta + \iota_\zeta \bm \alpha[\phi] + \bm \beta[\phi, \pounds_\zeta \phi] + \dd{\bm \gamma[\phi, \zeta]}.
        \end{split}
    \end{equation}

    The constraint form $\bm C_\zeta$ is manifestly JKM-invariant as equations of motion should never be ambiguous. Hence, we expect our increasing entropy \eqref{eq:Sinc-cps-1} defined using $\delta \bm C_k$ to be invariant with respect to the JKM ambiguities. This can be verified by considering the following transformation
    \begin{equation}
        \left(\pounds_k \bm Q_k - \iota_k \bm \Theta_k \right) \to \left(\pounds_k \bm Q_k - \iota_k \bm \Theta_k \right) + \dd{\left(\iota_k \bm \beta[\phi, \pounds_k \phi] + \pounds_k \bm \gamma[\phi, k]\right)}
    \end{equation}
    where we've used $[\pounds_k, \iota_k] = 0$, $[\pounds_k, \dd{}] = 0$, and the Cartan-Killing formula. The only ambiguity left, which is the exact term, would contribute to the codimension-2 total derivative of the entropy current discovered in \cite{Bhattacharyya:2021jhr}. The increasing entropy itself is intact when the horizon slice is compact, as the exact term integrates out. Therefore, our increasing entropy is unambiguous.
    
    There is a purely semantic difference in how the JKM ambiguities were related to the increasing entropy, in the previous work of one of the authors \cite{Wall:2015raa}, when addressing positive boost-weight corrections to Wald entropy. In that paper (as reviewed in Section \ref{Noether_review}), the equations of motion were used to resolve the ambiguities in Wald entropy away from stationarity and away from the bifurcation surface. This was understood as ``fixing'' the entropy within the space of JKM ambiguities.  However, from our new point of view, this ambiguity ``fixing'' was actually an implicit recovery of the correction to Wald entropy from $\delta \bm Q_k$ and $\iota_k \delta \bm \Theta_k$.  In our current viewpoint, equation \eqref{eq:Sinc-cps-1} gives an unambiguous definition of the dynamical black hole entropy when two Lie derivatives are pulled out.  Rather than determining an ambiguous coefficient, the quantity we are interested in is simply not ambiguous in the first place.  However, the actual formula for the entropy is the same as in \cite{Wall:2015raa}, so this is simply a difference in the language used to describe the result.

    \subsection{Relation to Boost Noether Charge}
    The relationship between the increasing entropy and the \emph{boost} Noether charge $\bm Q_\xi$ is first pointed out by \cite{Hollands:2024vbe} as
    \begin{equation}\label{eq:Sinc-Snew}
        (1 - \kappa^{-1} \pounds_\xi) \delta S_\text{inc} = \frac{2 \pi}{\kappa}\int_{\mathcal{C}(v)} \left(\delta \bm Q_\xi - \iota_\xi \bm \Theta[\phi, \delta \phi]\right)
    \end{equation}
    where $\kappa$ is the surface gravity (we've implicitly normalised it to be 1 in this paper when constructing the GNC)\footnote{The behaviour of perturbations to the surface gravity is investigated in \cite{Visser:2024pwz}, and it would not contribute at first order of perturbation. Thus we are free to normalise the background surface gravity to be $1$ for convenience.}, and $\bm \Theta[\phi,\delta\phi]$ is the pre-symplectic potential as before. Its various aspects are also studied in \cite{Rignon-Bret:2023fjq,Visser:2024pwz,Ciambelli:2023mir}. In \cite{Hollands:2024vbe}, it is proven for pure gravity theories that the latter term $\iota_\xi \bm \Theta[\phi, \delta \phi]$ is exact in $\delta$ on the horizon, i.e., there is some form $\bm B$ such that $\delta(\iota_\xi \bm B[\phi]) = \iota_\xi \bm \Theta[\phi, \delta \phi]$, so altogether one can define an \emph{improved Noether charge} $\tilde{\bm Q}_\xi = \bm Q_\xi - \iota_\xi \bm B$ and then express the \emph{dynamical black hole entropy} (an alternative definition to our increasing entropy) in terms of it. The result is subsequently generalised to all non-minimally coupled bosonic matter fields in \cite{Visser:2024pwz}. These both provide further confirmations of the exactness of zero-boost terms. An easy corollary of the above relation is that the increasing entropy is equal to Wald entropy at the bifurcation surface, where $\xi$ vanishes.

    \section{Inauspicious Terms Caused by Vector Fields and Resolution}\label{sec:prob-term}

    In this section, we prove that the inauspicious terms involving $\delta V_v$ introduced by non-gauge vector fields will only contribute to the codimension-2 divergence of the entropy current, so the two $v$ indices in $\int \delta E_{vv}$ both correspond to $\partial_v$ rather than $\delta V_v$. Hence, the original prescription (pulling out two $\partial_v$'s) of the increasing entropy in \cite{Wall:2015raa} safely generalises to vector fields. We will carry out the proof using both the covariant phase space identity \eqref{eq:magic-id} and the structure analysis of $E_{vv}$ in Gaussian null coordinates.

    \subsection{Structure of $\delta E_{vv}$ without Vector Fields}
    We start with the structure of $\delta E_{vv}$. Using the results in Corollary \ref{cor:5} (also in \cite{Wall:2015raa,Bhattacharyya:2021jhr}), we can show, without a vector field in the theory, the structure of the variation of equation of motion is 
    \begin{equation}
        \delta E_{vv}^{(\text{no }V)} = \frac{1}{\sqrt{h}} \partial_v^2 \left(\sqrt{h} \sum_I \sum_{k\geq 0} A_{(-k)}^I \delta B_{(k)}^I\right) + D_i \tilde J^i \label{eq:Evv-no-V}
    \end{equation}
    where $A_{(k)}^I$, etc., are factors with Killing weight $k$, and the index $I$ keeps track of different terms with some specific Killing weight combination. In order to keep track of the entropy current in the spirit of \cite{Bhattacharyya:2021jhr}, we have separated out the terms $D_i \tilde J^i$ that are manifestly codimension-2 divergences, even though they can also be written in the $A \delta B$ form of the first term.  $\tilde J^i$ will receive corrections from the zero-boost terms, as discussed in Appendix \ref{app:zbt}.

    Here, defining the entropy via 
    \begin{equation}
        \partial_v^2 \delta S_\text{inc} \propto \int \dd[D-2]{x} \sqrt{h}\, \delta E_{vv}
    \end{equation}
    is not a problem, as we can safely take out two $v$-derivatives when integrating $\delta E_{vv}$ on a compact horizon time-slice.

    \subsection{Potential Problem of Vector Fields and Resolution}
    Adding a vector field causes a potential problem: there can be inauspicious terms with only one derivative, and the above definition may be invalid. Here, we aim to show that inauspicious terms sum to a codimension-2 divergence, which integrates out on a compact horizon slice.

    With the presence of a vector field, the structure of $\delta E_{vv}$ becomes 
    \begin{equation}
        \delta E_{vv} = \frac{1}{\sqrt{h}} \partial_v^2 \left(\sqrt{h} \sum_I\sum_{k\geq0} A_{(-k)}^I \delta B_{(k)}^I\right) + \frac{1}{\sqrt{h}} \partial_v \left(\sqrt{h}\, \mathcal P\right) + D_i \tilde J^i \label{eq:4-3}
    \end{equation}
    where 
    \begin{equation}
        \mathcal P = \sum_{I=0}^{n} (C_{(0)}^I)^{i_1 \cdots i_I} D_{(i_1} \cdots D_{i_I)} \delta V_v \label{eq:problem}
    \end{equation}
    are the inauspicious terms. We see such terms arise as by-products of vector fields, comparing with equation (\ref{eq:Evv-no-V}). Here, $n$ is the maximum number of $D_i$'s that show up in such terms, and all $D_i$ derivatives are symmetrised because of the structure of the theory. $C^I_{(0)}$ are weight-0 background quantities determined by the theory. In $\mathcal P$, there can be no $v$-derivative acting on $\delta V_v$ because if so we could remove it by adding a total $v$-derivative term, which would not be problematic.

    We resolve the problem of vectors by the following proposition, which is one of the key results of our paper.

    \begin{clm}
        The inauspicious term $\mathcal{P}$ is a codimension-2 total derivative.
    \end{clm}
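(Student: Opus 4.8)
The plan is to bootstrap from an integral identity to a pointwise one: first establish that the $v$-weighted integral of $\delta E_{vv}$ over the whole horizon vanishes, then feed in the structural form \eqref{eq:4-3} to isolate $\int\mathcal P$, and finally use the arbitrariness of $\delta V_v$ together with the background (perturbation-independent) nature of the coefficients to conclude that $\mathcal P$ itself is an exact codimension-2 divergence. The starting point is the relation
\[
\int_{-\infty}^{\infty}\dd{v}\int_{\mathcal{C}(v)}\dd[D-2]{x}\,\sqrt{h}\,v\,\delta E_{vv}=\int_{\mathcal{H}}\dd{(\delta\bm Q_\xi-\iota_\xi\bm\Theta[\delta])}=0,
\]
which I would obtain by integrating the covariant phase space identity \eqref{eq:magic-id} with $\zeta=\xi$ over $\mathcal H$: the pre-symplectic current $\bm\omega[\phi;\delta\phi,\pounds_\xi\phi]$ vanishes because $\xi$ is Killing on the background, the contraction of the equation-of-motion term reduces on $\mathcal H$ to $v\,\delta E_{vv}$ (the double contraction $\iota_\xi\iota_\xi\delta\bm L$ dies by antisymmetry and $\delta(\mathcal E_v V_v)=0$ by weight counting), and the leftover exact form contributes only boundary terms at $v\to\pm\infty$ that vanish for a compactly supported perturbation.

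Next I would substitute \eqref{eq:4-3} in the form $\sqrt{h}\,\delta E_{vv}=\partial_v^2(\sqrt{h}\,\varsigma)+\partial_v(\sqrt{h}\,\mathcal P)+\sqrt{h}\,D_i\tilde J^i$, with $\varsigma$ the entropy density, and carry out the integrations. On a compact slice the pure spatial divergence obeys $\int_{\mathcal C}\sqrt h\,D_i\tilde J^i=0$, so with $\Sigma(v)=\int_{\mathcal C}\sqrt h\,\varsigma$ and $\Pi(v)=\int_{\mathcal C}\sqrt h\,\mathcal P$ the identity collapses to $\int v\,(\partial_v^2\Sigma+\partial_v\Pi)\,\dd v=0$. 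Integrating by parts in $v$ with all boundary terms killed by compact support, the $\partial_v^2\Sigma$ piece integrates to zero identically while the $\partial_v\Pi$ piece gives $-\int\Pi\,\dd v$; hence $\int_{-\infty}^{\infty}\dd v\int_{\mathcal C(v)}\dd[D-2]x\,\sqrt h\,\mathcal P=0$ for every compactly supported perturbation.

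To promote this to the pointwise claim I would treat $\mathcal P=\mathcal L[\delta V_v]$ from \eqref{eq:problem} as a fixed linear differential operator whose coefficients $(C^I_{(0)})^{i_1\cdots i_I}$ are weight-zero background quantities (hence $v$-independent on $\mathcal H$, by the general solution in the proof of Claim \ref{clm:2}) that do not depend on the perturbation. Purely algebraic spatial integration by parts gives the pointwise identity
\[
\mathcal L[\delta V_v]=D_iJ_V^i+\bigl(\mathcal L^\dagger 1\bigr)\,\delta V_v,\qquad \mathcal L^\dagger 1=\sum_{I}(-1)^I\,D_{i_1}\cdots D_{i_I}\bigl(C^I_{(0)}\bigr)^{i_1\cdots i_I},
\]
where $J_V^i$ is built linearly from $\delta V_v$ and the coefficients and $\mathcal L^\dagger$ is the formal adjoint with respect to the $\sqrt h$ measure. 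Then $\Pi(v)=\int_{\mathcal C}\sqrt h\,(\mathcal L^\dagger 1)\,\delta V_v$, and taking $\delta V_v=\chi(v)\rho(x)$ with $\int\chi\,\dd v\neq 0$, the vanishing of $\int\Pi\,\dd v$ for arbitrary $\rho$ forces $\mathcal L^\dagger 1=0$ by the fundamental lemma of the calculus of variations. Feeding this back into the algebraic identity yields $\mathcal P=D_iJ_V^i$, which is the claim; because $\mathcal L^\dagger 1=0$ is a statement about the fixed background coefficients alone, the divergence form survives for arbitrary, not necessarily compactly supported, perturbations.

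The hard part will be the final logical upgrade rather than any calculation: one must verify that $\mathcal L^\dagger 1=0$ is genuinely both necessary and sufficient for $\mathcal P$ to be a divergence, and that this conclusion is not spoiled by a residual $v$-dependence in the coefficients or by the compact-support hypothesis used to derive the integral identity. Both concerns are dispatched by the weight-zero stationarity of the $C^I_{(0)}$ and by their independence of $\delta V_v$, so the decomposition $\mathcal L[\,\cdot\,]=D_iJ_V^i+(\mathcal L^\dagger 1)(\,\cdot\,)$ is a fixed algebraic identity in the background data. A secondary point worth checking explicitly is that the $J_V^i$ produced this way carries Killing weight $1$, as required for a genuine contribution to the entropy current.
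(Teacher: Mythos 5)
Your proposal is correct and follows essentially the same route as the paper's proof: the same integration of the phase-space identity with $\zeta=\xi$ over the whole horizon to get $\int\dd{v}\int\sqrt{h}\,\mathcal{P}=0$, the same use of the arbitrariness of $\delta V_v$ to force the Euler--Lagrange-type condition on the background coefficients (your $\mathcal{L}^\dagger 1=0$), and the same back-substitution via spatial integration by parts to exhibit $\mathcal{P}=D_iJ_V^i$. Your formal-adjoint packaging of the Leibniz-rule telescoping and your closing remarks on perturbation-independence of the $C^I_{(0)}$ match the paper's argument step for step.
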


    \begin{proof}
    We evaluate (\ref{eq:magic-id}) with $\zeta = \xi = v \partial_v - u \partial_u$ on the \emph{whole} horizon $\mathcal{H}$ from $v \to -\infty$ to $v \to +\infty$ in GNC on a stationary background. Then we obtain
    \begin{equation}
        \int_\mathcal{H} \delta \bm C_{\xi} = \left(\int_{\mathcal{C}(+\infty)} - \int_{\mathcal{C}(-\infty)} \right) \left(\delta \bm Q_\xi - \iota_\xi \bm \Theta[g, \delta g; V, \delta V]\right) \label{eq:red-magic-id}
    \end{equation}
    where $\bm \omega = 0$ as $\pounds_\xi g = \pounds_\xi V = 0$ on a stationary background, and $\xi$ is tangent to $\mathcal{H}$, so $\iota_\xi (\cdots) = 0$ on $\mathcal{H}$.

    Since we are off-shell, we can assume that the perturbations are \emph{compactly supported}, more specifically $\delta g, \delta V \to 0$ as $v \to \pm \infty$, hence the RHS of (\ref{eq:red-magic-id}) vanishes.

    Evaluating the other side,
    \begin{equation}
    \begin{split}
            \text{LHS of (\ref{eq:red-magic-id})} & = - \int_{v=-\infty}^{+\infty} \dd{v} \int_{\mathcal{C}(v)} \dd[D-2]{x} \sqrt{h}\, v\, \delta E_{vv}\\
            & = - \underbrace{\left[\int_{\mathcal{C}(v)} \dd[D-2]{x} \sqrt{h} \cdots\right]_{v=-\infty}^{+\infty}}_{0} + \int_{v=-\infty}^{+\infty} \dd{v} \int_{\mathcal{C}(v)} \dd[D-2]{x} \sqrt{h}\, \mathcal P\\
            & = \int_{v=-\infty}^{+\infty} \dd{v} \int_{\mathcal{C}(v)} \dd[D-2]{x} \sqrt{h}\, \mathcal P = 0
        \end{split}
    \end{equation}
    via integrating by parts.

    We proceed by plugging in the detailed inauspicious terms in (\ref{eq:problem}) to get
    \begin{equation}
    \begin{split}
            0 & = \sum_{I=0}^{n}\int_{v=-\infty}^{+\infty} \dd{v} \int_{\mathcal{C}(v)} \dd[D-2]{x} \sqrt{h}\, (C_{(0)}^I)^{i_1 \cdots i_I} D_{(i_1} \cdots D_{i_I)} \delta V_v\\
            & = \int_{v=-\infty}^{+\infty} \dd{v} \int_{\mathcal{C}(v)} \dd[D-2]{x} \sqrt{h}\, \delta V_v \sum_{I=0}^{n} (-1)^{I} D_{(i_1} \cdots D_{i_I)}(C_{(0)}^I)^{i_1 \cdots i_I}
        \end{split}
    \end{equation}
    which is obtained by performing integration by parts in the $D-2$ spatial directions for many times. We get an Euler-Lagrange-like equation 
    \begin{equation}
        C_{(0)}^0 - D_{i_1} (C_{(0)}^1)^{i_1} + D_{(i_1} D_{i_2)} (C_{(0)}^2)^{i_1 i_2} + \cdots + (-1)^n D_{(i_1} \cdots D_{i_n)} (C_{(0)}^n)^{i_1 \cdots i_n} = 0
    \end{equation}
    for the coefficients in $\mathcal{P}$.

    Now substituting this back we get
    \begin{equation}
    \begin{split}
            \mathcal{P} & = \sum_{I=1}^{n}(C_{(0)}^I)^{i_1 \cdots i_I} D_{(i_1} \cdots D_{i_I)}\delta V_v + (-1)^{I-1} \left(D_{(i_1} \cdots D_{i_I)} (C_{(0)}^I)^{i_1 \cdots i_I}\right) \delta V_v\\
            & = \sum_{I=1}^{n} \sum_{J=0}^{I-1} D_{i} \left((-1)^{J}D_{i_1} \cdots D_{i_J} (C^{I}_{(0)})^{(i i_1 \cdots i_{I-1})} D_{i_{J+1}} \cdots D_{i_{I-1}} \delta V_v\right)
        \end{split}
    \end{equation}
    through Leibniz rules.  Note that (inside the parentheses) in the side case $J=0$, there is no derivative on $C^I_{(0)}$; whereas in the side case $J=I-1$, there is no derivative on $\delta V_v$. 

    We conclude that 
    \begin{equation}
        \mathcal{P} = D_i J_V^i.
    \end{equation}
    where the entropy current from the inauspicious term is 
    \begin{equation}
        J_V^i = \sum_{I=1}^{n} \sum_{J=0}^{I-1} (-1)^{J}D_{i_1} \cdots D_{i_J} (C^{I}_{(0)})^{(i i_1 \cdots i_{I-1})} D_{i_{J+1}} \cdots D_{i_{I-1}} \delta V_v
    \end{equation}
    To close the proof, we note that the $(C_{(0)}^I)^{i_1 \cdots i_I}$ coefficients are fully specified by the theory and independent of the perturbations.  Furthermore, there is no ambiguity from removing the $\dd{v}$ integral since total $v$-derivatives do not appear in $\cal P$.  Therefore, the final structure of $\mathcal{P}$ that we demonstrated holds for arbitrary $\delta V_v$, i.e., it is valid for arbitrary non-compactly supported perturbations. 
    \end{proof}

    The inauspicious terms are resolved as the divergence integrates out on a compact horizon slice. It contributes to the entropy current, in the same spirit as the zero-boost corrections.

    \section{Example: One-Derivative Vector with $f(\text{Riemann})$}
    \label{sec:example1}
    Here, as an example, we carry out an explicit calculation for theories with only one derivative in vector fields non-minimally coupled to $f(\text{Riemann})$. To declutter, we choose to omit some minor details in the main text. These can be found in Appendix \ref{app:calc}. To provide a map for the readers, we have prepared Table \ref{tb:quantities} as a summary of quantities\footnote{The dynamical corrections are called ``JKM'' following the terminologies in \cite{Wall:2015raa}. See discussions in Section \ref{sec:jkm}.} appeared in the calculation and a flow diagram\footnote{In the flow diagram we've used $[\partial_v, D_i]=0,\, [\delta, D_i] T_{(w>0)} = 0,\, \partial_v \sqrt{h} = 0$ at first order of perturbation.} (Figure \ref{fig:flow-diagram}) for the whole calculation procedure. They are in the next page.

    \begin{table}[p]
        \centering
        \begin{tabular}{|c|c|}
            \hline
            Symbol & Description\\
            \hline
            $L_1$ & Lagrangian of one-derivative vector field coupled to $f(\text{Riemann})$\\
            $E_{ab}$ & equation of motion (EoM) of $g^{ab}$ from $L_1$\\
            $H_{ab}$ & gravitational contribution to EoM\\
            $\delta H_{\text{ZB}}$ & zero-boost corrections in $\delta H_{ab}$\\
            $\mathcal{V}_{ab}$ & vector contribution to EoM\\
            $\delta \mathcal{V}_{\text{ZB}}$ & zero-boost corrections in $\delta \mathcal{V}_{ab}$\\
            $\delta E_{\text{ZB}}$ & sum of $\delta H_{\text{ZB}}$ and $\delta \mathcal{V}_{\text{ZB}}$\\
            $\varsigma$ & entropy density\\
            $\varsigma_H, \varsigma_{\mathcal{V}}$ & $H,\mathcal{V}$ contributions to entropy density\\
            $\varsigma_{\text{Wald}}$ & Wald entropy density\\
            $\varsigma_{\text{JKM}}$ & dynamical corrections to Wald entropy density\\
            $J^i$ & entropy current\\
            $J_{H \text{M}}^i,J_{\mathcal{V}\text{M}}^i, J_{\text{M}}^i$ & manifest entropy current in $H,\mathcal{V}$ and their sum\\
            $J_{\text{ZB}}^i$ & $\delta E_{\text{ZB}}$ contributions to entropy current\\
            $S_\text{inc}$ & increasing black hole entropy\\
            \hline
        \end{tabular}
        \caption{Summary of quantities appeared in the calculation.} \label{tb:quantities}
    \end{table}

    \begin{figure}[p]
        \centering

\tikzset{every picture/.style={line width=0.75pt}} 

\begin{tikzpicture}[x=0.75pt,y=0.75pt,yscale=-1,xscale=1]

\draw (326.22,15.2) node    {$L_{1}$};
\draw (326.22,55.2) node    {$E_{ab} =H_{ab} +\mathcal{V}_{ab}$};
\draw (325.22,97.2) node    {$\delta E_{vv} =\delta H_{vv} +\delta \mathcal{V}_{vv}$};
\draw (167.56,152.87) node    {$\displaystyle \delta H_{vv} =\frac{1}{\sqrt{h}} \partial _{v}^{2} \delta \left(\sqrt{h}\, \varsigma _{H}\right) +D_{i}\left( \partial _{v} \delta J_{H\text{M}}^{i}\right) +\delta H_{\text{ZB}}$};
\draw (490.56,153.2) node    {$\displaystyle \delta \mathcal{V}_{vv} =\frac{1}{\sqrt{h}} \partial _{v}^{2} \delta \left(\sqrt{h}\, \varsigma _{\mathcal{V}}\right) +D_{i}\left( \partial _{v} \delta J_{\mathcal{V}\text{M}}^{i}\right) +\delta \mathcal{V}_{\text{ZB}}$};
\draw (325.89,257.2) node    {$\begin{cases}
\varsigma =\varsigma _{H} +\varsigma _{\mathcal{V}} =\varsigma _{\text{Wald}} +\varsigma _{\text{JKM}} & \\
\delta E_{\text{ZB}} =\delta H_{\text{ZB}} +\delta \mathcal{V}_{\text{ZB}} =D_{i}\left( \partial _{v} \delta J_{\text{ZB}}^{i}\right) & \\
J_{\text{M}}^{i} =J_{H\text{M}}^{i} +J_{\mathcal{V}\text{M}}^{i} & \\
J^{i} =J_{\text{M}}^{i} +J_{\text{ZB}}^{i} & 
\end{cases}$};
\draw (327.89,366.2) node    {$\displaystyle \delta E_{vv} =\frac{1}{\sqrt{h}} \partial _{v}^{2} \delta \left(\sqrt{h}\,( \varsigma _{\text{Wald}} +\varsigma _{\text{JKM}})\right) +D_{i}\left( \partial _{v} \delta J^{i}\right)$};
\draw (326.56,431.2) node    {$\displaystyle S_{\text{inc}} = - 2 \pi \int \mathrm{d}^{D-2} x\, \sqrt{h}\,( \varsigma _{\text{Wald}} +\varsigma _{\text{JKM}})$};
\draw    (326.22,28.2) -- (326.22,40.2) ;
\draw [shift={(326.22,42.2)}, rotate = 270] [color={rgb, 255:red, 0; green, 0; blue, 0 }  ][line width=0.75]    (10.93,-3.29) .. controls (6.95,-1.4) and (3.31,-0.3) .. (0,0) .. controls (3.31,0.3) and (6.95,1.4) .. (10.93,3.29)   ;
\draw    (325.91,68.2) -- (325.58,82.2) ;
\draw [shift={(325.53,84.2)}, rotate = 271.36] [color={rgb, 255:red, 0; green, 0; blue, 0 }  ][line width=0.75]    (10.93,-3.29) .. controls (6.95,-1.4) and (3.31,-0.3) .. (0,0) .. controls (3.31,0.3) and (6.95,1.4) .. (10.93,3.29)   ;
\draw    (288.4,110.2) -- (248.75,124.2) ;
\draw [shift={(246.86,124.87)}, rotate = 340.55] [color={rgb, 255:red, 0; green, 0; blue, 0 }  ][line width=0.75]    (10.93,-3.29) .. controls (6.95,-1.4) and (3.31,-0.3) .. (0,0) .. controls (3.31,0.3) and (6.95,1.4) .. (10.93,3.29)   ;
\draw    (363.6,110.2) -- (406,124.56) ;
\draw [shift={(407.89,125.2)}, rotate = 198.71] [color={rgb, 255:red, 0; green, 0; blue, 0 }  ][line width=0.75]    (10.93,-3.29) .. controls (6.95,-1.4) and (3.31,-0.3) .. (0,0) .. controls (3.31,0.3) and (6.95,1.4) .. (10.93,3.29)   ;
\draw    (446.22,181.2) -- (420.21,197.63) ;
\draw [shift={(418.51,198.7)}, rotate = 327.72] [color={rgb, 255:red, 0; green, 0; blue, 0 }  ][line width=0.75]    (10.93,-3.29) .. controls (6.95,-1.4) and (3.31,-0.3) .. (0,0) .. controls (3.31,0.3) and (6.95,1.4) .. (10.93,3.29)   ;
\draw    (210.05,180.87) -- (235.44,197.6) ;
\draw [shift={(237.11,198.7)}, rotate = 213.38] [color={rgb, 255:red, 0; green, 0; blue, 0 }  ][line width=0.75]    (10.93,-3.29) .. controls (6.95,-1.4) and (3.31,-0.3) .. (0,0) .. controls (3.31,0.3) and (6.95,1.4) .. (10.93,3.29)   ;
\draw    (326.96,315.7) -- (327.34,336.2) ;
\draw [shift={(327.38,338.2)}, rotate = 268.95] [color={rgb, 255:red, 0; green, 0; blue, 0 }  ][line width=0.75]    (10.93,-3.29) .. controls (6.95,-1.4) and (3.31,-0.3) .. (0,0) .. controls (3.31,0.3) and (6.95,1.4) .. (10.93,3.29)   ;
\draw    (327.31,394.2) -- (327,409.7) ;
\draw [shift={(326.96,411.7)}, rotate = 271.18] [color={rgb, 255:red, 0; green, 0; blue, 0 }  ][line width=0.75]    (10.93,-3.29) .. controls (6.95,-1.4) and (3.31,-0.3) .. (0,0) .. controls (3.31,0.3) and (6.95,1.4) .. (10.93,3.29)   ;

\end{tikzpicture}
        \caption{Flow diagram of the calculation.} \label{fig:flow-diagram}
    \end{figure}
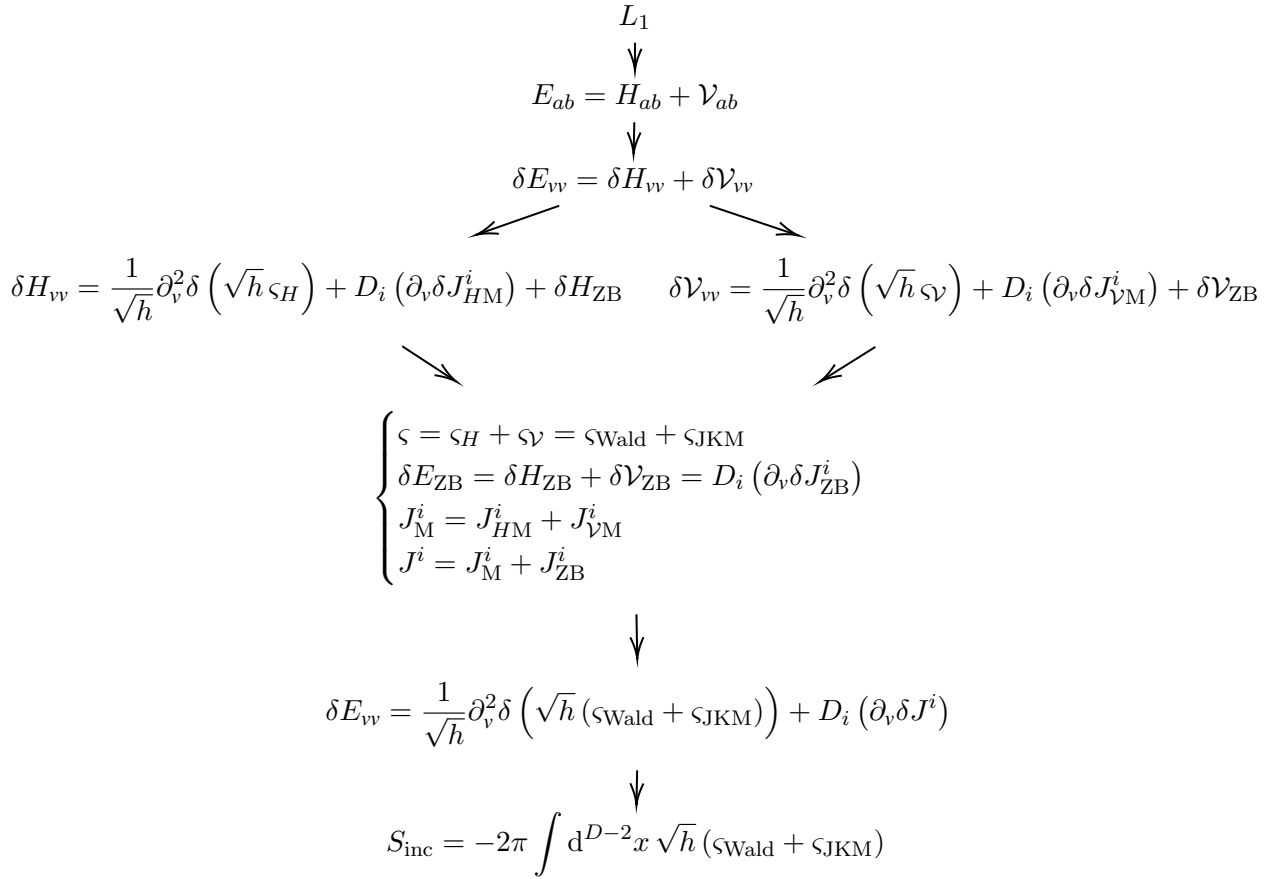

    For readers who are eager to see the results quickly, you can skip the calculations and read Section \ref{sec:vec-formula} directly.

    \subsection{Equations of Motion}
    The theory we are interested in has an action
    \begin{equation}
        I_1 = \int \dd[D]{x} \sqrt{-g}\, L_1
    \end{equation}
    with Lagrangian
    \begin{equation}
        L_1 = L_1(g^{ab}, R_{abcd}, V_a, \nabla_b V_a),
    \end{equation}
    which is a locally diffeomorphism invariant scalar constructed by linear combinations of contractions of the arguments. All indices should be contracted, and no extra derivatives are allowed to act on these arguments. 
    
    To find the equation of motions, we vary the action 
    \begin{equation}
        \delta (\sqrt{-g} L_1) = \sqrt{-g} \left(X^{abcd} \delta R_{abcd} + \left(Y_{ab} - \frac{1}{2} g_{ab} L_1\right) \delta g^{ab} + Z^a \delta V_a + Z^{a|b} \delta(\nabla_b V_a)\right)
            \label{eq:L1-var}
    \end{equation}
    where we define
    \begin{equation}
        X^{abcd} = \pdv{L_1}{R_{abcd}}, \quad Y_{ab} = \pdv{L_1}{g^{ab}}, \quad Z^a = \pdv{L_1}{V_a}, \quad Z^{a|b} = \pdv{L_1}{(\nabla_b V_a)}.
    \end{equation}

    For later, we calculate the standard variation of Riemann tensor
    \begin{equation}
        \delta R_{abcd} = \frac{1}{2} \left(\nabla_{b}\nabla_{c} \delta g_{ad} - \nabla_{b}\nabla_{d} \delta g_{ac} - \nabla_{a} \nabla_{c} \delta g_{bd} + \nabla_{a} \nabla_{d} \delta g_{bc}\right) + R\indices{^e _{bcd}}\delta g _{ae},
    \end{equation}
    and we notice that $\delta$ and $\nabla$ on the vector field does not commute
    \begin{equation}
        \delta(\nabla_b V_a) = \nabla_b \delta V_a - (\delta \conn{c}{b}{a}) V_c.
    \end{equation}

    Setting the boundary condition such that the total derivative term integrate to zero, we obtain the Euler-Lagrange equation of $V_a$
    \begin{equation}
        \mathcal{E}^a = Z^a - \nabla_b Z^{a|b} = 0 \label{eq:v-eom}
    \end{equation}
    and the equation of motion of $g^{ab}$ 
    \begin{equation}
        \frac{1}{2} E_{ab} = 2 \nabla^{(c} \nabla^{d)} X_{acbd} - X\indices{_{(a}^{cde}} R_{b)cde} + Y_{ab} - \frac{1}{2} g_{ab} L_1 + \frac{1}{2} \nabla^c \left(Z_{(a|b)} V_c - Z_{(a|c} V_{b)} - Z_{c|(a} V_{b)} \right) = 0
    \end{equation}
    with the factor of half a convention, which is mentioned earlier and also later.

    To further simplify, we find the relationship between $Y_{ab}$ and $X^{abcd}, Z^a, Z^{a|b}$ using a generalisation of method in \cite{Padmanabhan:2011ex}:
    \begin{equation}
        Y_{ab} = 2 X\indices{_{(a}^{cde}}R_{b)cde} + \frac{1}{2} (Z_{(a} V_{b)} + Z_{(a|c}\nabla^c V_{b)} + Z_{c|(a}\nabla_{b)} V^c), \label{eq:Y_ab}
    \end{equation}
    and the details are presented in Appendix \ref{app:calc}.

    Finally, the equation of motion of $g^{ab}$ is 
    \begin{equation}
    \begin{split}
            E_{ab} & = 4\nabla^{(c} \nabla^{d)} X_{acbd} + 2 X\indices{_{(a}^{cde}}R_{b)cde} - g_{ab} L_1\\
            & \quad + \mathcal{E}_{(a} V_{b)}+ Z_{c|(a} \nabla_{b)} V^c + \nabla^c \left(Z_{(a|b)}V_c - Z_{c|(a}V_{b)}\right) = 0
        \end{split}
    \end{equation}
    where we followed the convention of \cite{Wall:2015raa} that 
    \begin{equation}
        E_{ab} = \frac{2}{\sqrt{-g}} \frac{\delta I}{\delta g^{ab}}.
    \end{equation}

    For our analysis later, let's write the \emph{off-shell} equation of motion tensor as
    \begin{equation}
        E_{ab} = H_{ab} + \mathcal{V}_{ab}
    \end{equation}
    where $H_{ab}$ is the \emph{gravitational contribution}, keeping the same functional form as the EoM of $f(\text{Riemann})$ theory\footnote{Caveat: $H_{ab}$ also depends on the vector field \emph{implicitly} through the $X_{abcd}$ tensor and the Lagrangian $L_1$.}
    \begin{equation}
        H_{ab} = 4\nabla^{(c} \nabla^{d)} X_{acbd} + 2 X\indices{_{(a}^{cde}}R_{b)cde} - g_{ab} L_1
    \end{equation}
    and $\mathcal{V}_{ab}$ is the extra \emph{vector contribution}\footnote{These are the \emph{explicit} vector contribution to the equation of motion.} 
    \begin{equation}
        \mathcal{V}_{ab} = \mathcal{E}_{(a} V_{b)} + Z_{c|(a} \nabla_{b)} V^c + \nabla^c \left(Z_{(a|b)}V_c - Z_{c|(a}V_{b)}\right).
    \end{equation}
    
    \subsection{Off-Shell Analysis of $\delta E_{vv}$}
    The entropy formula we are calculating is determined by the off-shell structure of the equation of motion tensor. Here, we present a detailed analysis.

    In the case of linear perturbation of a stationary background $g = \bar g + \delta g$, the variation of the $vv$-component of the equation of motion evaluated on the horizon is
    \begin{equation}
    \begin{split}
            \delta E_{vv} & = \delta H_{vv} + \delta \mathcal{V}_{vv}\\
            & = \delta \left(4\nabla^{(c} \nabla^{d)} X_{vcvd} + 2 (X\indices{_v^{cde}}R_{vcde})\right) + \delta \left(\mathcal{E}_v V_v + Z_{c|v} \nabla_v V^c + \nabla^c \left(Z_{v|v}V_c - Z_{c|v}V_v\right)\right)
        \end{split}
    \end{equation}
    where we used $g_{vv} = 0, g_{uu} = 0, g_{uv} = 1, g_{ui} = g_{vi} = 0$ and the appropriate inverse on the horizon. We analyse this term by term. 
    \subsubsection{Contributions from $\delta H_{vv}$}
    We decompose
    \begin{equation}
        \delta H_{vv} = \delta H^{(1)} + \delta H^{(2)}
    \end{equation}
    with
    \begin{equation}
        \delta H^{(1)} = 4 \delta (\nabla_v \nabla_a X^{uvua}) + 4 \delta (\nabla_i \nabla_a X^{uiua}),
    \end{equation}
    and 
    \begin{equation}
            \delta H^{(2)} = 2 \delta (X\indices{_v ^{cde}} R_{vcde}) = 4 (R_{viuj} \delta X\indices{_v^{iuj}} + X\indices{_v^{ivj}} \delta R_{vivj}), \label{eq:H-2}
    \end{equation}
    which is a sum of zero-boost terms (see Appendix \ref{app:zbt} for the definition). Here, we used the result in Claim \ref{clm:2} that a positive weight background tensor component vanishes. The statement in Claim \ref{clm:2} will be repeatedly used throughout to simplify the calculations.

    Back to $\delta H^{(1)}$, we expand the covariant derivatives by reading off Table \ref{tb:covder}
    \begin{equation}
        \delta H^{(1)} = 4 \delta \left(\partial_v (\nabla_a X^{uvua}) + \frac{1}{2} \omega_i \nabla_a X^{uiua} + D_i(\nabla_a X^{uiua})\right).
    \end{equation}

    Here we found that the second and the third terms are a zero-boost term and a codimension-2 total derivative, respectively. We would postpone the analysis of zero-boost terms after collecting all of them.

    We further expand the $v$-derivative term (which is the first term above),
    \begin{equation}
    \begin{split}
            \delta H^ {(1,1)} & = 4 \delta (\partial_v (\nabla_a X^{uvua}))\\
            & = 4 \biggl(\frac{1}{\sqrt{h}}\partial_v^2 \delta \left(\sqrt{h}\, X^{uvuv}\right) - \frac{1}{2} \omega_i \partial_v \delta X^{uvui} +  D_i (\partial_v \delta X^{uvui})\\
            & \qquad \quad +  X^{uivj} \partial_v \delta K_{ij} - \partial_v (\bar K_{ij}) \delta X^{uiuj} -  \bar K_{ij} \partial_v \delta X^{uiuj} \biggr)
        \end{split}\label{eq:H-1-1}
    \end{equation}
    where we identified
    \begin{equation}
        \partial_v (\delta D_i X^{uvui}) = \partial_v (\partial_i \delta X^{uvui} + \delta (\conn{i}{i}{j} X^{uvuj})) = D_i(\partial_v \delta X^{uvui}), \label{eq:dv-delta-D-X}
    \end{equation}
    and we have recovered the factor of $\delta \sqrt{h}$ through
    \begin{equation}
        \partial_v \delta K = \frac{1}{\sqrt{h}} \partial_v^2 \delta \sqrt{h}.
    \end{equation}

    In summary, we can write 
    \begin{equation}
    \begin{split}
            \delta H_{vv} = \frac{4}{\sqrt{h}}\partial_v^2 \delta \left(\sqrt{h}\,X^{uvuv}\right) - 4 \bar K_{ij} \partial_v \delta X^{uiuj} + \delta H_{\text{pZB}} + D_i \left(\partial_v \delta J_{H \text{M}}^i\right)
        \end{split}
    \end{equation}
    where the first term is the derivative of Wald entropy density, $\delta H_{\text{pZB}}$ is a preliminary collection of zero-boost corrections to the derivative of Wald entropy density\footnote{The Wald entropy density itself includes zero-boost terms, i.e., the Iyer-Wald entropy density. Here we separate out the Wald contribution deliberately for later convenience.}
    \begin{equation}
    \begin{split}
            \delta H_{\text{pZB}} & = 4\biggl(R_{viuj}\delta X^{uiuj} + X^{uivj} \delta R_{vivj}+ \frac{1}{2}\omega_i \delta (\nabla_a X^{uiua})\\
            & \qquad \qquad \qquad \qquad \! - \frac{1}{2} \omega_i \partial_v \delta X^{uvui} + X^{uivj}\partial_v \delta K_{ij} - (\partial_v \bar K_{ij}) \delta X^{uiuj}\biggr),
        \end{split}
    \end{equation}
    which will be analysed later, and 
    \begin{equation}
        \partial_v \delta J_{H \text{M}}^i = 4 (\delta (\nabla_a X^{uiua}) + \partial_v \delta X^{uvui})
    \end{equation}
    is the manifest $H_{ab}$ contribution to the entropy current. The divergence term will not contribute when we integrate over the codimension-2 (compact) surface.

    Till this point, everything is the same as the pure $f(\text{Riemann})$ case as in \cite{Wall:2015raa}. Now, we will analyse the pre-JKM term (the ``pre'' refers to the fact that this will still contribute to the pre-zero-boost corrections when taking out two $v$-derivatives)
    \begin{equation}
        \delta H_{\text{pJKM}} = -4 \bar K_{ij} \partial_v \delta X^{uiuj} \label{eq:H-pJKM}
    \end{equation}
    and see what new contributions we have from the vector field. To find the details, we expand the variation using chain rule (by noticing $[\delta, \partial] = 0$)
    \begin{equation}
    \begin{split}
            \delta H_{\text{pJKM}} & = - 4 \bar K_{ij} \delta \left(\pdv{X^{uiuj}}{g^{ab}} \partial_v g^{ab} + \pdv{X^{uiuj}}{R_{abcd}} \partial_v R_{abcd} + \pdv{X^{uiuj}}{V_a} \partial_v V_a + \pdv{X^{uiuj}}{(\nabla_b V_a)} \partial_v(\nabla_b V_a) \right).
        \end{split}
    \end{equation}

    To filter out the valid components from all these, we write schematically
    \begin{equation}
        \underbrace{H_{\text{pJKM}}}_{\text{weight 2}} \quad \sim \underbrace{\bar K_{ij}}_{\text{weight }-1} \times \underbrace{\pdv{X^{uiuj}}{A}}_{\text{weight }(2-k)} \times \underbrace{\partial_v A}_{\text{weight }(1+k)}
    \end{equation}
    where $A \in \{g^{ab}, R_{abcd}, V_a, \nabla_b V_a\}$ is the component of interest. We observe that in this theory, weight of $A$ can be $k = -2,-1,0,1,2$. For $k = 0,1$, the term is manifestly vanishing as any product of two positive weight components is zero at first order, so we ask for $k = -2, -1, 2$. However, for $k = -2, -1$, the term $\pdv{X^{uiuj}}{A}$ is of weight $4,3$, respectively. But everything is constructed from products of components of $g^{ab}, R_{abcd}, V_a, \nabla_b V_a$, each factor can give at most weight $2$, so weight $4,3$ terms are of second order. Therefore, we are left with the only choice $k = 2$. Such statement should readily generalise to theories with higher order derivatives of $V_a$.

    Components satisfying $k = 2$ are 
    \begin{equation}
        g^{uu}, \quad R_{vkvl}, \quad \nabla_v V_v
    \end{equation}
    but $\partial_v g^{uu} = 0$ on the horizon by GNC conditions, so we only proceed with the other two. Taking into account the combinatorics factor $4$ for $\pdv{R_{vkvl}}$, we get 
    \begin{equation}
    \begin{split}
            \delta H_{\text{pJKM}} = - 4 \bar K_{ij} \left(4 \pdv{X^{uiuj}}{R_{vkvl}} \partial_v \delta R_{vkvl} + \pdv{X^{uiuj}}{(\nabla_v V_v)} \partial_v \delta (\nabla_v V_v)\right).
        \end{split}
    \end{equation}
    Evaluating on the horizon $u = 0$
	\begin{equation}
	    \delta R_{vkvl} = - \partial_v \delta K_{kl},
    \end{equation}
    and 
    \begin{equation}
        \delta (\nabla_v V_v) = \partial_v \delta V_v,
    \end{equation}
    we finally have
    \begin{equation}
        \delta H_{\text{pJKM}} = 16 \pdv{L_1}{R_{uiuj}}{R_{vkvl}}\bar K_{ij} \partial_v^2 \delta K_{kl} - 4 \pdv{L_1}{R_{uiuj}}{(\nabla_v V_v)} \bar K_{ij} \partial_v^2 \delta V_v.
    \end{equation}

    Using the product rule, we have
    \begin{equation}
    \begin{split}
            \delta H_{\text{pJKM}} & = \partial_v^2 \left(16 \pdv{L_1}{R_{uiuj}}{R_{vkvl}}\bar K_{ij} \delta K_{kl} - 4 \pdv{L_1}{R_{uiuj}}{(\nabla_v V_v)} \bar K_{ij} \delta V_v \right)\\
            & \quad + 8(\partial_v \bar K_{ij})\left(4 \pdv{L_1}{R_{uiuj}}{R_{vkvl}} \delta R_{vkvl} + \pdv{L_1}{R_{uiuj}}{(\nabla_v V_v)} \delta (\nabla_v V_v) \right)
        \end{split}
    \end{equation}
    where the second line is a zero-boost term.

    Hence we can conclude (by inserting factors of $\sqrt{h}, \delta \sqrt{h}$):
    \begin{equation}
        \boxed{\begin{aligned}
            \delta H_{vv} & = \frac{1}{\sqrt{h}}\partial_v^2 \delta \left(\sqrt{h}\left(4 \pdv{L_1}{R_{uvuv}} + 16 \pdv{L_1}{R_{uiuj}}{R_{vkvl}}\bar K_{ij} K_{kl} - 4 \pdv{L_1}{R_{uiuj}}{(\nabla_v V_v)} \bar K_{ij} V_v\right)\right)\\[0.5em]
            & \quad + D_i \left( \partial_v \delta J^i_{H \text{M}} \right) + \delta H_{\text{ZB}} 
        \end{aligned}}
    \end{equation}
    where
    \begin{equation}
        \begin{split}
            \delta H_{\text{ZB}} & = 4\biggl(R_{viuj}\delta X^{uiuj} + X^{uivj} \delta R_{vivj}+ \frac{1}{2}\omega_i \delta (\nabla_a X^{uiua})\\
            & \qquad \quad - \frac{1}{2} \omega_i \partial_v \delta X^{uvui} + X^{uivj}\partial_v \delta K_{ij} - (\partial_v \bar K_{ij}) \delta X^{uiuj}\\
            & \qquad \quad + 2(\partial_v \bar K_{ij})\left(4 \pdv{X^{uiuj}}{R_{vkvl}} \delta R_{vkvl} + \pdv{X^{uiuj}}{(\nabla_v V_v)} \delta (\nabla_v V_v) \right)\biggr),
        \end{split}
    \end{equation}
    is the true zero-boost term contributed by $\delta H_{vv}$, which will be analysed later.
    \subsubsection{Contributions from $\delta \mathcal{V}_{vv}$}
    Now, for the $\mathcal{V}_{vv}$ term,
    \begin{equation}
    \begin{split}
            \delta \mathcal{V}_{vv} & = \delta \left(\mathcal{E}_v V_v + Z_{c|v} \nabla_v V^c + \nabla^c \left(Z_{v|v}V_c - Z_{c|v}V_v\right)\right)\\
            & = \delta \mathcal{V}^{(1)} + \delta \mathcal{V}^{(2)}
        \end{split}
    \end{equation}
    where we've identified $\delta (\mathcal{E}_v V_v) = 0$, and 
    \begin{equation}
            \delta \mathcal{V}^{(1)} = Z_{u|v} \delta (\nabla_v V^u) + (\nabla_v V^v) \delta Z_{v|v} \label{eq:V-1}
    \end{equation}
    where the two terms are both zero-boost terms, and 
    \begin{equation}
            \delta \mathcal{V}^{(2)} = \delta \nabla^c \left(Z_{v|v}Z_c - Z_{c|v} V_v\right) = \delta \mathcal{V}^{(2,1)} + \delta \mathcal{V}^{(2,2)} + \delta \mathcal{V}^{(2,3)}
    \end{equation}
    where 
    \begin{align}
        \delta \mathcal{V}^{(2,1)} & = \delta \nabla_u (Z_{v|v} V_v - Z_{v|v} V_v) = 0,\\
        \delta \mathcal{V}^{(2,2)} & = \delta \nabla_v \left(Z_{v|v}V^v - Z^{v|u}V_v\right) = V_u \partial_v(\delta Z_{v|v}) + (\nabla_v V^v) \delta Z_{v|v} - Z_{u|v} \delta (\nabla_v V^u), \label{eq:V-2-2}\\
        \delta \mathcal{V}^{(2,3)} & = \delta \nabla_i \left(Z_{v|v}V^i - Z^{i|u}V_v\right) =  D_i (V^i \delta Z_{v|v} ) + \omega_i V^i \delta Z_{v|v}. \label{eq:V-2-3}
    \end{align}

    Then, altogether, we have 
    \begin{equation}
        \delta \mathcal{V}_{vv} = V_u \partial_v\delta Z_{v|v} + \delta \mathcal{V}_{\text{pZB}} + D_i \left( \partial_v \delta J^i_{\mathcal{V}\text{M}} \right)
    \end{equation}
    where $\delta \mathcal{V}_{\text{pZB}}$ are preliminary zero-boost corrections 
    \begin{equation}
        \delta \mathcal{V}_{\text{pZB}} = (2\nabla_v V^v + \omega_i V^i) \delta Z_{v|v} = 2 (\partial_v V_u) \delta Z^{u|u},
    \end{equation}
    which will be analysed later, and 
    \begin{equation}
        \partial_v \delta J_{\mathcal{V}\text{M}}^i = V^i \delta Z_{v|v}
    \end{equation}
    is the manifest $\mathcal{V}_{ab}$ contribution to the entropy current.

    Here, we have shown that the ``inauspicious terms'' vanish explicitly for one-derivative case
    \begin{equation}
        \partial_v \mathcal{P} = C_{(0)} \partial_v \delta V_v = 0.
    \end{equation}

    The first term of $\delta \mathcal{V}_{vv}$ is a pre-JKM term
    \begin{equation}
        \delta \mathcal{V}_{\text{pJKM}} = V_u \partial_v \delta Z_{v|v}
    \end{equation}
    and we now analyse it using similar arguments as for $\delta H_{\text{pJKM}}$. Using the chain rule 
    \begin{equation}
    \begin{split}
            \delta \mathcal{V}_{\text{pJKM}} & = V_u \delta \left(\pdv{Z_{v|v}}{g^{ab}} \partial_v g^{ab} + \pdv{Z_{v|v}}{R_{abcd}} \partial_v R_{abcd} + \pdv{Z_{{v|v}}}{V_a} \partial_v V_a + \pdv{Z_{v|v}}{(\nabla_b V_a)} \partial_v(\nabla_b V_a)\right)
        \end{split}
    \end{equation}
    and identifying the non-vanishing weight 2 components $\delta R_{vivj} = - \partial_v \delta K_{ij}$ and $\delta (\nabla_v V_v) = \partial_v \delta V_v$, we have 
    \begin{equation}
    \begin{split}
            \delta \mathcal{V}_{\text{pJKM}} & = - 4 \pdv{L_1}{(\nabla_u V_u)}{R_{vivj}} V_u \partial_v^2 \delta K_{ij} + \pdv{L_1}{(\nabla_u V_u)}{(\nabla_v V_v)} V_u \partial_v^2 \delta V_v\\
            & = \partial_v^2 \left(- 4 \pdv{L_1}{(\nabla_u V_u)}{R_{vivj}} V_u \delta K_{ij} + \pdv{L_1}{(\nabla_u V_u)}{(\nabla_v V_v)} V_u \delta V_v\right)\\
            & \quad - 2 (\partial_v V_u) \left(4\pdv{L_1}{(\nabla_u V_u)}{R_{vivj}} \delta R_{vivj} + \pdv{L_1}{(\nabla_u V_u)}{(\nabla_v V_v)} \delta (\nabla_v V_v)\right)
        \end{split}
    \end{equation}
    where we also raised the index of $\delta Z_{v|v}$ to $\delta Z^{u|u}$ considering the metric and Christoffel symbols on the horizon.

    Then we can conclude (also inserting $\sqrt{h}, \delta \sqrt{h}$):
    \begin{equation}
        \boxed{\begin{aligned}
            \delta \mathcal{V}_{vv} & = \frac{1}{\sqrt{h}}\partial_v^2 \delta \left(\sqrt{h} \left(- 4 \pdv{L_1}{(\nabla_u V_u)}{R_{vivj}} V_u K_{ij} + \pdv{L_1}{(\nabla_u V_u)}{(\nabla_v V_v)} V_u V_v\right)\right)\\[0.5em]
            & \quad + D_i \left( \partial_v \delta J^i_{\mathcal{V}\text{M}} \right) + \delta \mathcal{V}_{\text{ZB}}
        \end{aligned}}
    \end{equation}
    where 
    \begin{equation}
        \delta \mathcal{V}_{\text{ZB}} = 2 (\partial_v V_u) \left(\delta Z^{u|u}-4\pdv{Z^{u|u}}{R_{vivj}} \delta R_{vivj} - \pdv{Z^{u|u}}{(\nabla_v V_v)} \delta (\nabla_v V_v)\right)
    \end{equation}
    is the true zero-boost correction contributed by $\delta \mathcal{V}_{vv}$, which will also be analysed later.

    \subsubsection{Summary of $\delta E_{vv}$}
    Summarise all above, on the horizon, we have 
    \begin{equation}
        \boxed{\begin{aligned}
            \delta E_{vv} & = \frac{1}{\sqrt{h}}\partial_v^2 \delta \Biggl(\sqrt{h} \biggl(4 \pdv{L_1}{R_{uvuv}} + 16 \pdv{L_1}{R_{uiuj}}{R_{vkvl}}\bar K_{ij} K_{kl} - 4 \pdv{L_1}{R_{uiuj}}{(\nabla_v V_v)} \bar K_{ij} V_v\\
            & \qquad - 4 \pdv{L_1}{(\nabla_u V_u)}{R_{vivj}} V_u K_{ij} + \pdv{L_1}{(\nabla_u V_u)}{(\nabla_v V_v)} V_u V_v \biggr)\Biggr) + D_i \left( \partial_v \delta J_{\text{M}}^i\right) + \delta E_{\text{ZB}} 
        \end{aligned}}
    \end{equation}
    where 
    \begin{equation}
        \delta E_{\text{ZB}} = \delta H_{\text{ZB}} + \delta \mathcal{V}_{\text{ZB}} \quad \text{and} \quad J_{\text{M}}^i = J^i_{H \text{M}} + J^i_{\mathcal{V}\text{M}}
    \end{equation}
    with 
    \begin{equation}
    \begin{split}
            \delta H_{\text{ZB}} & = 4\biggl(R_{viuj}\delta X^{uiuj} + X^{uivj} \delta R_{vivj}+ \frac{1}{2}\omega_i \delta (\nabla_a X^{uiua})\\
            & \qquad - \frac{1}{2} \omega_i \partial_v \delta X^{uvui} + X^{uivj}\partial_v \delta K_{ij} - (\partial_v \bar K_{ij}) \delta X^{uiuj}\\
            & \qquad + 2(\partial_v \bar K_{ij})\left(4 \pdv{X^{uiuj}}{R_{vkvl}} \delta R_{vkvl} + \pdv{X^{uiuj}}{(\nabla_v V_v)} \delta (\nabla_v V_v) \right)\biggr),\\[0.5em]
            \delta \mathcal{V}_{\text{ZB}} & = 2 (\partial_v V_u) \left(\delta Z^{u|u}-4\pdv{Z^{u|u}}{R_{vivj}} \delta R_{vivj} - \pdv{Z^{u|u}}{(\nabla_v V_v)} \delta (\nabla_v V_v)\right)
        \end{split}
    \end{equation}
    and
    \begin{equation}
        \partial_v \delta J_{\text{M}}^i = 4 (\delta (\nabla_a X^{uiua}) + \partial_v \delta X^{uvui}) + V^i \delta Z_{v|v}.
    \end{equation}
    \subsection{Analysis of Zero-Boost Corrections}
    According to a general derivation in \cite{Bhattacharyya:2021jhr} (and a refined version in \cite{Hollands:2022fkn}, reviewed in Appendix \ref{app:zbt}), the zero-boost terms, after removing the Iyer-Wald part, can be identified as a codimension-2 spatial total derivative $\delta E_{\text{ZB}} = D_i \left( \partial_v \delta J_{\text{ZB}}^i\right)$ so that the total entropy current can be found as $J^i = J_{\text{M}}^i + J_{\text{ZB}}^i$. We now demonstrate this result explicitly through our calculations.

    First we look at the zero-boost term from $\delta \mathcal{V}_{vv}$:
    \begin{equation}
        \delta \mathcal{V}_{\text{ZB}} = 2 (\partial_v V_u) \left(\delta Z^{u|u} - 4\pdv{Z^{u|u}}{R_{vivj}} \delta R_{vivj} - \pdv{Z^{u|u}}{(\nabla_v V_v)} \delta (\nabla_v V_v)\right).
    \end{equation}

    Here, $\delta Z^{u|u}$ is of weight 2. We can expand it in terms of perturbations of the fields
    \begin{equation}
        \delta Z^{u|u} = \pdv{Z^{u|u}}{g^{ab}} \delta g^{ab} + \pdv{Z^{u|u}}{R_{abcd}} \delta R_{abcd} + \pdv{Z^{u|u}}{V_a} \delta V_a + \pdv{Z^{u|u}}{(\nabla_b V_a)} \delta (\nabla_b V_a).
    \end{equation}
    Now, if we consider the schematic form 
    \begin{equation}
        \underbrace{\delta Z^{u|u}}_{\text{weight 2}} \sim \underbrace{\pdv{Z^{u|u}}{A}}_{\text{weight }(2-k)} \times \underbrace{\delta A}_{\text{weight }k}
    \end{equation}
    for some component $A$ with weight $k$, we immediately see that the non-vanishing terms of $\mathcal{O}(\epsilon)$ are from $k \geq 2$. All other ones are zero as positive weight background components appear. Then on the horizon
    \begin{equation}
        \delta Z^{u|u} = 4\pdv{Z^{u|u}}{R_{vivj}} \delta R_{vivj} + \pdv{Z^{u|u}}{(\nabla_v V_v)} \delta (\nabla_v V_v)
    \end{equation}
    by identifying the corresponding symmetry factor $4$ of $\delta R_{vivj}$. Finally, we conclude 
    \begin{equation}
        \delta \mathcal{V}_{\text{ZB}} = 0.
    \end{equation}

    Then we look at the zero-boost term from $\delta H_{vv}$:
    \begin{equation}
    \begin{split}
            \delta H_{\text{ZB}} & = 4\biggl(\underbrace{R_{viuj} \delta X^{uiuj}}_{\delta H_{\text{ZB1}}} + \underbrace{\frac{1}{2}\omega_i \delta (\nabla_a X^{uiua})}_{\delta H_{\text{ZB2}}} - \frac{1}{2} \omega_i \partial_v \delta X^{uvui} + \underbrace{X^{uivj}\partial_v \delta K_{ij} + X^{uivj} \delta R_{vivj}}_{\delta H_{\text{ZB3}}}\\
            & \qquad + 2(\partial_v \bar K_{ij})\left(4 \pdv{X^{uiuj}}{R_{vkvl}} \delta R_{vkvl} + \pdv{X^{uiuj}}{(\nabla_v V_v)} \delta (\nabla_v V_v) \right) - (\partial_v \bar K_{ij}) \delta X^{uiuj} \biggr)
        \end{split}
    \end{equation}

    We first calculate 
    \begin{equation}
        R_{viuj} = - \partial_v \bar K_{ij} + \frac{1}{2} D_i \omega_j - \frac{1}{4} \omega_i \omega_j
    \end{equation}
    where we've thrown the term $\sim \bar K K$ as $K_{ij} = 0$ on the background, and recall
    \begin{equation}
        \delta R_{vivj} = - \partial_v \delta K_{ij}.
    \end{equation}

    Immediately, we see 
    \begin{equation}
        \delta H_{\text{ZB3}} = X^{uivj}\partial_v \delta K_{ij} + X^{uivj} \delta R_{vivj} = X^{uivj}(\partial_v \delta K_{ij} - \partial_v \delta K_{ij}) = 0.
    \end{equation}

    We then calculate 
    \begin{equation}
        \delta H_{\text{ZB2}} = \frac{1}{2} \omega_i \delta (\nabla_a X^{uiua}) = \frac{1}{2} \omega_i \partial_v \delta X^{uiuv} + \frac{1}{4} \omega_i \omega_j \delta X^{uiuj} + \frac{1}{2} \omega_j D_i \delta X^{ujui},
    \end{equation}
    and 
    \begin{equation}
        \delta H_{\text{ZB1}} = - (\partial_v \bar K_{ij}) \delta X^{uiuj} + \frac{1}{2} (D_i \omega_j)\delta X^{uiuj} - \frac{1}{4} \omega_i \omega_j \delta X^{uiuj}.
    \end{equation}

    We can also use the same argument above to expand 
    \begin{equation}
        \delta X^{uiuj} = 4 \pdv{X^{uiuj}}{R_{vkvl}} \delta R_{vkvl} + \pdv{X^{uiuj}}{(\nabla_v V_v)} \delta (\nabla_v V_v).
    \end{equation}

    Combining these together and using the symmetries of Riemann tensor, we get 
    \begin{equation}
        \delta H_{\text{ZB}} = 2 (\omega_j D_i \delta X^{uiuj} + (D_i \omega_j) \delta X^{uiuj}) = 2 D_i(\omega_j \delta X^{uiuj}).
    \end{equation}

    Finally,
    \begin{equation}
        \boxed{\delta E_{\text{ZB}} = \delta H_{\text{ZB}} + \delta \mathcal{V}_{\text{ZB}} = 2 D_i (\omega_j \delta X^{uiuj}) = D_i \left( \partial_v \delta J_{\text{ZB}}^i \right)}\,.
    \end{equation}

    Here we get the expected codimension-2 total derivative term 
    \begin{equation}
        \partial_v \delta J_{\text{ZB}}^i = 2 \omega_j \delta X^{uiuj}.
    \end{equation}

    Also, this analysis applies trivially to the $f(\text{Riemann})$ case considered in \cite{Wall:2015raa} by setting $V, \nabla V = 0$.

    \subsection{Increasing Entropy}\label{sec:vec-formula}
    We quickly summarise the results so far: 

    \begin{equation}
        \boxed{\delta E_{vv} = \frac{1}{\sqrt{h}}\partial_v^2 \delta \left( \sqrt{h} (\varsigma_{\text{Wald}} + \varsigma_{\text{JKM}}) \right) + D_i \left( \partial_v \delta J^i \right)}
    \end{equation}
    where 
    \begin{equation}
        \varsigma_{\text{Wald}} = 4 \pdv{L_1}{R_{uvuv}}
    \end{equation}
    is the \emph{Wald entropy density},
    \begin{equation}
    \begin{split}
            \varsigma_{\text{JKM}} & = 16 \pdv{L_1}{R_{uiuj}}{R_{vkvl}}\bar K_{ij} K_{kl} - 4 \pdv{L_1}{R_{uiuj}}{(\nabla_v V_v)} \bar K_{ij} V_v\\
            & \quad - 4 \pdv{L_1}{(\nabla_u V_u)}{R_{vivj}} V_u K_{ij} + \pdv{L_1}{(\nabla_u V_u)}{(\nabla_v V_v)} V_u V_v
        \end{split}
    \end{equation}
    is the \emph{dynamical corrections to Wald entropy density}, and 
    \begin{equation}
        \partial_v \delta J^i = \partial_v \delta J_{\text{M}}^i + \partial_v \delta J_{\text{ZB}}^i = 4(\delta (\nabla_a X^{uiua}) + \partial_v \delta X^{uvui}) + 2 \omega_j \delta X^{uiuj} + V^i \delta Z_{v|v} \label{eq:ent-cur}
    \end{equation}
    is the derivative of the \emph{entropy current}.

    Now, using the same definition as in \cite{Wall:2015raa}, we define our increasing entropy for the perturbed black hole on a horizon time-slice $\mathcal{C}(v)$ through
    \begin{equation}
        \boxed{\partial_v^2 \delta S_{\text{inc}} = - 2 \pi \int_{\mathcal{C}(v)} \dd[D-2]{x} \sqrt{h}\, \delta E_{vv}}\,.
    \end{equation}

    Then 
    \begin{equation}
    \begin{split}
            \partial_v^2 \delta S_{\text{inc}} & = - 2 \pi \partial_v^2 \delta \int_{\mathcal{C}(v)} \dd[D-2]{x} \sqrt{h}\, (\varsigma_{\text{Wald}} + \varsigma_{\text{JKM}})
        \end{split}
    \end{equation}
    where we integrated out $D_i J^i$ by assuming $\mathcal{C}(v)$ is compact.

    Finally, we have, to first order in perturbation,
    \begin{equation}
        \boxed{\begin{aligned}
            S_{\text{inc}} & = - 2 \pi \int_{\mathcal{C}(v)} \dd[D-2]{x} \sqrt{h}\, \biggl(4 \pdv{L_1}{R_{uvuv}} + 16 \pdv{L_1}{R_{uiuj}}{R_{vkvl}}\bar K_{ij} K_{kl}\\
            & \qquad - 4 \pdv{L_1}{R_{uiuj}}{(\nabla_v V_v)} \bar K_{ij} V_v - 4 \pdv{L_1}{(\nabla_u V_u)}{R_{vivj}} V_u K_{ij} + \pdv{L_1}{(\nabla_u V_u)}{(\nabla_v V_v)} V_u V_v \biggr).
        \end{aligned}}\label{eq:entropy-1}
    \end{equation}

    In the presence of some \emph{additional} minimal matter sector obeying the null energy condition as discussed in Section \ref{sec:min-cpl}, the perturbation of gravitational sector is sourced by external matter field satisfying NEC, i.e., $\delta E_{vv} = \delta T_{vv} \geq 0$. If we further impose the teleological boundary condition that the perturbation settles down at late time, i.e., $\partial_v S_\text{inc} \to 0$ as $v \to \infty$, the increasing entropy satisfies a linearised second law
    \begin{equation}
        \boxed{\partial_v S_{\text{inc}} \geq 0.}
    \end{equation}

    In the case of a gauge field $A_a$, the coupling between $A_a$ and gravity is through the gauge-invariant curvature $F_{ab}$ (or $\tr (F_{ab} \cdots)$ for non-Abelian cases). However, $F_{ab}$ is anti-symmetric, hence the explicit dependence on $\nabla_u A_u$ and $\nabla_v A_v$ of the Lagrangian vanishes, i.e.,
    \begin{equation}
        \pdv{L}{(\nabla_a A_b)} = \pdv{L}{F_{cd}} \pdv{F_{cd}}{(\nabla_a A_b)} = \pdv{L}{F_{cd}}\/(\delta^{a}_{c} \delta^{b}_{d} - \delta^{a}_{d} \delta^{b}_{c}) \quad \Rightarrow \quad \pdv{L}{(\nabla_u A_u)} = \pdv{L}{(\nabla_v A_v)} = 0,
    \end{equation}
    therefore, for this specific formula (\ref{eq:entropy-1}), the last three terms vanish. For theories with higher order derivatives, its contribution will in general enter the formula, but that would be another story to tell. Details on these are discussed in \cite{Biswas:2022grc}.

    Another curious feature of the above entropy formula \eqref{eq:entropy-1} is that we may factorise the derivatives on $L$ in the JKM correction as 
    \begin{equation}
        S = S_\text{Wald} + \int \dd[D-2]{x} \sqrt{h} \left(4 \bar K_{ij} \pdv{R_{uiuj}} - V_u \pdv{(\nabla_u V_u)} \right) \left(4 K_{kl} \pdv{R_{vkvl}} - V_v \pdv{(\nabla_v V_v)} \right) L_1.
    \end{equation}
    Such factorisability is not yet understood generally and abstractly, and whether it happens at higher order of derivatives relies on more technical and tedious calculations. It would be an interesting future direction to explore, in order to understand the abstract structure of the increasing entropy.

    \section{Discussion and Outlook}\label{sec:discussion}

    Let us summarise our results.  An increasing entropy $S_\text{inc}$ was defined at first order, for all higher curvature gravities.  Because the entropy can be written in covariant phase space language, it is manifestly covariant.  However, it differs from Wald's original proposal $\int \bm Q$ by a correction term proportional to the pre-symplectic potential $\bm \Theta$, ensuring its invariance under JKM ambiguities.  The resulting formula for $S_\text{inc}$ is equivalent (at first order) to the one obtained by stripping off 2 derivatives from $E_{vv}$ in \cite{Wall:2015raa}. 

    Vector fields can also be included with arbitrary covariant couplings.  For an action of the form $I(\text{Riemann}, V, \nabla V$), there are 3 additional terms relative to Dong's entropy formula \cite{Dong:2013qoa} (Section \ref{sec:vec-formula}):
    \begin{equation}
        \Delta S \propto \int_{\mathcal{C}(v)}\!\!\!\! \dd[D-2]{x} \sqrt{h}\, \biggl( \pdv{L_1}{R_{uiuj}}{(\nabla_v V_v)} \bar K_{ij} V_v + \pdv{L_1}{(\nabla_u V_u)}{R_{vivj}} V_u K_{ij} - \frac{1}{4} \pdv{L_1}{(\nabla_u V_u)}{(\nabla_v V_v)} V_u V_v \biggr).
    \end{equation}
    This formula also works for multiple vector fields if you sum over a flavour index. We also found the ($v$-derivative of) entropy current $V^i \delta Z_{v|v}$ associated with the vector (appearing in \eqref{eq:ent-cur}).  We expect many additional correction terms for a higher number of derivatives --- we do not yet have a simple formula that works at all orders, although the algorithm in Section \ref{sec:prob-term} can in principle be used to identify the entropy for metric-vector-scalar theories with an arbitrary number of derivatives.

    The formula for $S_\text{inc}$ given above in \eqref{eq:entropy-1} is time-reversal symmetric, in the sense that it goes to itself under exchanging $u \leftrightarrow v$.  Although this is plausible physically, from an algebraic point of view it is not obvious why it happened, as the procedure for getting it --- stripping two $\partial_v$ derivatives off $E_{vv}$ on the future horizon --- is not manifestly symmetric in $u$ and $v$.  It would be illuminating to find a conceptual reason why this must happen generally (assuming it does), even if we allow more derivatives.

    It would also be interesting to consider cases with non-minimally coupled matter fields with spin higher than 1.\footnote{Aside from the massless graviton, which we are already allowing at linear order.} We will call these fields ``higher-spin fields'' for now. Similar to vector fields, we can use the constraint form $\delta \bm C_k$ to define an entropy which satisfies the linearised second law. This now involves the equation of motion of the higher-spin field in addition to the metric equation of motion, i.e., the second derivative of entropy reads
    \begin{equation}
        \partial_v^2 \delta S \sim - \int \dd[D-2]{x} \sqrt{h} \left(\delta E_{vv} + \lambda \delta (\mathcal{E}_{v}^{\cdots} \Phi_{v\cdots})\right)
    \end{equation}
    where $\Phi$ is the schematic higher-spin field and $\mathcal{E}$ is its equation of motion. $\lambda$ is some theory-dependent constant. The dots are the corresponding index contractions. Here, the additional $\delta (\mathcal{E}_{v}^{\cdots} \Phi_{v\cdots})$ term may not vanish at first order in perturbation. E.g., for a spin-3 field $\Phi_{abc}$, it includes terms such as:
    \begin{equation}
        \delta (\mathcal E\indices{_v^{ab}} \Phi_{v ab})\quad \supset \quad \mathcal E_{vuu} \delta \Phi_{vvv},\, \mathcal E\indices{_{vu}^i} \delta \Phi_{vvi},\, \Phi_{vuu} \delta \mathcal E_{vvv}.
    \end{equation}
    In order to show that the entropy is constant we will also need to assume the higher spin equation of motion is satisfied at both zeroth and first order.  That is, the entropy is constant when the perturbed geometry and matter fields satisfy equations $\delta E = 0$, ${\cal E} = 0$, and $\delta \mathcal E = 0$. If a minimal matter sector obeying null energy condition is present, we can define an entropy that is strictly non-decreasing.  This higher spin case will be discussed in more detail in an upcoming paper \cite{Yan}.

    Besides the generalisation to higher-spin matter, what happens at second order of the perturbation is also worth investigating --- it is the exact order where a non-trivial second law manifests (as at linear order the most general second law really means constancy).  At this order we believe it is not true that every higher curvature gravity theory will obey a second law.  There will be substantive constraints on what theories are allowed.  \cite{Hollands:2022fkn} studied this case using EFT assumptions, and subsequently a non-perturbative second law was proven by Davies and Reall \cite{Davies:2023qaa, Davies:2024fut} using EFT constraints. However, we believe that there may be alternative constraints to EFT assumptions, and there exists special theories where, like GR, an exact second law holds. This seems especially likely in higher spin supergravities (e.g., truncating the superstring action at some order in $\alpha^\prime$), where the positivity properties of supersymmetry may be useful for ensuring a second law.

    Another direction is to relate the entropy formula (\ref{eq:entropy-1}) obtained in this article with the holographic entanglement entropy calculated via the action of a higher curvature bulk theory in the presence of vector fields. In $f(\text{Riemann})$ theory, there is an exact agreement between the holographic entanglement entropy by Dong (Dong entropy) \cite{Dong:2013qoa} with the increasing entropy \cite{Wall:2015raa}. It would be interesting if we could do the calculations following Dong's procedure \cite{Dong:2013qoa} and/or a refined version in the appendix of \cite{Dong:2019piw} for the vector fields, and then see if the agreement still holds.  It would also be useful to find a more general abstract argument concerning why these entropies agree.

    An additional important question concerns the field redefinition covariance of $S$.  Although the formula for $S$ is not obviously field redefinition invariant, we expect that --- to the extent that it is not --- such differences will make no difference to whether the second law is true.  (This might result in a family of increasing entropies, rather than a single formula.)  A simple example concerns the field redefinition of pure gravity (no matter) by $g_{ab} \to g_{ab} + \epsilon R_{ab}$ in which case an $\epsilon R_{ab} R^{ab}$ term appears in the Lagrangian.\footnote{And also an $\epsilon R^2$ term which does not raise additional complications.}  The $S_\text{inc}$ of this term includes not just $R_{uv}$ (which vanishes on a GR solution) but also $K \bar{K}$ which cannot be set to 0 using the GR equation of motion.  However, the correction vanishes for a holographic entropy surface which satisfies the stationary condition $K = 
    \bar{K} = 0$.  On a future black hole horizon ($u = 0$, $v \ne 0$), $\bar{K}$ does not vanish but $K$ is proportional to $\theta$, the increase of entropy $\partial_v S$ in GR.  We believe that, in perturbation theory, the presence of such a term cannot change whether the entropy increases or not. 

    In considering such field redefinitions, one would also need to consider carefully the effects of redefining $g_{vv}$ and $g_{vi}$, as this would change the definition of a null surface, and hence the location of the horizon!  Relatedly, in higher curvature gravity theories it is possible for the characteristic surfaces of gravitational fields to lie outside the horizon, so that the actual causal structure of the theory is given by not by the lightcone of $g_{ab}$, but another surface.  This surface can in general be a Finsler geometry \cite{Javaloyes:2018lex}, as for example in the case of Gauss-Bonnet where different polarisations of the graviton propagate at different maximal speeds in different directions \cite{Camanho:2014apa,Papallo:2015rna}.  This would affect the physical definition of an event horizon, as the boundary of the location from which no physical signals can escape to infinity.  In this work, we have always defined the horizon in the naive way using the metric $g_{ab}$, but if the construction can 
    be shown to be invariant under arbitrary field redefinitions, that would presumably allow this restriction to be lifted, as one could then define $g_{vv}$ and $g_{vi}$ to match the propagation of the fastest field outward on the horizon.

    \section*{Acknowledgements}
    The authors are grateful to Amr Ahmadain, Luca Ciambelli, Prateksh Dhivakar, Laurent Freidel, Ted Jacobson, Nilay Kundu, Zhihan Liu, Prahar Mitra, Harvey Reall, Ronak Soni, Manus Visser, Bob Wald, Diandian Wang, Zi-Yue Wang and Houwen Wu for helpful discussions. This work was supported in part by AFOSR grant FA9550-19-1-0260 ``Tensor Networks and Holographic Spacetime'', STFC grant ST/P000681/1 ``Particles, Fields and Extended Objects'', an Isaac Newton Trust Early Career grant, and NSF PHY-2309135 to the Kavli Institute for Theoretical Physics (KITP). ZY is also supported by an Internal Graduate Studentship of Trinity College, Cambridge.

    \appendix
    \section{Zero-Boost Terms}
    \label{app:zbt}

    Here, we briefly review the treatment of zero-boost terms in \cite{Hollands:2022fkn}. Consider the schematic structure of the null-null component of equation of motion $E_{vv}$ at first-order in perturbation
    \begin{equation}
        \delta E_{vv} = \frac{1}{\sqrt{h}} \partial_v^2 \left(\sqrt{h} \sum_I \sum_{k\geq 0} A_{(-k)}^I \delta B_{(k)}^I\right) + D_i \tilde J^i
    \end{equation}
    where we assume that the inauspicious terms discussed in Section \ref{sec:prob-term} are already resolved. For terms with weight $k > 0$, we can see they are manifestly exact in $\delta$ at first-order 
    \begin{equation}
        A^I_{(-k)} \delta B^I_{(k)} = A^I_{(-k)} \delta B^I_{(k)} + B^I_{(k)} \delta A^I_{(-k)}  = \delta \left(A^I_{(-k)} B^I_{(k)}\right), \quad k>0
    \end{equation}
    by quoting Claim \ref{clm:2}. The problem with zero-boost terms are that they are not manifestly exact in $\delta$ at first-order, i.e., 
    \begin{equation}
        \sum_I A^I_{(0)} \delta B^I_{(0)} \overset{?}{=} \delta (\cdots)
    \end{equation}
    making it difficult to render the integral 
    \begin{equation}
        \int \dd[D-2]{x} \sqrt{h}\, \delta E_{vv} \overset{?}{=} \delta (\cdots)
    \end{equation}
    as a manifestly exact variation (of derivative of the entropy, as we saw earlier). To resolve the above concern, \cite{Wall:2015raa} quotes the physical version of the first law to claim that the zero-boost term corresponds to the Iyer-Wald entropy defined in \cite{Iyer:1994ys}. However, a more local formula was not given until the appearance of \cite{Hollands:2022fkn}.

    In \cite{Hollands:2022fkn}, this is treated by invoking the powerful covariant phase space formalism. Start with identity (\ref{eq:magic-id}) and integrate it on the horizon $\mathcal{H}$ from the bifurcation surface $\mathcal{C}(0)$ to $\mathcal{C}(\infty)$, we get 
    \begin{equation}
        \int_{\frac{1}{2}\mathcal{H}} \delta \bm C_{\xi} = \left(\int_{\mathcal{C}(+\infty)} - \int_{\mathcal{C}(0)} \right) \left(\delta \bm Q_\xi - \iota_\xi \bm \Theta[\delta g, \delta V]\right) \label{eq:zbt-magic-id}
    \end{equation}
    As this is an off-shell identity, we are free to make the assumption that the perturbation is compactly supported around the bifurcation surface $v=0$. We first investigate the LHS of the equation (\ref{eq:zbt-magic-id}):
    \begin{equation}
        \begin{split}
            \text{LHS of (\ref{eq:zbt-magic-id})} & = - \int_{v=0}^{+\infty} \dd{v} \int_{\mathcal{C}(v)} \dd[D-2]{x} \sqrt{h}\, v\, \delta E_{vv}\\
            & = - \underbrace{\left[v \int_{\mathcal{C}(v)} \dd[D-2]{x} \sqrt{h} \cdots\right]_{v=0}^{+\infty}}_{0} + \left[\int_{\mathcal{C}(v)} \dd[D-2]{x} \sqrt{h}\, \sum_I \sum_{k\geq 0} A^I_{(-k)}\delta B^I_{(k)}\right]_{v=0}^{\infty}\\
            & = -\int_{\mathcal{C}(0)} \dd[D-2]{x} \sqrt{h}\, \sum_I \sum_{k\geq 0} A^I_{(-k)}\delta B^I_{(k)}.
        \end{split}
    \end{equation}
    Note that we used the assumption that the time-slices of the horizon are compact. Here, we recall the result in the proof of Claim \ref{clm:2} that the background quantities scale as 
    \begin{equation}
        A^I_{(-k)} \sim v^k, \quad k \geq 0. \label{eq:A-scale}
    \end{equation}
    So, when evaluated at $v=0$, only the zero-boost term is left, so the whole identity evaluates as
    \begin{equation}
        \int_{\mathcal{C}(0)} \dd[D-2]{x} \sqrt{h}\, \sum_I A^I_{(0)} \delta B^I_{(0)} = \int_{\mathcal{C}(0)} \delta \bm Q_\xi = \delta S_{\text{IW}}\big|_{v=0}
    \end{equation}
    where the RHS is identified as the Iyer-Wald entropy which contains all the boost-invariant part (by the same argument as (\ref{eq:A-scale}), etc.) of the perturbed Wald entropy according to \cite{Iyer:1994ys}. 

    One might wonder if this identity only holds at $v=0$. The answer is that it should hold for any $v$ by a clever argument as follows. Firstly, the background quantities $A^I_{(0)}$ are constant with respect to $v$ along the horizon by the Killing weight argument. Secondly, for the perturbation part, we can just translate the compactly supported perturbations via a diffeomorphism $T_a: v \mapsto v + a$ to obtain another perturbation $\delta B^I_{(0)} \to T^*_a \delta B^I_{(0)}$ via pushforward/pullback which is compactly supported around $v = a$. Since the identity is off-shell, it should hold for any perturbation. We then obtain 
    \begin{equation}
        \int_{\mathcal{C}(a)} \dd[D-2]{x} \sqrt{h}\, \sum_I A^I_{(0)} T_a^*\delta B^I_{(0)} = \int_{\mathcal{C}(a)} T_a^*\delta \bm Q_\xi = \delta S_{\text{IW}}\big|_{v=a}
    \end{equation}
    where we identified $\mathcal{C}(a) = T_a^* \mathcal{C}(0)$. Locally, this suggest that 
    \begin{equation}
        \sum_I A^I_{(0)} \delta B^I_{(0)} = \delta \varsigma_\text{IW} + D_i J_{\text{ZB}}^i
    \end{equation}
    where $\varsigma_\text{IW}$ is the density of the Iyer-Wald entropy, and the last term is a codimension-2 total derivative, which integrates out on compact $\mathcal{C}(v)$. The form of the current $J^i_\text{ZB}$ can be obtained via a similar analysis as in Section \ref{sec:prob-term}.

    \section{Some Details of Calculations in Section \ref{sec:example1}}\label{app:calc}
    \subsection{Justification of Equation (\ref{eq:Y_ab})}
    We start from the Lie derivative of the Lagrangian with respect to some diffeomorphism generator $\xi$
    \begin{equation}
        \pounds_\xi L_1 = X^{abcd} \pounds_\xi R_{abcd} + Y_{ab} \pounds_\xi g^{ab} + Z^a \pounds_\xi V_a + Z^{a|b} \pounds_\xi (\nabla_b V_a). \label{eq:L1-Lie}
    \end{equation}
    The LHS evaluates to 
    \begin{equation}
        \text{LHS of (\ref{eq:L1-Lie})} = \xi^e \nabla_e L_1 = \xi^e \left(X^{abcd} \nabla_e R_{abcd} + Z^a \nabla_e V_a + Z^{a|b} \nabla_e (\nabla_b V_a)\right)
    \end{equation}
    while the RHS is 
    \begin{equation}
    \begin{split}
            \text{RHS of (\ref{eq:L1-Lie})} & = X^{abcd} \pounds_\xi R_{abcd} + Y_{ab} \pounds_\xi g^{ab} + Z^a \pounds_\xi V_a + Z^{a|b} \pounds_\xi (\nabla_b V_a)\\
            & = X^{abcd}\/(\xi^e \nabla_e R _{abcd} + 4 (\nabla_a \xi^e) R _{ebcd}) - 2 Y_{ab} \nabla^a \xi^b + Z^a \left(\xi^e \nabla_e V_a + V_e \nabla_a \xi^e\right)\\
            & \quad + Z^{a|b} (\xi^e \nabla_e (\nabla_b V_a) + (\nabla_b V_e) \nabla_a \xi^e + (\nabla_e V_a) \nabla_b \xi^e ).\\
        \end{split}
    \end{equation}
    
    Cancelling the common terms on both sides we have 
    \begin{equation}
        (\nabla^a \xi^b)(4 X\indices{_a^{cde}}R_{bcde} - 2 Y_{ab} + Z_a V_b + Z_{a|c}\nabla^c V_b + Z_{c|a}\nabla_b V^c) = 0
    \end{equation}
    which should hold for all $\xi$. Hence (note that $Y_{ab}$ only captures the totally symmetrised part)
    \begin{equation}
        Y_{ab} = 2 X\indices{_{(a}^{cde}}R_{b)cde} + \frac{1}{2} (Z_{(a} V_{b)} + Z_{(a|c}\nabla^c V_{b)} + Z_{c|(a}\nabla_{b)} V^c).
    \end{equation}

    This should be an identity as it is essentially a theory-dependent functional structure which is independent of variations in the configuration space.
    
    \subsection{Omitted Derivations in Section \ref{sec:example1}}

    We present the omitted details for the calculations in Section \ref{sec:example1}. We try to make the cancellations clear. These would fill up the gaps in equations \eqref{eq:H-2}, \eqref{eq:H-1-1}, \eqref{eq:dv-delta-D-X}, \eqref{eq:V-1}, \eqref{eq:V-2-2}, and \eqref{eq:V-2-3}.
   
    \begin{align*}
        \delta H^{(2)} & = 2 \delta (X\indices{_v ^{cde}} R_{vcde})\\
        & = 4 (R_{viuj} \delta X\indices{_v^{iuj}} + X\indices{_v^{ivj}} \delta R_{vivj})+ \underbrace{4 \delta (X\indices{_v^{iuv}} R_{viuv}) + 2 \delta (X\indices{_v^{ijk}} R_{vijk})}_{0}\\
        \delta H^{(1,1)} & = 4 \delta (\partial_v (\nabla_a X^{uvua}))\\
        & = 4 \biggl(\partial_v^2 \delta X_{uvuv} - \frac{1}{2} \omega_i \partial_v \delta X^{uvui} +  D_i (\partial_v \delta X^{uvui})\\
        & \qquad +  X^{uivj} \partial_v \delta K_{ij} - \partial_v (\bar K_{ij}) \delta X^{uiuj} -  \bar K_{ij} \partial_v \delta X^{uiuj} +  X^{uvuv} \partial_v \delta K \biggr)\\
        & = 4 \biggl(\frac{1}{\sqrt{h}}\partial_v^2 \delta \left(\sqrt{h}\, X_{uvuv}\right) - \frac{1}{2} \omega_i \partial_v \delta X^{uvui} +  D_i (\partial_v \delta X^{uvui})\\
        & \qquad +  X^{uivj} \partial_v \delta K_{ij} - \partial_v (\bar K_{ij}) \delta X^{uiuj} -  \bar K_{ij} \partial_v \delta X^{uiuj} \biggr)\\
        \partial_v (\delta D_i X^{uvui}) & = \partial_v (\partial_i \delta X^{uvui} + \delta (\conn{i}{i}{j} X^{uvuj}))\\
        & = \partial_i \partial_v \delta X^{uvui} + \underbrace{\partial_v(\delta \conn{i}{i}{j})X^{uvuj} + \partial_v(\conn{i}{i}{j})\delta X^{uvuj} + \delta \conn{i}{i}{j} \partial_v X^{uvuj}}_{0} + \conn{i}{i}{j} \partial_v \delta X^{uvuj}\\
        & = D_i(\partial_v \delta X^{uvui})\\
        \delta \mathcal{V}^{(1)} & = Z_{u|v} \delta (\nabla_v V^u) + (\nabla_v V^v) \delta Z_{v|v} + \underbrace{\delta (Z_{i|v} \nabla_v V^i)}_{0}\\
        \delta \mathcal{V}^{(2,2)} & = V_u \delta (\nabla_v Z_{v|v}) + (\nabla_v V^v) \delta Z_{v|v} - \underbrace{\delta (\nabla_v(Z^{v|u}) V_v)}_{0} - Z_{u|v} \delta (\nabla_v V^u)\\
        & = V_u \partial_v(\delta Z_{v|v}) + (\nabla_v V^v) \delta Z_{v|v} - Z_{u|v} \delta (\nabla_v V^u)\\
        \delta \mathcal{V}^{(2,3)} & = D_i (V^i \delta Z_{v|v} - \underbrace{\delta (Z\indices{^i_{|v}} V_v)}_{0}) - \underbrace{\delta (K\indices{_i^j}(Z_{j|v}V^i - Z\indices{^i_{|j}} V_v))}_{0} + \frac{1}{2} \omega_i (V^i \delta Z_{v|v} - \underbrace{\delta (Z\indices{^i_{|v}} V_v)}_{0})\\
        & \qquad - \underbrace{\delta (K\indices{_i^j} (Z_{v|j}V^i - Z\indices{^i_{|v}} V_j))}_{0} + \frac{1}{2} \omega_i (V^i \delta Z_{v|v} - \underbrace{\delta (Z\indices{^i_{|v}} V_v)}_{0})\\
        & \qquad + \underbrace{\delta (K (Z_{v|v}V^v - Z\indices{^v_{|v}} V_v))}_{0} + \underbrace{\delta (\bar K (Z_{v|v}V^u - Z\indices{^u_{|v}} V_v))}_{0}\\
        & = D_i (V^i \delta Z_{v|v} ) + \omega_i V^i \delta Z_{v|v}.
    \end{align*}

    \bibliographystyle{JHEP}
    \bibliography{ppr1}

\providecommand{\href}[2]{#2}\begingroup\raggedright\begin{thebibliography}{10}

\bibitem{Wall:2015raa}
A.C.~Wall, \emph{{A Second Law for Higher Curvature Gravity}},
  \href{https://doi.org/10.1142/S0218271815440149}{\emph{Int. J. Mod. Phys. D}
  {\bfseries 24} (2015) 1544014}
  [\href{https://arxiv.org/abs/1504.08040}{{\ttfamily 1504.08040}}].

\bibitem{Hawking:1971tu}
S.W.~Hawking, \emph{{Gravitational radiation from colliding black holes}},
  \href{https://doi.org/10.1103/PhysRevLett.26.1344}{\emph{Phys. Rev. Lett.}
  {\bfseries 26} (1971) 1344}.

\bibitem{Bhattacharyya:2021jhr}
S.~Bhattacharyya, P.~Dhivakar, A.~Dinda, N.~Kundu, M.~Patra and S.~Roy,
  \emph{{An entropy current and the second law in higher derivative theories of
  gravity}}, \href{https://doi.org/10.1007/JHEP09(2021)169}{\emph{JHEP}
  {\bfseries 09} (2021) 169}
  [\href{https://arxiv.org/abs/2105.06455}{{\ttfamily 2105.06455}}].

\bibitem{Hollands:2022fkn}
S.~Hollands, A.D.~Kov\'acs and H.S.~Reall, \emph{{The second law of black hole
  mechanics in effective field theory}},
  \href{https://doi.org/10.1007/JHEP08(2022)258}{\emph{JHEP} {\bfseries 08}
  (2022) 258} [\href{https://arxiv.org/abs/2205.15341}{{\ttfamily
  2205.15341}}].

\bibitem{Wald:1993nt}
R.M.~Wald, \emph{{Black hole entropy is the Noether charge}},
  \href{https://doi.org/10.1103/PhysRevD.48.R3427}{\emph{Phys. Rev. D}
  {\bfseries 48} (1993) R3427}
  [\href{https://arxiv.org/abs/gr-qc/9307038}{{\ttfamily gr-qc/9307038}}].

\bibitem{Iyer:1994ys}
V.~Iyer and R.M.~Wald, \emph{{Some properties of Noether charge and a proposal
  for dynamical black hole entropy}},
  \href{https://doi.org/10.1103/PhysRevD.50.846}{\emph{Phys. Rev. D} {\bfseries
  50} (1994) 846} [\href{https://arxiv.org/abs/gr-qc/9403028}{{\ttfamily
  gr-qc/9403028}}].

\bibitem{Jacobson:1993vj}
T.~Jacobson, G.~Kang and R.C.~Myers, \emph{{On black hole entropy}},
  \href{https://doi.org/10.1103/PhysRevD.49.6587}{\emph{Phys. Rev. D}
  {\bfseries 49} (1994) 6587}
  [\href{https://arxiv.org/abs/gr-qc/9312023}{{\ttfamily gr-qc/9312023}}].

\bibitem{Biswas:2022grc}
P.~Biswas, P.~Dhivakar and N.~Kundu, \emph{{Non-minimal coupling of scalar and
  gauge fields with gravity: an entropy current and linearized second law}},
  \href{https://doi.org/10.1007/JHEP12(2022)036}{\emph{JHEP} {\bfseries 12}
  (2022) 036} [\href{https://arxiv.org/abs/2206.04538}{{\ttfamily
  2206.04538}}].

\bibitem{Kolekar:2012tq}
S.~Kolekar, T.~Padmanabhan and S.~Sarkar, \emph{{Entropy Increase during
  Physical Processes for Black Holes in Lanczos-Lovelock Gravity}},
  \href{https://doi.org/10.1103/PhysRevD.86.021501}{\emph{Phys. Rev. D}
  {\bfseries 86} (2012) 021501}
  [\href{https://arxiv.org/abs/1201.2947}{{\ttfamily 1201.2947}}].

\bibitem{Sarkar:2013swa}
S.~Sarkar and A.C.~Wall, \emph{{Generalized second law at linear order for
  actions that are functions of Lovelock densities}},
  \href{https://doi.org/10.1103/PhysRevD.88.044017}{\emph{Phys. Rev. D}
  {\bfseries 88} (2013) 044017}
  [\href{https://arxiv.org/abs/1306.1623}{{\ttfamily 1306.1623}}].

\bibitem{Bhattacharjee:2015yaa}
S.~Bhattacharjee, S.~Sarkar and A.C.~Wall, \emph{{Holographic entropy increases
  in quadratic curvature gravity}},
  \href{https://doi.org/10.1103/PhysRevD.92.064006}{\emph{Phys. Rev. D}
  {\bfseries 92} (2015) 064006}
  [\href{https://arxiv.org/abs/1504.04706}{{\ttfamily 1504.04706}}].

\bibitem{Deo:2023vvb}
I.~Deo, P.~Dhivakar and N.~Kundu, \emph{{Entropy-current for dynamical black
  holes in Chern-Simons theories of gravity}},
  \href{https://doi.org/10.1007/JHEP11(2023)114}{\emph{JHEP} {\bfseries 11}
  (2023) 114} [\href{https://arxiv.org/abs/2306.12491}{{\ttfamily
  2306.12491}}].

\bibitem{Bhattacharya:2019qal}
J.~Bhattacharya, S.~Bhattacharyya, A.~Dinda and N.~Kundu, \emph{{An entropy
  current for dynamical black holes in four-derivative theories of gravity}},
  \href{https://doi.org/10.1007/JHEP06(2020)017}{\emph{JHEP} {\bfseries 06}
  (2020) 017} [\href{https://arxiv.org/abs/1912.11030}{{\ttfamily
  1912.11030}}].

\bibitem{Guedens:2011dy}
R.~Guedens, T.~Jacobson and S.~Sarkar, \emph{{Horizon entropy and higher
  curvature equations of state}},
  \href{https://doi.org/10.1103/PhysRevD.85.064017}{\emph{Phys. Rev. D}
  {\bfseries 85} (2012) 064017}
  [\href{https://arxiv.org/abs/1112.6215}{{\ttfamily 1112.6215}}].

\bibitem{Bhattacharyya:2022njk}
S.~Bhattacharyya, P.~Jethwani, M.~Patra and S.~Roy, \emph{{Reparametrization
  symmetry of local entropy production on a dynamical horizon}},
  \href{https://doi.org/10.1103/PhysRevD.108.104032}{\emph{Phys. Rev. D}
  {\bfseries 108} (2023) 104032}
  [\href{https://arxiv.org/abs/2204.08447}{{\ttfamily 2204.08447}}].

\bibitem{Eling:2004dk}
C.~Eling, T.~Jacobson and D.~Mattingly, \emph{{Einstein-Aether theory}},  in
  \emph{{Deserfest: A Celebration of the Life and Works of Stanley Deser}},
  pp.~163--179, 10, 2004 [\href{https://arxiv.org/abs/gr-qc/0410001}{{\ttfamily
  gr-qc/0410001}}].

\bibitem{Eling:2006ec}
C.~Eling and T.~Jacobson, \emph{{Black Holes in Einstein-Aether Theory}},
  \href{https://doi.org/10.1088/0264-9381/23/18/009}{\emph{Class. Quant. Grav.}
  {\bfseries 23} (2006) 5643}
  [\href{https://arxiv.org/abs/gr-qc/0604088}{{\ttfamily gr-qc/0604088}}].

\bibitem{Dubovsky:2006vk}
S.L.~Dubovsky and S.M.~Sibiryakov, \emph{{Spontaneous breaking of Lorentz
  invariance, black holes and perpetuum mobile of the 2nd kind}},
  \href{https://doi.org/10.1016/j.physletb.2006.05.074}{\emph{Phys. Lett. B}
  {\bfseries 638} (2006) 509}
  [\href{https://arxiv.org/abs/hep-th/0603158}{{\ttfamily hep-th/0603158}}].

\bibitem{Eling:2007qd}
C.~Eling, B.Z.~Foster, T.~Jacobson and A.C.~Wall, \emph{{Lorentz violation and
  perpetual motion}},
  \href{https://doi.org/10.1103/PhysRevD.75.101502}{\emph{Phys. Rev. D}
  {\bfseries 75} (2007) 101502}
  [\href{https://arxiv.org/abs/hep-th/0702124}{{\ttfamily hep-th/0702124}}].

\bibitem{Lovelock:1971yv}
D.~Lovelock, \emph{{The Einstein tensor and its generalizations}},
  \href{https://doi.org/10.1063/1.1665613}{\emph{J. Math. Phys.} {\bfseries 12}
  (1971) 498}.

\bibitem{Tasinato:2014eka}
G.~Tasinato, \emph{{Cosmic Acceleration from Abelian Symmetry Breaking}},
  \href{https://doi.org/10.1007/JHEP04(2014)067}{\emph{JHEP} {\bfseries 04}
  (2014) 067} [\href{https://arxiv.org/abs/1402.6450}{{\ttfamily 1402.6450}}].

\bibitem{Heisenberg:2014rta}
L.~Heisenberg, \emph{{Generalization of the Proca Action}},
  \href{https://doi.org/10.1088/1475-7516/2014/05/015}{\emph{JCAP} {\bfseries
  05} (2014) 015} [\href{https://arxiv.org/abs/1402.7026}{{\ttfamily
  1402.7026}}].

\bibitem{Heisenberg:2016eld}
L.~Heisenberg, R.~Kase and S.~Tsujikawa, \emph{{Beyond generalized Proca
  theories}}, \href{https://doi.org/10.1016/j.physletb.2016.07.052}{\emph{Phys.
  Lett. B} {\bfseries 760} (2016) 617}
  [\href{https://arxiv.org/abs/1605.05565}{{\ttfamily 1605.05565}}].

\bibitem{Hollands:2024vbe}
S.~Hollands, R.M.~Wald and V.G.~Zhang, \emph{{The Entropy of Dynamical Black
  Holes}},  \href{https://arxiv.org/abs/2402.00818}{{\ttfamily 2402.00818}}.

\bibitem{Visser:2024pwz}
M.R.~Visser and Z.~Yan, \emph{{Properties of Dynamical Black Hole Entropy}},
  \href{https://arxiv.org/abs/2403.07140}{{\ttfamily 2403.07140}}.

\bibitem{Rignon-Bret:2023fjq}
A.~Rignon-Bret, \emph{{Second law from the Noether current on null
  hypersurfaces}},
  \href{https://doi.org/10.1103/PhysRevD.108.044069}{\emph{Phys. Rev. D}
  {\bfseries 108} (2023) 044069}
  [\href{https://arxiv.org/abs/2303.07262}{{\ttfamily 2303.07262}}].

\bibitem{Ciambelli:2023mir}
L.~Ciambelli, L.~Freidel and R.G.~Leigh, \emph{{Null Raychaudhuri: canonical
  structure and the dressing time}},
  \href{https://doi.org/10.1007/JHEP01(2024)166}{\emph{JHEP} {\bfseries 01}
  (2024) 166} [\href{https://arxiv.org/abs/2309.03932}{{\ttfamily
  2309.03932}}].

\bibitem{Padmanabhan:2011ex}
T.~Padmanabhan, \emph{{Some aspects of field equations in generalised theories
  of gravity}}, \href{https://doi.org/10.1103/PhysRevD.84.124041}{\emph{Phys.
  Rev. D} {\bfseries 84} (2011) 124041}
  [\href{https://arxiv.org/abs/1109.3846}{{\ttfamily 1109.3846}}].

\bibitem{Dong:2013qoa}
X.~Dong, \emph{{Holographic Entanglement Entropy for General Higher Derivative
  Gravity}}, \href{https://doi.org/10.1007/JHEP01(2014)044}{\emph{JHEP}
  {\bfseries 01} (2014) 044} [\href{https://arxiv.org/abs/1310.5713}{{\ttfamily
  1310.5713}}].

\bibitem{Yan}
Z.~Yan, ``{Black Hole Entropy in Higher Curvature Gravity with Higher Spin
  Fields}.'' In preparation.

\bibitem{Davies:2023qaa}
I.~Davies and H.S.~Reall, \emph{{Nonperturbative Second Law of Black Hole
  Mechanics in Effective Field Theory}},
  \href{https://doi.org/10.1103/PhysRevLett.132.171402}{\emph{Phys. Rev. Lett.}
  {\bfseries 132} (2024) 171402}
  [\href{https://arxiv.org/abs/2312.07659}{{\ttfamily 2312.07659}}].

\bibitem{Davies:2024fut}
I.~Davies, \emph{{Zeroth and second laws of black hole mechanics in
  Einstein-Maxwell-scalar effective field theory}},
  \href{https://doi.org/10.1103/PhysRevD.109.084051}{\emph{Phys. Rev. D}
  {\bfseries 109} (2024) 084051}
  [\href{https://arxiv.org/abs/2401.13075}{{\ttfamily 2401.13075}}].

\bibitem{Dong:2019piw}
X.~Dong and D.~Marolf, \emph{{One-loop universality of holographic codes}},
  \href{https://doi.org/10.1007/JHEP03(2020)191}{\emph{JHEP} {\bfseries 03}
  (2020) 191} [\href{https://arxiv.org/abs/1910.06329}{{\ttfamily
  1910.06329}}].

\bibitem{Javaloyes:2018lex}
M.A.~Javaloyes and M.~S\'anchez, \emph{{On the definition and examples of cones
  and Finsler spacetimes}},
  \href{https://doi.org/10.1007/s13398-019-00736-y}{\emph{RACSAM} {\bfseries
  114} (2020) 30} [\href{https://arxiv.org/abs/1805.06978}{{\ttfamily
  1805.06978}}].

\bibitem{Camanho:2014apa}
X.O.~Camanho, J.D.~Edelstein, J.~Maldacena and A.~Zhiboedov, \emph{{Causality
  Constraints on Corrections to the Graviton Three-Point Coupling}},
  \href{https://doi.org/10.1007/JHEP02(2016)020}{\emph{JHEP} {\bfseries 02}
  (2016) 020} [\href{https://arxiv.org/abs/1407.5597}{{\ttfamily 1407.5597}}].

\bibitem{Papallo:2015rna}
G.~Papallo and H.S.~Reall, \emph{{Graviton time delay and a speed limit for
  small black holes in Einstein-Gauss-Bonnet theory}},
  \href{https://doi.org/10.1007/JHEP11(2015)109}{\emph{JHEP} {\bfseries 11}
  (2015) 109} [\href{https://arxiv.org/abs/1508.05303}{{\ttfamily
  1508.05303}}].

\end{thebibliography}\endgroup
    
\end{document}